\definecolor{dark-red}{rgb}{0.4,0.15,0.15}
\definecolor{dark-blue}{rgb}{0.15,0.15,0.4}
\definecolor{medium-blue}{rgb}{0,0,0.5}
\newtheorem{thm}{Theorem}
\newtheorem{fact}{Fact}
\newtheorem{cor}{Corollary}[section]
\newtheorem{prop}{Proposition}[section]
\newtheorem*{prop*}{Proposition}
\newtheorem{lem}{Lemma}[section]
\newtheorem*{lem*}{Lemma}
\newtheorem{defn}{Definition}[section]
\newtheorem*{defn*}{Definition}
\crefname{equation}{Eq.}{Eqs.}
\crefname{algorithm}{Alg.}{Algs.}
\crefname{appendix}{App.}{Apps.}
\crefname{figure}{Fig.}{Figs.}
\crefname{section}{Sec.}{Secs.}
\crefname{defn}{Def.}{Defs.}
\crefname{prop}{Prop.}{Propositions}
\crefname{lem}{Lem.}{Lemmas}
\crefname{thm}{Thm.}{Theorems}
\crefname{fact}{Fact.}{Facts.}
\begin{document}


\title{Learning T-conjugated stabilizers: The multiple-squares dihedral StateHSP}


\author{Gideon Lee}
\email[]{gideonlee@uchicago.edu}
\affiliation{Google Quantum AI, Venice, CA 90291, USA}
\affiliation{Pritzker School of Molecular Engineering, The University of Chicago,
Chicago, Illinois 60637, USA}

\author{Jonathan A. Gross}
\email[]{jarthurgross@google.com}
\affiliation{Google Quantum AI, Venice, CA 90291, USA}

\author{Masaya Fukami}
\affiliation{Google Quantum AI, Venice, CA 90291, USA}

\author{Zhang Jiang}
\affiliation{Google Quantum AI, Venice, CA 90291, USA}


\date{\today}

\begin{abstract}
The state hidden subgroup problem (StateHSP) is a recent generalization of the hidden subgroup problem.
We present an algorithm that solves the non-abelian StateHSP over $N$ copies of the dihedral group of order $8$ (the symmetries of a square).
This algorithm is of interest for learning non-Pauli stabilizers, as well as related symmetries relevant for the problem of Hamiltonian spectroscopy.
Our algorithm is polynomial in the number of samples and computational time, and requires only constant depth circuits. This result extends previous work on the abelian StateHSP and, as a special case, provides a solution for the ordinary hidden subgroup problem on this specific non-abelian group.
\end{abstract}


\maketitle


\section{Introduction}

The hidden subgroup problem (HSP) is a paradigmatic problem to attack with quantum algorithms, hosting most known exponential speed-ups \cite{NC_2019}.
Recently such a speed-up has even been realized in a proof-of-principle experiment \cite{Singkanipa_PRX_HSP_expt}.
For abelian groups the problem is well understood~\cite{childs_quantum_2010}, though some non-abelian groups have also been tackled (see e.g. Refs.~\cite{Ivanyos_arXiv_2007_nonabelian_example_I, Imran_arXiv_2023_nonabelian_example_II, alagic_arXiv_2007_general_simon, Bacon_arXiv_2008_dihedral_simon,FIMSS_2003_translating_coset, Kuperberg_arXiv_2004_sieve, kuperberg_arXiv_2011_sieve_II, goncalves_arXiv_2011_some_semidirect_HSPs, Bacon_childs_vanDam_IEEE_2005_more_groups}).
General non-abelian groups appear to be hard, however, and solutions for these groups would solve some related problems that are thought to be hard~\cite{childs_quantum_2010}.
An exciting new direction of HSP-related work has emerged recently with the StateHSP variant of the problem~\cite{Bouland_arXiv_2024_stateHSP, Hinsche_arXiv_2025_Abelian_stateHSP}.

\begin{defn}[State Hidden Subgroup Problem (StateHSP) \cite{Bouland_arXiv_2024_stateHSP}]\label{defn:stateHSP}
        Let $G$ be a finite group, and $\rho: G \rightarrow V$ be a unitary representation of $G$. Suppose we are given a subgroup $H \leq G$ and copies of a state $|\Psi \rangle \in V$ such that
        \begin{enumerate}
            \item For any $h \in H$, we have $\rho(h) | \Psi \rangle = | \Psi \rangle$, 
            \item For any $g \notin H$, we have $| \langle \Psi | \rho(g) | \Psi \rangle | < 1 - \varepsilon$, for some constant $\varepsilon$.
        \end{enumerate} 
        The problem is to identify $H$.
\end{defn}

In the HSP, we are usually presented with a classical function that hides the subgroup $H$. \cref{defn:stateHSP} generalizes this in two ways. First, the StateHSP may take as input quantum states expressing a symmetry via arbitrary representations of the group. Second, the StateHSP only requires an $\varepsilon$ allowance with respect to other symmetries. This is in contrast to the HSP, which can be formulated as a StateHSP, but only on the regular representation, and with $\varepsilon = 1$ \cite{Bouland_arXiv_2024_stateHSP}. The broadening of the scope of the HSP has permitted applications to several physically relevant problems, including the hidden cut problem \cite{Bouland_arXiv_2024_stateHSP}, and the stabilizer learning problem \cite{Hinsche_arXiv_2025_Abelian_stateHSP}. Both of these are instances of \textit{abelian} StateHSPs. 

\begin{fact}[Abelian StateHSP \cite{Hinsche_arXiv_2025_Abelian_stateHSP}]\label{fact:abelian_stateHSP}
    For any finite abelian group $G$ and $\varepsilon > 0$, there exists a polynomial-time algorithm that, with $O(\log |G| / \varepsilon)$ copies of $|\Psi \rangle$, identifies the hidden subgroup $H \leq G$. 
\end{fact}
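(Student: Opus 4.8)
The plan is to reduce the abelian StateHSP to Fourier (character) sampling on the given representation $\rho$ and then to bound, by a coupon-collector argument, how many samples are needed to reconstruct the dual subgroup $H^{\perp}$. First I would use that $G$ is abelian: every irreducible representation is a character $\chi\in\widehat{G}$, so $\rho$ splits into character isotypic components and $|\Psi\rangle=\sum_{\chi}|\Psi_{\chi}\rangle$ with the $|\Psi_{\chi}\rangle$ mutually orthogonal. Writing $S=\{\chi:|\Psi_{\chi}\rangle\neq 0\}$ for the Fourier support, Condition~1 of \cref{defn:stateHSP} forces $\chi(h)=1$ for all $h\in H$ and $\chi\in S$, hence $H\subseteq S^{\perp}$; conversely, for $g\in S^{\perp}$ one has $\langle\Psi|\rho(g)|\Psi\rangle=\sum_{\chi}\chi(g)\,\|\Psi_{\chi}\|^{2}=1$, so Condition~2 forces $g\in H$. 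Thus $S^{\perp}=H$, i.e.\ $\langle S\rangle=H^{\perp}$, and it suffices to produce a generating set of $\langle S\rangle$: the hidden subgroup is then recovered as $H=\langle S\rangle^{\perp}$ in $\mathrm{poly}(\log|G|)$ classical time after writing $G\cong\prod_{i}\mathbb{Z}_{n_{i}}$ and solving the defining linear system (Smith normal form over the $\mathbb{Z}_{n_{i}}$).

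Next I would implement the sampling subroutine: given one copy of $|\Psi\rangle$ and the action of $\rho$, prepare $|G|^{-1/2}\sum_{g}|g\rangle|\Psi\rangle$, apply the controlled operation $\sum_{g}|g\rangle\langle g|\otimes\rho(g)$, apply the quantum Fourier transform over $G$ to the first register, and measure it; by character orthonormality the outcome is $\chi$ with probability $\|\Psi_{\chi}\|^{2}$, so each copy of $|\Psi\rangle$ yields one independent sample from the Fourier-weight distribution supported on $S$. The QFT over $G$ costs $\mathrm{poly}(\log|G|)$ gates via Kitaev's construction on each cyclic factor, so the subroutine is polynomial-time (constant depth is not needed here). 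The crux is the following \emph{escape lemma}, which I would deduce from Condition~2: if a subgroup $K$ contains every character sampled so far but $K\neq\langle S\rangle$, then a fresh sample lands outside $K$ with probability $>\varepsilon/2$. Indeed $K\subsetneq H^{\perp}$ implies $K^{\perp}\supsetneq H$; picking $g\in K^{\perp}\setminus H$ and applying Condition~2 gives $\bigl|\sum_{\chi}\chi(g)\,\|\Psi_{\chi}\|^{2}\bigr|<1-\varepsilon$, and splitting the sum into $\chi\in K$ (where $\chi(g)=1$ since $g\in K^{\perp}$) and $\chi\notin K$ and invoking the triangle inequality with $p:=\Pr[\chi\in K]$ yields $2p-1<1-\varepsilon$, i.e.\ $1-p>\varepsilon/2$.

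Finally I would combine these ingredients: each escaping sample at least doubles the order of the generated subgroup (Lagrange), so at most $\log_{2}|G|$ escapes occur before the samples generate $\langle S\rangle=H^{\perp}$, and a negative-binomial / Chernoff estimate shows $T=O(\log|G|/\varepsilon)$ copies of $|\Psi\rangle$ suffice for constant success probability (and $O(\varepsilon^{-1}(\log|G|+\log\tfrac{1}{\delta}))$ copies for failure probability $\delta$), after which one outputs $H=\langle\text{sampled characters}\rangle^{\perp}$. I expect the main obstacle to be the escape lemma: it is what converts the worst-case gap hypothesis (Condition~2, a bound on $|\langle\Psi|\rho(g)|\Psi\rangle|$ for \emph{every} $g\notin H$) into the averaged statement that $\|\Psi_{\chi}\|^{2}$ cannot concentrate on a proper subgroup of $\langle S\rangle$. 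The Fourier-sampling circuit, the coupon-collector tail bound, and the dual-subgroup computation are otherwise routine, though I would take care with the degenerate branch of the triangle-inequality step (when $p$ is smaller than the off-$K$ weight) and with the ``at least doubles'' claim.
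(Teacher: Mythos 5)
Your proposal is correct and follows essentially the same route as the paper and the reference it cites: the paper states \cref{fact:abelian_stateHSP} without proof, but the machinery it reviews in \cref{ssec:good_nullspace_possible} --- the character POVM of \cref{lem:character_POVM}, the bound on the mass of proper annihilator subgroups from \cref{cor:character_POVM_subgroups} (which is your escape lemma, giving the same $1-\varepsilon/2$ threshold), and the group sampling lemma \cref{lem:group_sampling_lemma} --- matches your isotypic decomposition, escape lemma, and doubling argument step for step. The only cosmetic differences are that you realize the character POVM via a controlled group action plus QFT rather than by measuring directly in the representation's eigenbasis, and you re-derive the group sampling lemma via a negative-binomial tail bound rather than quoting it.
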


In the abelian case, all information about the hidden subgroup is contained in the Fourier sampling distribution (up to classical processing), hence the algorithm for the abelian StateHSP is simply proceeds via Fourier sampling until a solution is found \cite{Hinsche_arXiv_2025_Abelian_stateHSP}. However, the same strategy does not work for non-abelian groups in general \cite{Bouland_arXiv_2024_stateHSP, alagic_arXiv_2005_strong_fourier_fails}; we will comment more on the specific difficulties in \cref{ssec:nonabelian_comments}. This leads to a natural question: What non-abelian StateHSPs can also be solved, and what are they good for?

\subsection*{Outline of work}

As a first step in this direction, we study the StateHSP for a particular non-abelian group: the direct product of $N$ copies of the dihedral group of order 8, $\mathcal{D}_4$ (the symmetries of a square). We refer to this as the Multiple-Squares StateHSP.
The HSP for dihedral groups provides a non-abelian problem to work on that is not too far from an abelian problem, and connected to lattice-based problems that are relevant to cryptography~\cite{regev_quantum_2003}, making the state version of the dihedral HSP and its variants a natural direction to explore.
In this work we study hidden involutions, which when phrased as a StateHSP show up naturally in the physically relevant problem of Hamiltonian spectroscopy.
In \cref{sec:motivation}, we motivate and introduce the Multiple-Squares StateHSP, and discuss connections to prior work on the ordinary HSP.
In \cref{sec:main_algo}, we present an algorithm that solves the Multiple-Squares HSP. The algorithm also heavily features Fourier sampling -- however, unlike the abelian case, performing Fourier sampling directly on copies of $| \Psi \rangle$ does not work. This is well-known already from prior work on the HSP, with no-go results arising from the fact that most irreps of $\mathcal{D}_4^N$ have exponentially large dimension \cite{alagic_arXiv_2005_strong_fourier_fails}.
Indeed, most of the work of the algorithm involves constructing abelian groups out of the problem, and the appropriate classical processing of this data.
In \cref{sec:statements}, we state the main theorem and outline the propositions enabling its proof.
In \cref{sec:proof_correctness}, we prove the correctness of the algorithm.
Finally, in \cref{sec:arbitrary_reps}, we generalize the algorithm to arbitrary representations of $\mathcal{D}_4^N$. 

\section{Setup}\label{sec:motivation}

\subsection{Problem and motivation}

We will start by formally stating our problem, before motivating several ways one might have arrived at this problem. While we focus a particular representation here, we will generalize to arbitrary representations in Sec.~\ref{sec:arbitrary_reps}.  First, recall that the Dihedral group of a square is generated by two elements, $r$ (a reflection) and $s$ (a rotation by 90-degrees),
\begin{align}
    \mathcal{D}_4=\langle r, s:r^2=s^4=e,\,rsr=s^{-1}\rangle
    \,.
\end{align}
Each group element can be uniquely represented by the product $r^ts^k$, $0\leq t\leq1$, $0\leq k\leq4$. The group of multiple squares is the direct product $\mathcal{D}_4^N := \bigotimes_{n=1}^N \mathcal{D}_4$, which has elements of the form 
\begin{equation}
g=(r^{t_1}s^{k_1},\ldots,r^{t_N}s^{k_N}) \in \mathcal{D}_4^N,
\end{equation}
and can be said to be specified by vectors $t \in \mathbb{Z}_2^N, k \in \mathbb{Z}_4^N$, with entries $t_n, k_n$. This notation is a little clunky; we pare it down later. 

For most of the text, we will be interested in the following representation, which we will refer to as the \textit{doubled representation}. This is the (unitary) representation $U_2$ of $\mathcal{D}_4$ on two qubits $A, B$, 
\begin{equation}
    U_2(r^t s^k) = X^t_A e^{i \pi k Z_A / 4} \otimes X^t_B e^{i \pi k Z_B / 4},
\end{equation}
where the subscript of the Paulis $X, Z$ indicate which qubit they act on. This is easily promoted to a representation of $\mathcal{D}_4^N$ on $2N$ qubits, labelled $(n, A/B)$, by taking tensor products, 
\begin{equation}
\begin{aligned}
&U_2^N(r^{t_1}s^{k_1},\ldots,r^{t_N}s^{k_N}) 
= \bigotimes_{n=1}^N U_2(r^{t_n} s^{k_n}) \\
&= \bigotimes_{n=1}^N  X^{t_n}_{n, A} e^{i \pi k_n Z_{n, A} / 4} \otimes X^{t_n}_{n, B} e^{i \pi k_n Z_{n, B} / 4}.
\end{aligned}
\end{equation}
We will also refer to this as the doubled representation; with the number of qubits $1$ or $N$ specified by context. We refer to each $n = 1, ..., N$ as a `site', which comprises the two qubits $(n, A/B)$. The fact that the doubled representations are in fact representations may be verified via direct computation. With this we are now able to formally state our first StateHSP.

\begin{defn}[Multiple squares StateHSP I]\label{defn:ms-shsp}
    Let $h=(r^{t^h_1}s^{k^h_1},\ldots,r^{t^h_N}s^{k^h_N})$ be an element of $\mathcal{D}_4^N$ such that $h^2 = e$ ($h$ is an involution).
    Assume that you have access to copies of an unknown quantum state vector $|\Psi\rangle\in\mathbb{C}^{2N}$ that is promised to have the following properties:
    \begin{enumerate}
        \item $U_2^N(h)|\Psi\rangle=|\Psi\rangle$.
        \item $\forall g \notin \{e, h\}: |\langle\Psi|U_2^{N}(g)|\Psi\rangle|\leq1-\varepsilon$.
    \end{enumerate}
    The problem is to identify $h$.
\end{defn}
Note that we have restricted the subgroup to be a two-element subgroup generated by an involution and we are restricting ourself to the particular representation $U_2^N$ -- we will state a more general version of this problem in \cref{sec:arbitrary_reps}.

At first glance, the Multiple-Squares group, along with the doubled representation seems to have sprung out of nowhere.
While the HSP is tied to a classical-subgroup-hiding function on the regular representation, the more general statement of the StateHSP allows one to directly address the possibility of quantum inputs, which leads to the application of HSP techniques to various physical tasks. The seminal work of Ref.~\cite{Bouland_arXiv_2024_stateHSP} identifies the hidden cut problem as one such task, and Ref.~\cite{Hinsche_arXiv_2025_Abelian_stateHSP} shows that all Pauli stabilizer learning is another such task. Neither of these tasks directly admits a description in terms of a classical subgroup hiding function on the regular representation. In the same spirit, we will now motivate and apply our problem to various physically relevant scenarios. The already-motivated reader may skip directly to \cref{sec:main_algo}, where we describe the solution to this problem.

\bigskip

\textit{(a) Stabilizer learning -- } The simplest situation in which the Multiple-Squares group arises is a variant of stabilizer learning \cite{Montanaro_arXiv_2017_stabilizer_bell, Hinsche_arXiv_2025_Abelian_stateHSP, GIKL24_improved_stab_est}. In the StateHSP version of stabilizer learning, we are given a state $| \Psi \rangle$ on $N$ qubits that we are told is stabilized by some $S \in \mathcal{P}_N$ from the Pauli group, such that $ S | \Psi \rangle = \pm | \Psi \rangle$. For any other $S' \in \mathcal{P}_N$, we are promised that $| \langle \Psi | S' | \Psi \rangle | \leq 1 - \varepsilon$. For simplicity, we consider the case of a single stabilizer $S$. The sign of the stabilizing Pauli can be fixed by doubling the state, which then formally takes the form of an abelian StateHSP.

How can one go beyond the Pauli group, and introduce non-Clifford-ness into the problem?
One simple way to do so is to take the Pauli group and conjugate it by $T$ gates in arbitrary locations, which increases the number of candidate stabilizers by a factor of $2^N$, the number of possible distributions of $T$ gates.

To be explicit, the candidate stabilizer in the Pauli stabilizer learning problem is a string of operators drawn from $\{I, X, Y, Z\}$. Now we additionally have $T X T^{\dag} = (X +  Y)/\sqrt{2}$ and $T Y T^{\dag} = (X - Y)/\sqrt{2}$.
We get nothing new from $TZT^{\dag} = Z$. Hence, in this modified version of the problem the candidate stabilizer is instead a string of operators drawn from the set $\{I, X, Y, Z, (X \pm Y)/\sqrt{2}\}$. As before, the sign of the stabilizing operator is fixed by doubling the state -- in that case the candidate doubled stabilizers are in fact exactly the doubled representations of involutions of the Multiple-Squares group, and one obtains \cref{defn:ms-shsp}.
We note that these non-Pauli stabilizers are relevant in quantum error correction, showing up for example in quantum double models hosting non-abelian topological order \cite{Kitaev_Annals_Physics_2003}, as well as more general non-Pauli stabilizer codes that can host transversal non-Clifford gates \cite{Webster_Quantum_2022_XP_formalism}.

This problem also shares some similarities with the problem of learning T-doped stabilizer states \cite{Leone_quantum_2024_T_doped_I, Chia_quantum_2024_T_doped_II}.
This is the problem of learning a state prepared by a Clifford circuit doped with some number of non-Clifford gates, which is another way to introduce non-Clifford-ness into the problem.
Because of the $\varepsilon$ requirement, most states fulfilling the StateHSP promise are quite far from these states, and in the StateHSP we are not required to learn the state, just the symmetry.
However, solving an abelian StateHSP to identify the maximal stabilizer symmetry of a T-doped stabilizer state is the first step of the state-learning algorithm, highlighting that generally these symmetry-learning problems are important components in general quantum learning algorithms~\cite{huang_quantum_2022}.

\bigskip

\textit{(b) Hamiltonian spectroscopy -- } Given a Hamiltonian, one often wants to obtain information about its spectrum~\cite{dalmonte_quantum_2018,dong_ground_2022}.
This task is greatly simplified given knowledge of the symmetries of the Hamiltonian, as this reduces the degrees of freedom of the problem and suggests the preparation of ideal probe states~\cite{tubman_postponing_2018,erakovic_high_2025}.
For instance, knowing that a Hamiltonian exhibits translational invariance suggests that one should probe it with momentum eigenstates.

Recall that a symmetry of a Hamiltonian $H$ is given by some operator that commutes with $H$, taken from a projective representation of some group on the Hilbert space on which $H$ acts. The principles in the following discussion apply to any group. However, for concreteness, let us assume $H$ acts on $N$ qubits, and the symmetry operator is of the form
\begin{equation}
    U(t^h, k^h) := \bigotimes_{n=1}^N X_n^{t_n^h} e^{i \pi k_n^h Z_n / 4},
\end{equation}
where we use the superscript $h$ to denote a particular fixed vector, and with the Hamiltonian satisfying 
\begin{equation}
    [H, U(t^h, k^h)] = 0,
\end{equation}
for a unique pair $(t^h, k^h)$. Clearly, the two-fold tensor product of $U(t^h, k^h)$ yields an element of the doubled representation of the Multiple-Squares group. Setting $h = r^{t^h_1} s^{k^h_1}, ..., r^{t^h_N} s^{k^h_N}$, we find the above also implies, 
\begin{equation}
    [H \otimes I_B, U_2^N(h) ] = 0,
\end{equation}
where we say that $H$ acts on the $A$ qubits, and $I_B$ on the $B$ qubits. 

In the above setup, we clearly have a hidden group element. How do we turn it into a StateHSP? The Bell states will play a key role in our algorithm later, but for now, we observe that the Bell state
\begin{equation}
    | \Psi^+ \rangle = \frac{1}{\sqrt{2}}(| 01 \rangle + | 10 \rangle),
\end{equation}
is simultaneously a $+1$ eigenstate of $U_2(r s^k), k = 0, 1, 2, 3$, $U_2(s^2)$, and trivially the identity. Now, an involution of $\mathcal{D}_4^N$ has to be a direct product of elements of those forms. Hence, the $N$-fold tensor product of Bell states $| \Psi^+ \rangle^{\otimes N}$, where the $n$-th Bell state is supported on the $(n, A)$ and $(n, B)$ qubits, is simultaneously a $+1$ eigenstate of all involutions in the Multiple-Squares group.

Finally, we may evolve $| \Psi^+ \rangle^{\otimes N}$ under the Hamiltonian $H$. Because of the symmetry, we have
\begin{align}
    &U_2^{N}(h)\exp(-it\,H\otimes I_B)|\Psi^+\rangle^{\otimes N} \\
    &=
    \exp(-it\,H\otimes I_B)|\Psi^+\rangle^{\otimes N}
    \,,
\end{align}
which is the first requirement for a StateHSP, with $|\Psi\rangle=\exp(-it\,H\otimes I_B)|\Psi^+\rangle^{\otimes2}$.
For rich enough Hamiltonian evolution, which we leave vague in this motivational sketch, the state $\exp(-it\,H\otimes I^{\otimes N})|\Psi^+\rangle^{\otimes N}$ will not be invariant under the action of any other $g\in\mathcal{D}_4^N$, satisfying the second requirement for a StateHSP.

The general lesson is that Hamiltonian symmetry learning can be cast as a StateHSP, motivating yet another reason to study the StateHSP. This is subject to the availability of `good' probe states; these being states such as $| \Psi^+ \rangle^{\otimes N}$, which start out as simultaneously $+1$ eigenstates of all the candidate stabilizers. Other simple examples include the state $| + \rangle^{\otimes N}$ if we are promised that the hidden symmetry is an $X$ string, or tensor products of Bell states if the hidden symmetry is Pauli type (up to a potential choice of sign).

\bigskip

\textit{(c) The simplest non-abelian StateHSP -- } Finally, \cref{defn:ms-shsp} may be motivated directly from the study of StateHSPs. In particular, since all abelian StateHSPs are solved, one is led to ask what the simplest non-abelian StateHSP is, in order to capture the differences between abelian and non-abelian StateHSPs, and ask which lessons from abelian StateHSPs carry over to the non-abelian case. In several ways, the doubled representation of the Multiple-Squares group presents the minimal elements that make up an interesting non-abelian StateHSP. First, the group $\mathcal{D}_4$ is a minimal example of a non-abelian group in the sense that it only has a single non-1D irrep. Next, taking direct products allows us to obtain a group with some sense of an extensive scaling, while retaining the simple structure of $\mathcal{D}_4$. Finally, considering the doubled representation allows us to make sense of a representation other than the regular representation. Importantly, $\mathcal{D}_4^N$ has the feature that most irreps are exponentially large, which prevents one from directly using Fourier sampling to solve the problem \cite{alagic_arXiv_2005_strong_fourier_fails}.

\bigskip

Next, as background, we will provide some comments on the the differences between the HSP and StateHSP, as well as connections to prior art on the Multiple-Squares HSP.

\subsection{HSP vs. StateHSP: The non-abelian case}\label{ssec:nonabelian_comments}

Before embarking on our problems, we provide some remarks on the fact that information-theoretically, the StateHSP is solvable in a polynomial (in $|G|$) number of samples, by recourse to an information-theoretic solution for the HSP. If we assume access to a controlled-group action as well as an ancillary register encoding the regular representation of the relevant group, and  one can reproduce the traditional coset states of the HSP to exponential (in say $t$) accuracy using $O(t)$ copies of the state supplied by the StateHSP, via an orthogonality amplification technique \cite{FIMSS_2003_translating_coset, Bouland_arXiv_2024_stateHSP}. Then, one can apply the techniques of Ref.~\cite{Ettinger_2004_HSP_polynomial} to obtain the hidden subgroup. The algorithm of Ref.~\cite{Ettinger_2004_HSP_polynomial} only requires a polynomial number of samples (but exponential computational time), so with the appropriate resources, this yields an efficient information-theoretic solution to any StateHSP. 

One can go a step further for Abelian StateHSPs. In this case, the resulting probability distribution from weak Fourier sampling suffices to solve the HSP in polynomial samples and time. The coset states obtained from copies of the StateHSP state can be used to sample from this distribution. Hence,  up to a $|G|$ independent overhead, one may obtain a computationally efficient solution for any abelian StateHSP. We may think about this as an indirect solution, given by simulating the HSP with the StateHSP. Conversely, this work is concerned with a direct solution that does not make calls to the HSP.

Since all StateHSPs can be solved by recourse to the HSP, why bother with direct solutions? First, on a fundamental level, one is interested in asking what the minimal resources required to solve a StateHSP -- eg. when can the StateHSP be solved without access to the regular representation, or the expensive multi-controlled operations required to perform the controlled-group action? On a practical level, removing some of this complexity moves the implementation of such algorithms closer to the present. For instance, for the particular StateHSP studied in this work, we show that one only ever needs two-qubit operations/measurements and constant depth circuits, which is far simpler than if we were to try to solve it via the indirect method.

As a final observation motivating the study of non-abelian StateHSPs, we note that many non-abelian HSPs are not expected to admit efficient solutions. However, this does not rule out the possibility that there are particular non-trivial representations of such groups, with relevant physical applications, for which the StateHSP yields efficient solutions. An exciting line of research is to study the relationship between representations of groups and efficient solution of the StateHSP for non-abelian HSPs that do not admit efficient solutions. We leave the investigation of these groups to future work. In this work, we will show regardless of representation, $\mathcal{D}_4^N$ admits a computationally efficient solution. However, we do not expect such representation-independent statements to be possible for most non-abelian groups.

\subsection{Connection to previous solutions for the regular HSP}

Finally, we remark on how our algorithm compares to previous solutions which may be applied to the HSP on the Multiple-Squares group \cite{FIMSS_2003_translating_coset, alagic_arXiv_2007_general_simon, Bacon_arXiv_2008_dihedral_simon}. The reader may find this discussion easier to digest after skimming the details of the main algorithm in \cref{sec:main_algo}. We begin by emphasizing that despite strong analogies, \textit{previous solutions are not a priori directly applicable to our problem}. This is due to the key differences between the HSP and the StateHSP, along with the fact that weak Fourier sampling alone does not suffice to efficiently solve a non-abelian StateHSP. Conversely, one may try to identify techniques from HSP solutions that are independent of features such as the uniformity of sampling coset states or the nature of the regular representation.

As a first example, Ref.~\cite{FIMSS_2003_translating_coset} solves the HSP for a large class of non-abelian groups, which includes $\mathcal{D}_4^N$, in quasi-polynomial time.
A key step in their algorithm involves building states which are uniform superpositions of particular group elements, such that the identification of the stabilizers of these states helps us make progress towards identifying the hidden subgroup.
In the StateHSP, this is defied by the fact that most representations do not admit such superpositions, since we cannot identify group elements with state vectors.
On the other hand, the algorithm of Ref.~\cite{FIMSS_2003_translating_coset} heavily utilizes the quotient groups that turn up in the derived series of their groups.
Along similar lines, we use the fact that $\mathcal{D}_4^N$ has a very short derived series (of length $2$), which gives us an abelian quotient group $\mathcal{D}_4^N/Z[\mathcal{D}_4^N]$, and in fact Step 2. of our algorithm involves constructing and then Fourier sampling over this group.

Finally, studying the Multiple-Squares HSP, the algorithm of Ref.~\cite{Bacon_arXiv_2008_dihedral_simon} provides a solution that works with $O(N^3)$ copies of the state. When applied to the HSP (see \cref{sec:arbitrary_reps}), our algorithm works with $O(N^2)$ copies of the state, providing a mild improvement in sample complexity.
This difference comes from the fact that Ref.~\cite{Bacon_arXiv_2008_dihedral_simon} (as well as Ref.~\cite{alagic_arXiv_2007_general_simon}), require the algorithm to first identify the locations of all non-trivial reflections.
In contrast, we show that this step can be postponed to the end, at which point it becomes much simpler. We also remark that while the sample savings seems small, perhaps a more important advantage of our algorithm is that we remove almost all the need for coherent control (of more than two qubits at once), leading to only constant depth circuits, reducing greatly the resource requirements to run this on an actual quantum device.

\section{Representation preliminaries}\label{sec:dihedral_prelims}

In this section, we collect some facts about the group $\mathcal{D}_4$ that will be useful for the technical exposition later.  

\bigskip

\textit{Reps and irreps of the $\mathcal{D}_4$ -- } We will often reference the two-dimensional representations
\begin{align}
\label{eq:dihedral-2d-reps}
    D^j(r^ts^k)
    &=
    X^t(iZ)^{kj}
    &
    j
    &\in
    \{0,1,2,3\}
\end{align}
and the one-dimensional representations
\begin{align}
    D^T(r^ts^k)
    &=
    1
    &
    D^A(r^ts^k)
    &=
    (-1)^t
    \\
    D^R(r^ts^k)
    &=
    (-1)^k
    &
    D^{RA}(r^ts^k)
    &=
    (-1)^{k+t}
    \,,
\end{align}
The two dimensional representations $D^1$ and $D^3$ are irreducible (and unitarily equivalent), while the others decompose as $D^0=D^T\oplus D^A$ and $D^2=D^R\oplus D^{RA}$.
The tensor product of the two dimensional irrep with itself also decomposes into one-dimensional irreps: $D^1\otimes D^1=D^T\oplus D^A\oplus D^R\oplus D^{RA}$.

We will primarily use the notation $k=2v+w$ for the power of the primitive rotation $s$.
This notation has the advantage that both $v$ and $w$ are bits, and that $w$ labels the conjugacy class of the reflection $rs^{2v+w}$.
We will also often notate the group elements directly by these triples, $(t,v,w)$, so, for example, $D^j(r^ts^{2v+w})=D^j(t,v,w)$.

\bigskip

\textit{Irreps of the quotient group -- } One important fact is that the collection of all one-dimensional irreps is not faithful.
The joint kernel is the center $\mathbb{Z}_2$ generated by $s^2=(0,1,0)$, and the quotient is $\mathcal{D}_4/\langle s^2\rangle\cong\mathbb{Z}_2\times\mathbb{Z}_2$.
It's convenient to take elements of the form $(t,0,w)$ as the coset representatives for the quotient, so we will often notate elements of this quotient group just by the pair $(t,w)$.

The irreps of $\mathbb{Z}_2\times\mathbb{Z}_2$ are products of the irreps of $\mathbb{Z}_2$, which are
\begin{equation}
\label{eq:z2-irreps}
    \begin{aligned}
        C^0(x) &= 1,
        &
        C^1(x) &= (-1)^x.
    \end{aligned}
\end{equation}
We label the irreps of $\mathbb{Z}_2\times\mathbb{Z}_2$ as $C^{q,p}=C^qC^p$.
These are in one-to-one correspondence with the one-dimensional irreps of $\mathcal{D}_4$, 
\begin{equation}
\label{eq:z2z2-irreps}
    \begin{aligned}
        D^T(t, v, w) &= C^0(t) C^0(w) = C^{0,0}(t,w), \\
        D^A(t, v, w) &= C^1(t) C^0(w) = C^{1,0}(t,w), \\
        D^R(t, v, w) &= C^0(t) C^1(w) = C^{0,1}(t,w), \\
        D^{RA}(t, v, w) &= C^1(t) C^1(w) = C^{1,1}(t,w). \\
    \end{aligned}
\end{equation}
We can notate this succinctly as $C^{q,p}(t,w)=(-1)^{q\cdot t+p\cdot w}$, which also calls to mind the symplectic product between $(p,q)$ and $(t,w)$ from the stabilizer formalism.

Because $D^0$, $D^2$ and $D^j\otimes D^k$ for $j,k\in\{1,3\}$ decompose into one-dimensional irreps, and therefore cannot be faithful, they also define representations of the quotient $\mathbb{Z}_2\times\mathbb{Z}_2$:
\begin{align}
    \label{eq:conjugacy-representations}
    D_\mathcal{C}^{\{j,k\}}(t,w)
    &=
    (D^j\otimes D^k)(t,v,w)
    &
    j,k
    &\in
    \{1,3\}
    \\
    D_\mathcal{C}^{\{\ell\}}(t,w)
    &=
    D^\ell(t,v,w)
    &
    \ell
    &\in
    \{0,2\}
    \,,
\end{align}
which are well defined since the RHS does not depend on what value of $v$ is supplied. Hence the subscript $\mathcal{C}$ alludes to the fact that we will later use this representation to pick out the conjugacy class of the reflection associated with the hidden involution. 
We can explicitly diagonalize these representations using the eigenstates $|\pm\rangle$ of the $X$ operator and the Bell states
\begin{align}
    \label{eq:bell-state-defn}
    |\Phi^\pm\rangle
    &=
    (|00\rangle\pm|11\rangle)/\sqrt{2}
    \\
    |\Psi^\pm\rangle
    &=
    (|01\rangle\pm|10\rangle)/\sqrt{2}
    \,.
\end{align}
The result, making use of the $\mathbb{Z}_2\times\mathbb{Z}_2$ characters $C^{q,p}$, is
\begin{align}
    \label{eq:irrep-bases}
    D_{\mathcal{C}}^{\{1,1\}}
    &=
    D_{\mathcal{C}}^{\{3,3\}} \\
    &= \nonumber
    |\Psi^+\rangle\langle\Psi^+|C^{0,0}
    +|\Phi^+\rangle\langle\Phi^+|C^{0,1} \\
    &\qquad
    +|\Psi^-\rangle\langle\Psi^-|C^{1,0}
    +|\Phi^-\rangle\langle\Phi^-|C^{1,1}
    \label{eq:d1-d1-irrep-decomp}
    \\
    D_{\mathcal{C}}^{\{1,3\}}
    &=
    D_{\mathcal{C}}^{\{3,1\}} \\
    &= \nonumber
    |\Psi^+\rangle\langle\Psi^+|C^{0,1}
    +|\Phi^+\rangle\langle\Phi^+|C^{0,0} \\
    &\qquad +|\Psi^-\rangle\langle\Psi^-|C^{1,1}
    +|\Phi^-\rangle\langle\Phi^-|C^{1,0}
    \\
    D_{\mathcal{C}}^{\{0\}}
    &=
    |+\rangle\langle+|C^{0,0}+|-\rangle\langle-|C^{1,0}
    \label{eq:d0-irrep-decomp}
    \\
    D_{\mathcal{C}}^{\{2\}}
    &=
    |+\rangle\langle+|C^{0,1}+|-\rangle\langle-|C^{1,1}
    \,.
\end{align}

\bigskip

\textit{Parity measurements -- } We will make significant use of parity measurements, which we define by the partial isometries
\begin{equation}
\begin{aligned}
\label{eq:parity-projections}
    \Pi_0
    &=
    |0\rangle\langle01|+|1\rangle\langle10| 
    \\
    \Pi_1
    &=
    |0\rangle\langle00|+|1\rangle\langle11|
    \,
\end{aligned}
\end{equation}
corresponding exactly to projection onto eigenspaces of the parity $Z^{\otimes 2}$ operator, followed by a relabelling of basis elements. Note that in this notation, the even parity outcome is associated (in the sense of a subscript of $\Pi$) with $1$ and the odd parity outcome is associated with $0$.

The notation has been chosen such that for the doubled representation on a single pair of qubits, we have,
\begin{align}
    \Pi_jU_2(g)\Pi_j^\dagger
    &=
    D^j(g),
\end{align}
which can be verified by direct computation.

We now consider the doubled representation across $N$ pairs of qubits. Let $\pi \in \mathbb{Z}_2^N$; we can define a parity projector across $N$ pairs via,
\begin{align}
    \Pi_\pi
    &=
    \bigotimes_j\Pi_{\pi_j}
    \,.
\end{align}
and we say that $\Pi_{\pi}$ projects us onto a parity subspace $\pi$.

Projecting the doubled representation onto a parity subspace naturally yields the representation,
\begin{align}
\label{eq:parity-rep}
    D^\pi
    &=
    \bigotimes_{n=1}^ND^{\pi_n}.
\end{align}
We will often consider many copies of the doubled representation, and their projections onto the parity subspace. In this case, we denote the representation on the $(M+1)N$ qubits hosting the set of $M+1$ parity outcomes $\{\pi^m\}_{m=0}^M$ as
\begin{align}
\label{eq:parity-ensemble-rep}
    D^{\{\pi^m\}_{m=0}^M}
    &=
    \bigotimes_{m=0}^MD^{\pi^m}
    \,.
\end{align}
Where clear, we will omit the $m=0, M$ super/subscripts, i.e. we write $D^{\{\pi^m\}}$. We will also notate the normalized, parity-projected state as
\begin{align}
    |\Psi(\pi)\rangle
    &=
    \frac{\Pi_\pi|\Psi\rangle}{\Vert\Pi_\pi|\Psi\rangle\Vert}
\end{align}
and the collection of such states for a set of parity outcomes $\{\pi^m\}_{m=0}^M$ as
\begin{align}
    |\Psi(\{\pi^m\})\rangle
    &=
    \bigotimes_{m=0}^M|\Psi(\pi^m)\rangle
    \,.
\end{align}

\textit{Phaseless Pauli subgroup -- } Finally, we note that the phaseless Pauli group, which is isomorphic to $\mathbb{Z}_2 \times \mathbb{Z}_2$, is isomorphic to an abelian subgroup of $\mathcal{D}_4$. These are the elements of the form $(t, v, 0) \in \mathcal{D}_4$. This obviously corresponds to doubled Paulis in the doubled representation, but an analogous statement holds also for any representation of $\mathcal{D}_4^N$. We emphasize that this is not the same as quotient $ \mathcal{D}_4 / \langle s^2 \rangle \cong \mathbb{Z}_2 \times \mathbb{Z}_2$.

For this subgroup, we may perform a decomposition of the reps in \cref{eq:dihedral-2d-reps} into 1D irreps of $\mathbb{Z}_2 \times \mathbb{Z}_2$ as follows,
\begin{equation}\label{eq:pauli_irreps}
    \begin{aligned}
        D^0(t, v, 0) &= | + \rangle \langle + | C^0(t) C^0(v) + | - \rangle \langle - | C^1(t) C^0(v), \\
        D^1(t, v, 0) &= | + \rangle \langle + | C^0(t) C^1(v) + | - \rangle \langle - | C^1(t) C^1(v), \\
        D^2(t, v, 0) &= | + \rangle \langle + | C^1(t) C^0(v) + | - \rangle \langle - | C^0(t) C^0(v), \\
        D^3(t, v, 0) &= | + \rangle \langle + | C^1(t) C^1(v) + | - \rangle \langle - | C^0(t) C^1(v),
    \end{aligned}
\end{equation}
and we see that Fourier sampling in this group can be carried out by first sampling the reps of \cref{eq:dihedral-2d-reps} and then performing an $X$ measurement.

\begin{widetext}

\section{Algorithm}
\label{sec:main_algo}

We now present our algorithm for solving the Multiple-Squares StateHSP. In \cref{ssec:algo_spec}, we specify the steps of the algorithm.
To streamline the presentation we leave a number of subroutines undefined in the statement of the algorithm in \cref{ssec:algo_spec}; these are typeset in bold text and their definitions follow in \cref{ssec:subroutine}.

\subsection{The Multiple-Squares StateHSP algorithm}\label{ssec:algo_spec}

\begin{enumerate}
    \item Perform \textbf{Learning Pauli Stabilizers}, in case $U_2^N(h)$ is a doubled Pauli.
    If it is a Pauli, one is finished.
    Otherwise, continue \footnote{We note that this first step is only required so that we may assume that $t^h \neq 0$ everywhere in subsequent steps. In principle, this may be absorbed into the parity sampling subroutine, as this in fact performs Fourier sampling over the center of $\mathcal{D}_4^N$, and hence by \cref{fact:abelian_stateHSP} allows us to identify a hidden element of the form $(t^h, 0, 0)$ if it exists. In practice, this entails performing parity sampling $O(N/\varepsilon)$ times and finding that a full basis is not obtained. In the main text, we include an explicit Learning Pauli Stabilizers step for conceptual clarity.}.
    \item Determine $w^\text{max}$, which gives the conjugacy class of the reflections on each factor in the hidden element~$h$.
    This is done in two steps:
    \begin{enumerate}
        \item Perform Fourier sampling on the $\mathbb{Z}_2^N\times\mathbb{Z}_2^N$ quotient of $\mathcal{D}_4^N$ by its center to generate the annihilator subgroup $K^\perp$.
        Samples $(q,p)$ are obtained by
        \begin{enumerate}
            \item Repeating the \textbf{parity sampling subroutine} until one accumulates a \textbf{Bell-resolvable set} that includes the first sample (requiring independent sampling of the first parity outcome ensures sufficiently rich Bell-resolvable sets).
            \item Performing the \textbf{Bell-resolution subroutine}.
        \end{enumerate}
        \item Obtain $w^\text{max}$ from $K^\perp$ using the \textbf{maximal rotation subroutine}.
    \end{enumerate}
    \item Use $w^\text{max}$ to determine where to apply $T$ gates to $|\Psi\rangle$, namely apply $T_{n, A} \otimes T_{n, B}$ in locations $n$ where $w^\text{max}_n = 1$.
    This converts $|\Psi\rangle$ to something stabilized by a Pauli, namely $U_2^N(t^h, v^h, 0)$.
    \item Learn the stabilizing Pauli, given by $U_2^N(t^h, v^h, 0)$, with \textbf{Learning Pauli Stabilizers}. This lets us learn $t^h, v^h$.
    \item Compute $w^h = t^h \odot w^{\rm max}$ to obtain $w^h$. 
\end{enumerate}

\end{widetext}

\subsection{Subroutines}\label{ssec:subroutine}

We now define the various subroutines that are used in the algorithm. We focus on their inputs and outputs, with the goal in mind that a reader having read this section and the last may be able to perform this algorithm in the lab. We remark that where unavoidable, we have made use of notation established in \cref{sec:dihedral_prelims}.

The first subroutine, Learning Pauli Stabilizers, is provided in \cite{Hinsche_arXiv_2025_Abelian_stateHSP, Montanaro_arXiv_2017_stabilizer_bell}. We simply quote the problem in StateHSP form, and present the existence of such a subroutine as corollary to \cref{fact:abelian_stateHSP}:
\begin{cor}[Learning Pauli Stabilizers]
\label{cor:LPS}
    Let $| \Psi \rangle$ be a state that is stabilized by a doubled Pauli $P \otimes P$, i.e. $P {\otimes} P \, | \Psi \rangle = \pm | \Psi \rangle$, and for any $P' \neq P$, satisfies $| \langle \Psi | P' \otimes P' | \Psi \rangle | \leq 1 - \varepsilon$. Then, there exists an algorithm to determine $P$ with high probability using $O(N/\varepsilon)$ samples, where the only quantum resource used is Bell sampling.
\end{cor}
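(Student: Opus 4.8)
The plan is to read this off as a special case of the abelian StateHSP (\cref{fact:abelian_stateHSP}) and let that result carry the argument. Modulo phases, the Pauli group on $N$ qubits is the abelian group $G\cong\mathbb{Z}_2^{2N}$ of order $4^N$, and on the $2N$ qubits carrying $|\Psi\rangle$ the assignment $P\mapsto P\otimes P$ furnishes a genuine linear representation of $G$ once a phase convention for the lift is fixed: this is exactly the restriction of the doubled representation $U_2^N$ to the phaseless Pauli subgroup $(t,v,0)$ discussed in \cref{sec:dihedral_prelims} (for which, e.g., the $Z$-type generator is carried by $iZ\otimes iZ=-Z\otimes Z$ and the $Y$-type generator by the honest $Y\otimes Y$), and the corollary's hypotheses are unaffected by this choice since they constrain only the magnitudes $|\langle\Psi|P'\otimes P'|\Psi\rangle|$. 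Setting $H:=\langle P\rangle$ (the degenerate case $P\propto I$, where $H$ is trivial, is handled by returning $I$), hypotheses~1 and~2 of the corollary are precisely conditions~1 and~2 of \cref{defn:stateHSP} for the pair $(G,H)$: an element $g\notin H$ is a phaseless Pauli $P'\notin\{I,P\}$, and the $\varepsilon$-separation is assumed for every $P'\neq P$. Hence \cref{fact:abelian_stateHSP} identifies $H$, and therefore $P$, with high probability using $O(\log|G|/\varepsilon)=O(N/\varepsilon)$ copies of $|\Psi\rangle$ and $\mathrm{poly}(N)$ classical time.

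It then remains to check that the only quantum primitive invoked is Bell sampling across the doubling. Because $G$ is abelian the representation $P\mapsto P\otimes P$ splits into one-dimensional characters, and the change of basis effecting this splitting acts one site at a time, taking the computational basis of the pair of qubits $(n,A),(n,B)$ to the Bell basis. This is exactly what the preliminaries record: the single-site relation $\Pi_j U_2(g)\Pi_j^\dagger=D^j(g)$, specialized to $g=(t,v,0)$ together with \cref{eq:pauli_irreps}, shows that a parity measurement followed by an $X$ measurement on the retained qubit resolves the four characters of each site's $\mathbb{Z}_2^2$, i.e.\ a full Bell measurement produces two outcome bits encoding a character of that factor. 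Collecting these across all $N$ sites yields a sample $\chi\in\widehat G\cong\mathbb{Z}_2^{2N}$. Since $U_2^N(h)$ stabilizes $|\Psi\rangle$, the $\chi$-isotypic component of $|\Psi\rangle$ vanishes unless $\chi(h)=1$, so every sample lies deterministically in the annihilator $H^\perp$; the $\varepsilon$-promise is what guarantees the induced distribution on $H^\perp$ is spread out enough that $O(N/\varepsilon)$ samples generate $H^\perp$ with high probability. One then recovers $H$, hence $P$, by Gaussian elimination over $\mathbb{F}_2$. This reproduces, in the present language, the Bell-sampling algorithm of \cite{Montanaro_arXiv_2017_stabilizer_bell,Hinsche_arXiv_2025_Abelian_stateHSP} from which the corollary is quoted.

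The part requiring the most care is the Fourier-sampling--to--Bell-sampling identification together with the projective-phase bookkeeping: one must confirm that $P\otimes P$ can legitimately be treated as an honest $\mathbb{Z}_2^{2N}$-representation for the purposes of \cref{fact:abelian_stateHSP} despite the fact that the naive lift $g\mapsto P_g$ is only projective (it can, precisely because the cocycle contributes only an overall sign to $P_g\otimes P_g$, which never enters the overlaps of \cref{defn:stateHSP}, and because the doubled convention of \cref{sec:dihedral_prelims} removes it outright), and that the hypothesis ``$P'\neq P$'' matches the StateHSP promise ``$g\notin H$'' including the boundary value $P'=I$. Everything else is subsumed by \cref{fact:abelian_stateHSP}, so no genuinely new obstacle arises; the statement is essentially a repackaging.
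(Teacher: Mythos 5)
Your proposal is correct and follows essentially the same route as the paper, which offers no independent proof of \cref{cor:LPS} beyond quoting it from \cite{Montanaro_arXiv_2017_stabilizer_bell,Hinsche_arXiv_2025_Abelian_stateHSP} and presenting it as an instance of \cref{fact:abelian_stateHSP} for the doubled (phase-free) Pauli group $\mathbb{Z}_2^{2N}$ of order $4^N$, with Bell measurements realizing the Fourier sampling. You simply fill in the details the paper leaves implicit (the character-basis/Bell-basis identification per site and the $O(\log|G|/\varepsilon)=O(N/\varepsilon)$ count), so no substantive divergence arises.
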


The next subroutine is motivated by the fact that $Z_{n, A} \otimes Z_{n, B}$, where $A, B$ label the qubits associated with the $n$-th site, for any $n$ commutes with the representation $U^N_2$. 

\begin{defn}[Parity Sampling Subroutine]\label{defn:parity_sampling}
  The parity sampling subroutine takes as input a state $ | \Psi \rangle $ on $2N$ qubits, where the qubits are labelled $(n, A), (n, B)$ for $n = 1,..., N$, and outputs a bit string $\pi \in \mathbb{Z}_2^N$ and state $|\Psi(\pi) \rangle$ with probability $P(\pi)$, via the following steps:
    \begin{enumerate}
        \item Perform transversal measurements of $Z_{n,A} \otimes Z_{n, B}, n = 1,...,N$. 
        \item Assign $\pi_n = 1$ wherever we record a $+1$ outcome, and $0$ whenever we record a $-1$ outcome.
        \item The state $| \Psi (\pi) \rangle$ is the post measurement state on $N$ qubits obtained by discarding the measurement register, as per \cref{eq:parity-projections}.
    \end{enumerate}
   The form of $| \Psi (\pi) \rangle$ and the distribution $P(\pi)$ are specified by the usual laws of quantum mechanics.
\end{defn}

The parity sampling subroutine is schematically depicted in \cref{fig:Bell_resolution_fig}a. The outputs and the representations that emerge from parity sampling have been established in \cref{sec:dihedral_prelims}. The particular set of outputs we care about are outputs that form a Bell-resolvable set, defined as follows:
\begin{defn}[Bell-resolvable set]\label{defn:Bell_resolvable_set}
  A Bell-resolvable set is a set of bitstrings $\{\pi^m \in \mathbb{Z}_2^N\}_{m=0}^M$, obtained from the parity sampling subroutine, which satisfy $\sum_{m=1}^M \pi^m = \pi^0$, together with the post-measurement state.
\end{defn}

Note that following the discussion of parity measurements in \cref{sec:dihedral_prelims}, the post-measurement state is $\bigotimes_{m=0}^M | \Psi(\pi^m) \rangle$, and the projection implied by a Bell-resolvable set yields the representation $D_\mathcal{C}^{\{\pi^m\}_{m=0}^M}$ defined in \cref{eq:conjugacy-representations}.

Given a Bell-resolvable set, one may perform the Bell-resolution subroutine.
The Bell-resolution subroutine makes use of measurement in the Bell basis, defined in \cref{eq:bell-state-defn}.
\begin{defn}[Bell-resolution subroutine]\label{defn:Bell_resolution_subroutine}
  Take as input a Bell-resolvable set as defined in \cref{defn:Bell_resolvable_set}.
  For each of the $N$ sites:
  \begin{enumerate}
      \item Perform $X$ measurements on the copies with odd parity ($\pi_n=0$) outcomes, and perform Bell measurements on pairs of copies with even parity ($\pi_n=1$) outcomes.
      \item Record the measurement outcomes as a collection of bit pairs $(p_n^j,q_n^j)$ from the irrep characters $C^{q_n^j,p_n^j}$ associated with the measured states in the irrep decompositions \cref{eq:d1-d1-irrep-decomp,eq:d0-irrep-decomp}.
      \item Add these together modulo 2 to obtain the bit pair $(p_n,q_n)=(\sum_jp_n^j,\sum_jq_n^j)$
  \end{enumerate}
  Finally, concatenate together $(p_n,q_n)$ from all $N$ sites and return the pair of bit strings $(p,q)$.
\end{defn}

The Bell-resolution subroutine performs Fourier sampling on the representation of $\mathbb{Z}_2^N\times\mathbb{Z}_2^N$ given by a Bell-resolvable set, with the pair of bit strings $(p,q) \in \mathbb{Z}_2^N \times \mathbb{Z}_2^N$ labeling an irrep given by the product of the irreps $C^{q_n,p_n}$ defined in \cref{eq:z2z2-irreps}.
Since each of the $N$ sites carries a tensor product of many copies of the representations $D_\mathcal{C}^{\{1,1\}}$ and $D_\mathcal{C}^{\{0\}}$, the addition modulo 2 of $p_n^j$ and $q_n^j$ calculates the resulting combined irrep from the Clebsch Gordon rules, since $\prod_jC^{q_n^j,p_n^j}=C^{\sum_jq_n^j,\sum_jp_n^j}$.

An example of a Bell-resolvable set and the Bell-resolution subroutine is schematically depicted in \cref{fig:Bell_resolution_fig}c.

\begin{figure*}[hptb]
\centering \includegraphics[width=\textwidth]{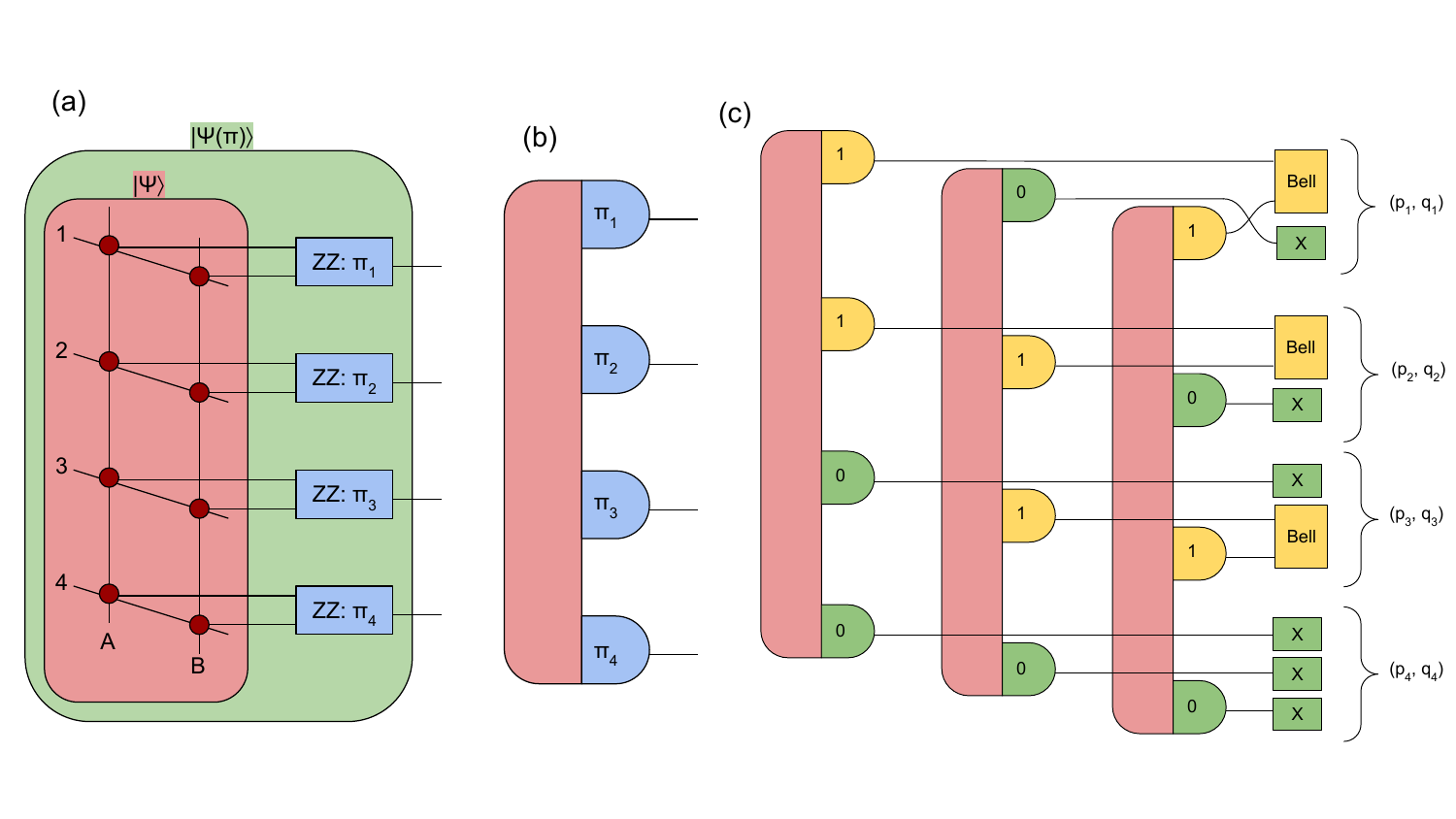} 
\caption{(a) Schematic of the parity sampling subroutine. A tranversal $Z \otimes Z$ measurement is performed on pairs of qubits, yielding an outcome $\pi$, and a post-measurement state $| \Psi(\pi) \rangle$. (b) Schematic of a projected state $| \Psi(\pi) \rangle$, where the labels on each site are given by $\pi_1, ... \pi_4$. (c) Schematic of a Bell-resolvable set, comprising three projected states with $\pi^0 = (1, 1, 0, 0), \pi^1 = (0, 1, 1, 0), \pi^2 = (1, 0, 1, 0)$. Bell measurements are performed on pairs of sites with $\pi_j = 1$, and $X$ measurements on the remaining sites. These measurement outcomes are processed to give the irrep labels $q_j, p_j$.}
\label{fig:Bell_resolution_fig} 
\end{figure*}

Taking the span of our irrep labels, treated as vectors in $\mathbb{Z}_2^N \times \mathbb{Z}_2^N$, produces the annihilator subspace $K^\perp$ of the subspace $K$ on which we perform the maximal rotation subroutine. The subspace $K$ may be obtained as the set of vectors whose inner product with members of $K^{\perp}$ is $0$, which can be obtained by arranging a basis of $K^{\perp}$ in a matrix and obtaining the nullspace via Gaussian elimination. The nullspace is $K$. We expand more on the concept of annihilator subspaces in \cref{ssec:good_nullspace_possible}.

\begin{defn}[Maximal rotation subroutine]\label{defn:maximal_rotation_subroutine}
    The maximal rotation subroutine takes as input a subspace $K\subseteq\mathbb{Z}_2^N \times \mathbb{Z}_2^N$, and outputs a vector $w^{\rm max} \in \mathbb{Z}_2^N$ via the following procedure:
    \begin{enumerate}
        \item Obtain a basis $(t^m, w^m)$ for $K$.
        \item Obtain $w^{\rm max}$ by performing bit-wise OR of the $w$ vectors across the basis.
    \end{enumerate}
\end{defn}
Note that the bitwise OR over the whole subspace gives the same result as taking the bitwise OR just over a basis, since $w^\text{max}=\bigvee_{w\in K}w=\bigvee_{w\in\text{basis}}w$.

\section{Main theorem and supporting results}\label{sec:statements}

The main theorem we will prove is:
\begin{thm}[Multiple-Squares StateHSP]\label{thm:ms-shsp}
  The Multiple-Squares StateHSP algorithm solves the Multiple-Squares StateHSP problem (\cref{defn:ms-shsp}) with probability higher than $1 - \delta$ using $O(\varepsilon^{-2} (N^2 + \log (N/\delta)))$ samples.
\end{thm}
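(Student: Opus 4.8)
The plan is to verify correctness and bound the sample complexity one step at a time, exploiting that the algorithm is already broken into self-contained subroutines. First I would dispatch \emph{Step 1} directly with \cref{cor:LPS}: $O(N/\varepsilon)$ Bell samples either output $h$ (when $U_2^N(h)$ is a doubled Pauli) or certify that it is not, so from here on one may assume $h$ has at least one reflection factor, which makes the bookkeeping of Steps 3--5 non-vacuous. It then remains to show Steps 2--5 reconstruct $h=(t^h,v^h,w^h)$: \emph{Step 2} produces $w^{\rm max}$ agreeing with $w^h$ on every reflection site; \emph{Steps 3--4} convert $|\Psi\rangle$ to a state with a doubled-Pauli hidden stabilizer and read off $(t^h,v^h)$; and \emph{Step 5} outputs $w^h$.

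\textbf{The crux: correctness of Step 2.}
The technical core is a proposition that Step 2 returns $w^{\rm max}$ correctly, which I would prove in three parts. (Soundness.) The parities $Z_{n,A}\otimes Z_{n,B}$ commute with $U_2^N$, so each parity-projected state $|\Psi(\pi)\rangle$ inherits invariance under $D^\pi(h)$; on a Bell-resolvable set the relation $\sum_{m\geq1}\pi^m=\pi^0$ makes every site carry an even number of even-parity copies, so pairing those copies reorganizes the ensemble into a tensor product of the quotient representations $D_{\mathcal{C}}^{\{1,1\}}$ and $D_{\mathcal{C}}^{\{0\}}$, diagonalized simultaneously by the Bell and $X$ bases of \cref{eq:irrep-bases}; hence the Bell-resolution measurement returns a $\mathbb{Z}_2^N\times\mathbb{Z}_2^N$-irrep label $(p,q)$ annihilating the image $\bar h=(t^h,w^h)$ of $h$ in the central quotient, so every sample lies in $\langle\bar h\rangle^\perp$ and the computed subgroup $K$ satisfies $\langle\bar h\rangle\subseteq K$, giving $w^{\rm max}\geq w^h$ bitwise. (Completeness.) With $O(N)$ Bell-resolvable sets one generates, with high probability, enough of $\langle\bar h\rangle^\perp$ that $K$ has no $w$-component on reflection sites where $w^h=0$; this is a rank/coupon-collector argument in $\mathbb{Z}_2^{2N}$ that hinges on the Bell-resolution distributions --- assembled from parity-sampling outcomes of the $\varepsilon$-promised state --- being spread over enough directions, which is exactly why the first parity outcome of each set is sampled independently (``rich'' Bell-resolvable sets). (Group theory.) Given $K$ exactly, $w^{\rm max}=\bigvee_{(t,w)\in K}w$ has the claimed meaning by the facts of \cref{sec:dihedral_prelims}, a finite check over the six involutions of $\mathcal{D}_4$ (only $rs$ and $rs^3$ have $w=1$, and both are reflections).

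\textbf{From $w^{\rm max}$ to $h$.}
Applying $\bigotimes_{n:w^{\rm max}_n=1}(T_{n,A}\otimes T_{n,B})$ to $|\Psi\rangle$ is unitary, so the result still satisfies the StateHSP promise with the same $\varepsilon$, and a one-line computation showing that conjugation by $T_A\otimes T_B$ sends $U_2(r^ts^k)\mapsto U_2(r^ts^{k\pm t})$ implies the new hidden stabilizer is the doubled Pauli $U_2^N(t^h,v^h,0)$: each odd-$k$ reflection factor (exactly those flagged by $w^{\rm max}$, by the Step~2 claim) becomes an even-$k$ reflection, while $e$- and $s^2$-factors are unchanged. \cref{cor:LPS} then recovers $(t^h,v^h)$ with $O(N/\varepsilon)$ more samples, and \emph{Step 5} returns $w^h=t^h\odot w^{\rm max}$, which is correct because every $\mathcal{D}_4$ involution with nonzero $w$ is a reflection, so $w^h$ is supported on the reflection sites, where $w^{\rm max}=w^h$.

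\textbf{Sample complexity and the main obstacle.}
Adding up: Steps 1 and 4 cost $O(N/\varepsilon)$ (or $O((N+\log(1/\delta))/\varepsilon)$ at confidence $1-\delta$), while Step 2 dominates --- $O(N)$ Bell-resolvable sets, each consuming $O(N)$ parity-sampling rounds, with an extra $\varepsilon^{-2}$ because parity projection degrades the effective promise (an outcome of probability $\Theta(\varepsilon)$ carries a projected promise only $1-\Theta(\varepsilon)$, so a Chernoff bound to pin down the correct irrep support costs $\Theta(\varepsilon^{-2})$ per sample), plus an additive $\varepsilon^{-2}\log(N/\delta)$ from a union bound over the $O(N)$ events that must all succeed --- for a total of $O(\varepsilon^{-2}(N^2+\log(N/\delta)))$. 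I expect the completeness part of Step 2 to be the hard part: one must control the Bell-resolution outcome distribution tightly enough that $O(N)$ sets span the right subspace of $\langle\bar h\rangle^\perp$ instead of getting trapped in a proper subspace because parity projection has given some $|\Psi(\pi)\rangle$ spurious extra symmetry; this is precisely where the non-abelian structure enters and where the $\varepsilon^{-2}$ scaling (versus the abelian $\varepsilon^{-1}$ of \cref{fact:abelian_stateHSP}) originates.
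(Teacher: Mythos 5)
Your skeleton matches the paper's: dispatch Step 1 with \cref{cor:LPS}, show the Bell-resolution subroutine Fourier-samples the central quotient $\mathbb{Z}_2^N\times\mathbb{Z}_2^N$ so every sample annihilates $(t^h,w^h)$, argue that the accumulated nullspace $K$ is ``good enough'' for the bitwise OR to recover $w^h$ on the support of $t^h$, and then reduce to a Pauli StateHSP by $T$-conjugation. The soundness and group-theory parts (including the $k\mapsto k\pm t$ computation) are correct and coincide with \cref{prop:conjugacy_ensemble_representation}, \cref{prop:Bell-resolution_is_Fourier}, \cref{lem:good_nullspace_output} and \cref{prop:maximal_T_rotation}. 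You also correctly identify the completeness of Step 2 as the crux. But at that crux there is a genuine gap: your proposed mechanisms --- a ``rank/coupon-collector argument'' relying on the Bell-resolution distribution being ``spread over enough directions,'' and an $\varepsilon^{-2}$ arising because ``parity projection degrades the effective promise'' to $1-\Theta(\varepsilon)$ per projected state --- do not work, and the second is explicitly refuted in the paper. The $\varepsilon$-promise does \emph{not} descend to the projected states: one can build $|\Psi\rangle$ by assigning each $|\Psi(\pi)\rangle$ to be a $\pm1$ eigenstate of $D^\pi_{\mathcal{C}}(t,w)$ at random, so that $|\langle\Psi(\pi)|D^\pi_{\mathcal{C}}(t,w)|\Psi(\pi)\rangle|=1$ for every $\pi$ while the global overlap is near $0$. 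Consequently a bad direction can have full conditional probability on every individual Bell-resolvable set, and no Chernoff or coupon-collector argument on the projected promise can exclude it.

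The missing idea is the anticommutation argument of \cref{prop:bad_null_spaces_unlikely}. A minimal bad element $(t,w)$ (one with $t^h_Jw_J=1$ but $w^h_J=0$ for some $J$) has the property that $D^\pi_{\mathcal{C}}(t,w)$ \emph{anticommutes} with $D^\pi_{\mathcal{C}}(t^h,w^h)$ for all $\pi$ outside a proper subgroup $S_{\rm cmm}\leq\mathbb{Z}_2^N$ (\cref{lem:repn_cmm_form,lem:bad_elt_anticommute,lem:bad_elt_commute}); since $|\Psi(\pi)\rangle$ is a $+1$ eigenstate of the hidden element, the expectation of any anticommuting Pauli vanishes identically there. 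Combining this with the anti-concentration of the parity distribution over proper subgroups (\cref{lem:probability_mass_of_subgroups}, which gives $p(S_{\rm cmm})\leq1-\varepsilon/2$) and the conditional character-POVM bound of \cref{lem:minor_bound_annihilator_mass} yields $q(K_b^\perp)\leq1-\varepsilon/2$ for every bad annihilator, after which the group sampling lemma gives $O((N+\log\delta^{-1})/\varepsilon)$ Bell-resolvable sets. The $\varepsilon^{-2}$ in the theorem is then simply the product of this $\varepsilon^{-1}$ with the $\varepsilon^{-1}$ from \cref{prop:basis_sampling_complexity} governing the $O((N+\log\delta^{-1})/\varepsilon)$ parity samples needed per Bell-resolvable set --- not a per-sample Chernoff cost. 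Without the anticommutation step your Step 2 completeness claim is unsupported, so the proof as proposed does not go through.
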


\bigskip

\textit{Bell-resolvable sets are efficiently obtainable -- } For our overall algorithm to be efficient, it must be practical to obtain Bell-resolvable sets.
One can sample a Bell-resolvable set from a satisfactory distribution by first obtaining an initial parity outcome $\pi^0$ and then sampling a basis $\{\pi^m\}_{m=1}^N$ of additional parity outcomes.
We order the basis such that the sum of the first $M$ elements is equal to $\pi^0$, giving us the Bell-resolvable set $\{\pi^m\}_{m=0}^M$.
Sampling the basis is the only nontrivial step, which the StateHSP promise guarantees can be completed without too many samples due to anti concentration of the distribution over parities.

\begin{prop}[Sample complexity for obtaining a basis for parity]\label{prop:basis_sampling_complexity}
    Let $|\Psi \rangle$ be a state satisfying the promises of \cref{defn:ms-shsp} and let $t^h \neq 0$.
    To obtain a spanning set of parities for $\mathbb{Z}_2^N$ from the parity sampling subroutine with probability at least $1-\delta$, it suffices to collect at least
    \begin{align}
        \frac{N + \log(1/\delta)}{\varepsilon}
    \end{align}
    samples.
\end{prop}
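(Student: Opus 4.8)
The plan is to reduce the claim to a statement about sampling from a probability distribution on $\mathbb{Z}_2^N$ with sufficient anti-concentration, and then invoke a standard coupon-collector-style argument. First I would work out the distribution $p(\pi)$ over parity outcomes produced by the parity sampling subroutine. Writing $|\Psi\rangle$ in the basis adapted to the $Z_{n,A}\otimes Z_{n,B}$ operators, we have $p(\pi)=\Vert \Pi_\pi|\Psi\rangle\Vert^2$, and the key observation is that the overlap $\langle\Psi|U_2^N(h)|\Psi\rangle$ can be expanded in terms of these parity weights. Specifically, because $U_2^N(h)$ restricted to the parity-$\pi$ subspace acts (via \cref{eq:parity-rep}, \cref{eq:dihedral-2d-reps}) as $D^\pi(h)$, and for $h$ an involution with $t^h\neq 0$ these restricted operators have a definite sign structure depending on $\pi\cdot t^h$ (more precisely, $D^{\pi_n}(r^{t^h_n}s^{k^h_n})$ has trace zero when $\pi_n=1$, $t^h_n=1$, since it is $X$ times a rotation), one can show $\langle\Psi|U_2^N(h)|\Psi\rangle = \sum_\pi p(\pi)\,\lambda_\pi$ where the $\lambda_\pi$ are phases that are $+1$ whenever $\pi\cdot t^h=0$ and have modulus $\le 1$ in general — actually the cleanest route is to note that the component of $|\Psi\rangle$ orthogonal to the $+1$-eigenspace of the relevant parity operators is what limits the overlap. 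The upshot I want is an inequality of the form
\begin{align}
    \sum_{\pi:\ \pi\cdot t^h=1} p(\pi)\ \geq\ \frac{\varepsilon}{c}
\end{align}
for a small constant $c$ (likely $c=2$): if too little weight sat on parities with $\pi\cdot t^h=1$, then $|\Psi\rangle$ would be nearly invariant under some $g\neq h$ obtained by flipping $h$'s action on the $t^h\neq 0$ sites appropriately, contradicting promise 2 of \cref{defn:ms-shsp}.

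Granting that anti-concentration bound, the next step is purely combinatorial. I want to show that after drawing $m$ i.i.d.\ samples $\pi^{(1)},\dots,\pi^{(m)}$ from $p$, the probability that they fail to span $\mathbb{Z}_2^N$ is at most $\delta$ once $m\ge (N+\log(1/\delta))/\varepsilon$. The standard trick: the samples fail to span iff there exists a nonzero $u\in\mathbb{Z}_2^N$ orthogonal to every $\pi^{(i)}$; by a union bound over the (at most $2^N$) candidate $u$'s,
\begin{align}
    \Pr[\text{no spanning set}]\ \leq\ \sum_{u\neq 0}\ \prod_{i=1}^m \Pr_{\pi\sim p}[\pi\cdot u=0]\ \leq\ 2^N\,(1-\eta)^m,
\end{align}
where $\eta=\min_{u\neq 0}\Pr_{\pi\sim p}[\pi\cdot u=1]$. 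The content of the previous paragraph is exactly that $\eta\ge\varepsilon/c$ — but note the anti-concentration argument there was run only for $u=t^h$. The main obstacle, and the step I'd spend the most care on, is upgrading the bound from $u=t^h$ to \emph{all} nonzero $u$: for a generic direction $u$ there is no hidden symmetry forcing weight onto $\{\pi\cdot u=1\}$, so one cannot argue this way. I expect the resolution is that it is not actually needed: after Step 1 of the algorithm (Learning Pauli Stabilizers) we know $U_2^N(h)$ is not a doubled Pauli, and the only direction where anti-concentration is guaranteed is indeed $t^h$ — but $t^h$ is precisely the direction we must not be orthogonal to in order for the later subroutines to recover it, so perhaps the proposition as needed only requires spanning in a weaker sense, or perhaps $\varepsilon$ anti-concentration along $t^h$ plus the genericity of the StateHSP promise (applied to the many $g$ that differ from $h$ on a single site) forces $\eta\ge\varepsilon/c$ uniformly. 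I would check the promise carefully: for each nonzero $u$, consider $g$ obtained from $h$ by the substitution $s^{k_n}\mapsto s^{-k_n}$ (equivalently $v_n\mapsto v_n, w_n\mapsto w_n, t_n\mapsto t_n$ with a sign flip on the rotation angle) exactly on the support of $u$; if the parity distribution concentrated on $\{\pi\cdot u=0\}$, then $|\langle\Psi|U_2^N(g)|\Psi\rangle|$ would exceed $1-\varepsilon$, contradicting promise 2. This gives $\eta\ge\varepsilon/c$ for every nonzero $u$, which is what the union bound needs.

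Finally, assembling: with $\eta\ge\varepsilon/2$ (say), $2^N(1-\varepsilon/2)^m\le 2^N e^{-m\varepsilon/2}\le\delta$ whenever $m\varepsilon/2\ge N\ln 2+\ln(1/\delta)$, i.e.\ $m\ge (2\ln 2\, N + 2\ln(1/\delta))/\varepsilon$, which is $O((N+\log(1/\delta))/\varepsilon)$ and matches the stated bound up to the constant. I would also remark that the samples $\pi^{(i)}$ come together with their post-measurement states $|\Psi(\pi^{(i)})\rangle$, so no samples are wasted — these are exactly what feeds the Bell-resolution subroutine once a spanning (hence Bell-resolvable) set is in hand — and that the ordering of the basis so that a prefix sums to $\pi^0$ is automatic once we fix $\pi^0$ first and then complete it to a basis. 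The one genuinely delicate point remains the uniform anti-concentration over all $u$; everything else is the routine union-bound computation above.
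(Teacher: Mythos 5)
Your overall architecture is sound and the union-bound half is fine, but the step you yourself flag as ``the one genuinely delicate point'' --- uniform anti-concentration over all nonzero $u$ --- is the one place where your proposed resolution fails, and it is exactly the content of the paper's key lemma. Your candidate witness $g$, obtained from $h$ by $s^{k_n}\mapsto s^{-k_n}$ on the support of $u$, equals $h\cdot(0,\,u\odot w^h,\,0)$ (since $s^{-k}=s^{k}\cdot s^{-2k}$ and $s^{-2k}=s^{2w}$); when $w^h=0$, or when $u$ misses the support of $w^h$, this is just $h$ again and yields no contradiction. The correct witness is not a perturbation of $h$ at all: it is the central element $(0,u,0)$ itself. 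Writing $Z^{\otimes2}[u]=\bigotimes_n(Z_{n,A}^{u_n}\otimes Z_{n,B}^{u_n})=\pm U_2^N(0,u,0)$, one has
\begin{equation*}
\Pr_{\pi\sim p}[\pi\cdot u=0]=\tfrac12\left(1\pm\langle\Psi|Z^{\otimes2}[u]|\Psi\rangle\right)\le 1-\tfrac{\varepsilon}{2},
\end{equation*}
because $(0,u,0)\neq h$ whenever $t^h\neq0$ and $u\neq0$, so promise 2 of \cref{defn:ms-shsp} applies directly to $U_2^N(0,u,0)$. This gives $\eta\ge\varepsilon/2$ for every nonzero $u$ with no case analysis, and is precisely the hypothesis ``$|\langle\Psi|Z^{\otimes2}[d]|\Psi\rangle|\le1-\varepsilon$ for $d\neq0$'' that the paper isolates in \cref{lem:probability_mass_of_subgroups}, which proves the more general bound $P(K)\le 1-(1-2^{M-N})\varepsilon$ for any $M$-dimensional subgroup $K$; you only need the codimension-one case, i.e.\ the hyperplanes $\{\pi:\pi\cdot u=0\}$. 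Your opening attempt to extract anti-concentration from $\langle\Psi|U_2^N(h)|\Psi\rangle=1$ via trace arguments for $D^{\pi_n}$ is a dead end and should be deleted.

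Once that is repaired, your union bound over the $2^N-1$ hyperplanes is a valid and more self-contained substitute for the paper's final step: the paper instead bounds the mass of the largest proper subgroup by $1-\varepsilon/2$ and invokes the Group Sampling Lemma (\cref{lem:group_sampling_lemma}). The two are essentially equivalent here, since failing to span means being contained in some maximal proper subgroup, which is a hyperplane, and both yield $m=O((N+\log(1/\delta))/\varepsilon)$.
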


We make several remarks about the above proposition: First, if $t^h = 0$, then the algorithm will have terminated in Step 1, as the hidden element will be a Pauli stabilizer. This justifies assuming $t^h \neq 0$. Second, we set the threshold probability to go like $\delta \varepsilon / N$, as we desire the final probability of correctness to go like $1- \delta$, and as we shall see, this subroutine must be performed $O(N/\varepsilon)$ times. This only contributes additional sub-leading factors logarithmic in $N$. Alternatively, the expected number of samples required to give a spanning set of parities can be made close to $N/\varepsilon$ with an overhead that is logarithmic in $N$, and hence will only contribute logarithmic factors in the final result.

\bigskip

\textit{Bell-resolvable sets abelianize the doubled representation -- } Why do we want to obtain Bell-resolvable sets? The answer is that on Bell-resolvable sets, our problem looks abelian; one might think about this as being analogous to how doubling a state abelianizes the Pauli group by removing phases; this allows us to make progress via Fourier sampling. The following proposition makes this precise. 

\begin{prop}[Abelianization of Multiple-Squares]\label{prop:conjugacy_ensemble_representation}
    Let $\{\pi^m\}_{m =0}^M$ be a Bell-resolvable set.
    The representation $D^{\{\pi^m\}}$ is non faithful, with kernel $\{(0,v,0):v\in\mathbb{Z}_2^N\}\cong\mathbb{Z}_2^N$.
    This representation gives a representation of the abelian quotient group $\mathbb{Z}_2^N\times\mathbb{Z}_2^N$ which we notate $D_{\mathcal{C}}^{\{\pi^m\}}$.
    This representation is defined by mapping $(t,w)\in\mathbb{Z}_2^N\times\mathbb{Z}_2^N$ to the coset representative $(t,0,w)$ and evaluating that on $D^{\{\pi^m\}}$:
    \begin{align}
        D_{\mathcal{C}}^{\{\pi^m\}}(t,w)
        &=
        D^{\{\pi^m\}}(t,0,w)
        \,.
    \end{align}
    Furthermore, for the hidden element $(t^h, v^h, w^h)$ specified in the Multiple-Squares StateHSP, we have,
    \begin{equation}
        D_{\mathcal{C}}^{\{\pi^m\}}(t^h, w^h) | \Psi(\{\pi^m\}) \rangle = | \Psi(\{\pi^m\}) \rangle
        \,.
    \end{equation}
\end{prop}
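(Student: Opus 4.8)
The plan is to prove the three assertions in sequence: (i) that $\{(0,v,0):v\in\mathbb{Z}_2^N\}$ (the center $Z[\mathcal{D}_4^N]$) is exactly $\ker D^{\{\pi^m\}}$; (ii) that $D_{\mathcal{C}}^{\{\pi^m\}}$ is therefore a well-defined representation of the abelian quotient $\mathbb{Z}_2^N\times\mathbb{Z}_2^N$; and (iii) that it fixes $|\Psi(\{\pi^m\})\rangle$.

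For the kernel I would regroup the tensor factors of $D^{\{\pi^m\}}=\bigotimes_{m=0}^M\bigotimes_{n=1}^N D^{\pi^m_n}$ by site $n$. The structural observation at the heart of ``abelianization'' is that the Bell-resolvable condition $\sum_{m=1}^M\pi^m=\pi^0$ forces the number of $m\in\{0,\dots,M\}$ with $\pi^m_n=1$ to be \emph{even} at every site $n$: the $m\ge1$ contributions sum mod $2$ to $\pi^0_n$, which combined with the $m=0$ term gives an even total. Hence at each site the representation is $(D^1)^{\otimes 2a_n}\otimes(D^0)^{\otimes b_n}$, and pairing up the $D^1$'s and using $D^1\otimes D^1=D^T\oplus D^A\oplus D^R\oplus D^{RA}$ together with $D^0=D^T\oplus D^A$ shows that every irrep appearing in $D^{\{\pi^m\}}$ is one-dimensional, hence trivial on $s^2$; this gives $Z[\mathcal{D}_4^N]\subseteq\ker D^{\{\pi^m\}}$. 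For the reverse inclusion I would use that the Bell-resolvable sets produced by the algorithm are rich enough that every site sees at least one even-parity outcome, i.e. $a_n\ge1$; then the factor $D_{\mathcal{C}}^{\{1,1\}}=D^1\otimes D^1$ at site $n$ already contains all four characters $C^{q,p}$ (cf. \cref{eq:d1-d1-irrep-decomp}), so it separates $t_n$ and $w_n$, and the quotient representation is faithful on $\mathbb{Z}_2^N\times\mathbb{Z}_2^N$. Assertion (ii) is then immediate: since $\ker D^{\{\pi^m\}}\supseteq\langle s^2\rangle^N$, the representation descends to $\mathcal{D}_4^N/\langle s^2\rangle^N\cong\mathbb{Z}_2^N\times\mathbb{Z}_2^N$, and evaluating on the coset representatives $(t,0,w)$ fixed in \cref{sec:dihedral_prelims} gives the well-defined homomorphism $D_{\mathcal{C}}^{\{\pi^m\}}(t,w)=D^{\{\pi^m\}}(t,0,w)$, abelian because its domain is.

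For assertion (iii) I would first note that $U_2^N(h)$ commutes with each site-parity observable $Z_{n,A}Z_{n,B}$: on a single site $X\otimes X$ commutes with $Z\otimes Z$ (the two anticommutations cancel) and $e^{i\pi k Z/4}\otimes e^{i\pi k Z/4}$ is diagonal, so $U_2(r^t s^k)$ commutes with $Z_AZ_B$, hence $U_2^N(h)$ commutes with every parity projector $\Pi_\pi^\dagger\Pi_\pi$. Combining this with the compression identity $\Pi_\pi U_2^N(g)\Pi_\pi^\dagger=D^\pi(g)$ and the co-isometry relation $\Pi_\pi\Pi_\pi^\dagger=I$ yields $D^{\pi^m}(h)\,\Pi_{\pi^m}=\Pi_{\pi^m}\,U_2^N(h)$; applying both sides to $|\Psi\rangle$, using the promise $U_2^N(h)|\Psi\rangle=|\Psi\rangle$, and normalizing gives $D^{\pi^m}(h)|\Psi(\pi^m)\rangle=|\Psi(\pi^m)\rangle$ for every $m$, so that $D^{\{\pi^m\}}(h)|\Psi(\{\pi^m\})\rangle=|\Psi(\{\pi^m\})\rangle$. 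Finally, writing $h=(t^h,v^h,w^h)=(t^h,0,w^h)\cdot(0,v^h,0)$ in $\mathcal{D}_4^N$ and using that the second factor lies in $\ker D^{\{\pi^m\}}$ by (i), we get $D^{\{\pi^m\}}(h)=D^{\{\pi^m\}}(t^h,0,w^h)=D_{\mathcal{C}}^{\{\pi^m\}}(t^h,w^h)$, which is the claimed identity.

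The main obstacle is the even-parity-count observation underlying (i): it is a one-line argument once spotted, but it is precisely the fact that makes $D^{\{\pi^m\}}$ collapse into one-dimensional irreps and hence ``abelianize,'' while everything else is either a routine compression/commutation computation or bookkeeping of the $\mathcal{D}_4$ irrep decompositions recorded in \cref{sec:dihedral_prelims}. A secondary point requiring care is the reverse kernel inclusion, which fails for an arbitrary Bell-resolvable set and must invoke the richness (every site receives an even-parity outcome) guaranteed by the way these sets are constructed in the algorithm.
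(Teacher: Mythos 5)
Your proof takes essentially the same route as the paper's: regroup the tensor factors of $D^{\{\pi^m\}}$ by site, observe that the Bell-resolvable condition forces an even number of even-parity ($D^1$) factors at each site so that each site carries one of the quotient representations of \cref{eq:conjugacy-representations}, and obtain the stabilization by commuting the (central) parity projectors through $U_2^N(h)$ and invoking the promise $U_2^N(h)|\Psi\rangle=|\Psi\rangle$. All of that is correct and matches the paper.

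The one place you go beyond the paper is the reverse kernel inclusion, and there your premise is not actually guaranteed: a Bell-resolvable set need not give every site an even-parity outcome. For instance $\{\pi^0,\pi^1\}$ with $\pi^0=\pi^1=e_1$ satisfies $\sum_{m\geq1}\pi^m=\pi^0$, yet sites $2,\dots,N$ then carry only $D^0$ factors, whose kernel contains all rotations $s^k$, so $\ker D^{\{\pi^m\}}$ is strictly larger than the center on those sites. The paper's own proof establishes only the containment $\{(0,v,0)\}\subseteq\ker D^{\{\pi^m\}}$, which is all that is needed for the representation to descend to the quotient and for the downstream steps of the algorithm; the ``kernel equals'' phrasing in the proposition is an overstatement for a general Bell-resolvable set, and your attempt to patch it by assuming every site receives an even-parity outcome would require an extra condition that the construction of these sets does not enforce. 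So treat that clause as needing either a weakened statement or an added hypothesis, not as something your argument (or the paper's) actually delivers.
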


We want to emphasize that the representation $D^{\{\pi^m\}}_{\mathcal{C}}$ of $\mathbb{Z}_2^N\times\mathbb{Z}_2^N$ is \emph{not} the one obtained by restricting $D^{\{\pi^m\}}$ to the subgroup $\{(t,v,0):t,v\in\mathbb{Z}_2^N\}\cong\mathbb{Z}_2^N\times\mathbb{Z}_2^N$, but rather quotienting out the kernel of $D^{\{\pi^m\}}$.
Additionally, $D_{\mathcal{C}}^{\{\pi^m\}}(t^h, w^h)$ is by no means the unique stabilizer of $|\Psi(\{\pi^m\})\rangle$ in $\mathbb{Z}_2^N\times\mathbb{Z}_2^N$, in contrast to the original StateHSP setup. As a concrete example, one can check that if $| \Psi \rangle = |\psi \rangle \otimes |\psi \rangle$ is in tensor product form, then $| \Psi(\pi) \rangle$ is always stabilized by an $X$-type operator supported on sites where $\pi = 0$. In general, the $\varepsilon$ part of the StateHSP promise does not project down onto the states $|\Psi(\pi)\rangle$, as the contribution of each of these states to the $|\Psi \rangle$ is exponentially small.  

The fact that, for any Bell-resolvable set, $D_{\mathcal{C}}^{\{\pi^m\}}$ is a representation of the abelian group $\mathbb{Z}_2^N\times\mathbb{Z}_2^N$ allows us to leverage Fourier sampling. This is what the next step of our algorithm does. 

\begin{prop}[The Bell-resolution subroutine performs Fourier sampling]\label{prop:Bell-resolution_is_Fourier}
    The output of the Bell-resolution subroutine on a Bell-resolvable set is an irrep label $(q, p) \in \mathbb{Z}_2^N \times \mathbb{Z}_2^N$, regarded as a character. Furthermore, the hidden element $(t^h, v^h, w^h)$ specified in the Multiple-Squares StateHSP satisfies $q \cdot t^h + p \cdot w^h = 0$, again with no information about $v^h$.
\end{prop}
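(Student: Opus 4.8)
The plan is to prove the two assertions of the proposition separately, both resting on \cref{prop:conjugacy_ensemble_representation}. For the first, I would show that the measurements prescribed by the Bell-resolution subroutine (\cref{defn:Bell_resolution_subroutine}) amount exactly to Fourier sampling the abelian representation $D_{\mathcal C}^{\{\pi^m\}}$ of $\mathbb Z_2^N\times\mathbb Z_2^N$ supplied by that proposition, so the returned pair $(q,p)$ is the sampled irrep label. For the second, I would use the invariance $D_{\mathcal C}^{\{\pi^m\}}(t^h,w^h)\,|\Psi(\{\pi^m\})\rangle=|\Psi(\{\pi^m\})\rangle$ to force the sampled character to evaluate trivially on $(t^h,w^h)$.

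\emph{The output is an irrep label.} I would regroup the $(M+1)N$ qubits of the Bell-resolvable set by site. At site $n$ the state carries $\bigotimes_m D^{\pi_n^m}$: a copy of $D^1$ for each even-parity outcome ($\pi_n^m=1$) and a copy of $D^0$ for each odd-parity outcome ($\pi_n^m=0$). The Bell-resolvability constraint $\sum_{m=1}^M\pi^m=\pi^0$ gives $\sum_{m=0}^M\pi_n^m\equiv0\pmod 2$, so the number of even-parity copies at site $n$ is even and they can be paired. Each pair carries $D^1\otimes D^1=D_{\mathcal C}^{\{1,1\}}$, which — unlike $D^1$ itself, whose value on $s^2$ is $iZ\neq I$ — descends to the quotient $\mathbb Z_2\times\mathbb Z_2$ and is diagonalized in the Bell basis by \cref{eq:d1-d1-irrep-decomp}; each leftover odd-parity copy carries $D^0=D_{\mathcal C}^{\{0\}}$, diagonalized in the $X$ eigenbasis by \cref{eq:d0-irrep-decomp}. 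Hence the Bell measurements on even-parity pairs and the $X$ measurements on odd-parity copies jointly project the site-$n$ factor onto a one-dimensional irrep of $\mathbb Z_2\times\mathbb Z_2$ with character $\prod_j C^{q_n^j,p_n^j}=C^{q_n,p_n}$, where $(q_n,p_n)=(\sum_j q_n^j,\sum_j p_n^j)$ — precisely the bit pair the subroutine records. Since $D^{\{\pi^m\}}(t,0,w)$ factorizes across sites with the site-$n$ factor depending only on $(t_n,w_n)$, forming the external tensor product over all $N$ sites shows that the concatenated $(q,p)$ labels an irrep of $\mathbb Z_2^N\times\mathbb Z_2^N$, that coarse-graining the (finer) measurement record to $(q,p)$ is exactly the projection of $|\Psi(\{\pi^m\})\rangle$ onto the $(q,p)$-isotypic block of $D_{\mathcal C}^{\{\pi^m\}}$, and — because $D_{\mathcal C}^{\{\pi^m\}}$ is a representation of the quotient, with $(0,v,0)$ in the kernel of $D^{\{\pi^m\}}$ for all $v$ — that no information about $v^h$ is returned.

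\emph{The sampled character annihilates $(t^h,w^h)$.} On the $(q,p)$-isotypic block, $D_{\mathcal C}^{\{\pi^m\}}(t^h,w^h)$ acts as the scalar $C^{q,p}(t^h,w^h)=(-1)^{q\cdot t^h+p\cdot w^h}$. By \cref{prop:conjugacy_ensemble_representation}, $|\Psi(\{\pi^m\})\rangle$ is a $+1$ eigenvector of $D_{\mathcal C}^{\{\pi^m\}}(t^h,w^h)$, hence has zero overlap with any block on which this scalar equals $-1$. Since the subroutine samples $(q,p)$ with probability equal to the squared norm of the projection of $|\Psi(\{\pi^m\})\rangle$ onto the $(q,p)$-block, every outcome of nonzero probability satisfies $(-1)^{q\cdot t^h+p\cdot w^h}=+1$, i.e.\ $q\cdot t^h+p\cdot w^h=0\pmod 2$, as claimed.

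I expect the main obstacle to be the first part: verifying that the site-wise Bell-and-$X$ prescription genuinely implements Fourier sampling of $D_{\mathcal C}^{\{\pi^m\}}$ rather than of some neighbouring representation. The crux is the bookkeeping that Bell-resolvability forces an even number of even-parity copies at each site — without this a lone $D^1$ factor would not descend to the quotient, and the measurement would not land inside a one-dimensional irrep — together with checking that the Clebsch--Gordan product of $\mathbb Z_2\times\mathbb Z_2$ characters is indeed computed by the mod-$2$ sum of the recorded bit pairs, and that coarse-graining the genuine (finer) projective outcomes to the irrep label reproduces Fourier sampling exactly.
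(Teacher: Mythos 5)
Your proposal is correct and follows essentially the same route as the paper's proof: the Bell/$X$ measurements realize a fine-graining of the character POVM for $D_{\mathcal C}^{\{\pi^m\}}$ (your isotypic-block coarse-graining argument), and the constraint $q\cdot t^h+p\cdot w^h=0$ follows because the projected state is a $+1$ eigenvector of $D_{\mathcal C}^{\{\pi^m\}}(t^h,w^h)$, which acts as the scalar $(-1)^{q\cdot t^h+p\cdot w^h}$ on each block. You simply spell out in more detail the site-wise pairing and Clebsch--Gordan bookkeeping that the paper compresses into a citation of its irrep-decomposition equations.
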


As mentioned, while the conjugacy ensemble representation resembles an abelian StateHSP, in any given realization, $D_{\mathcal{C}}^{\{\pi^m\}}(t^h, w^h)$ may not be a unique stabilizer. We proceed by building the common null space across all the irrep labels obtained in the Bell-resolution subroutine, but it is not \emph{a priori} clear that this would even converge to a unique answer, nor that it will happen in a polynomial number of samples.

\bigskip

\textit{Good null spaces exist -- } To get around these issues, we make the observation that at this stage, we do not actually need to narrow candidates for $(t^h, w^h)$ down to a unique null space. Instead there is a much larger class of \textit{good null spaces} that will enable the final part of our algorithm. We first define a good null space, then we explain why it is good.

\begin{defn}[Good nullspace]\label{defn:good_nullspace}
    Let $K \subseteq \mathbb{Z}_2^N \times \mathbb{Z}_2^N$ be a subgroup containing $(t^h, w^h)$. Then we say that $K$ is a \textit{good nullspace} if for all elements $(t, w)$ we have,
    \begin{equation}
      (t^h \odot w)_j = 1 \implies w^h_j = 1,
    \end{equation}
    where $\odot$ is the hadamard product. 
\end{defn}

The condition above tells us that, while $K$ can contain a lot of vectors other than $(t^h,w^h)$, all that subsequent steps in the algorithm require is that on factors $n$ where (the unknown) $t^h_n = 1$, the nullspace $K$ should not inappropriately assign $w_n=1$.
In Step 2b. of the algorithm, the nullspace obtained from the previous step is fed into the maximal rotation subroutine.
The output is characterized as follows.

\begin{lem}[Output of a good nullspace]\label{lem:good_nullspace_output}
    Let $K$ be a good nullspace. Given such a group as input into the Maximal Rotation subroutine, the output will be of the form $w^{\rm max} = w^h + (t^h)^c \odot w' \in \mathbb{Z}_2^N$, where here $(t^h)^c$ denotes the complement of $t^h$, and the form of $w'$ is unimportant.
\end{lem}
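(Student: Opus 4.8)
The plan is to argue coordinate by coordinate, using the observation recorded just after \cref{defn:maximal_rotation_subroutine} that the Maximal Rotation subroutine returns $w^{\rm max}=\bigvee_{(t,w)\in K}w$, the bitwise OR of the $w$-components over all of $K$ (equivalently over a basis, since in a $\mathbb{Z}_2$-subspace the support of any element is contained in the union of the supports of the basis vectors). Because the claimed form $w^{\rm max}=w^h+(t^h)^c\odot w'$ constrains $w^{\rm max}$ only on the support of $t^h$ --- on the complement one may simply take $w'_j:=w^{\rm max}_j+w^h_j$, so that $w'$ is genuinely unconstrained there --- it suffices to show that $w^{\rm max}_j=w^h_j$ for every site $j$ with $t^h_j=1$.

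Fix such a $j$. Since $(t^h,w^h)\in K$ by the definition of a good nullspace, the OR over $K$ already includes the bit $w^h_j$, so $w^{\rm max}_j=1$ whenever $w^h_j=1$. For the converse, suppose $w^{\rm max}_j=1$ but $w^h_j=0$. Then some $(t,w)\in K$ has $w_j=1$, and since $t^h_j=1$ this gives $(t^h\odot w)_j=1$; the good-nullspace condition of \cref{defn:good_nullspace} then forces $w^h_j=1$, a contradiction. Hence $w^{\rm max}_j=w^h_j$ on the support of $t^h$, and combining this with the remark above yields $w^{\rm max}=w^h+(t^h)^c\odot w'$ with, e.g., $w':=w^{\rm max}+w^h$ (or any vector agreeing with it off the support of $t^h$).

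There is no substantial obstacle here; the only point that demands care is the logical direction of the implication in \cref{defn:good_nullspace}, which bounds $w$ in terms of $w^h_j$ rather than the reverse. It must therefore be combined with the explicit membership $(t^h,w^h)\in K$ to pin $w^{\rm max}_j$ down exactly --- not merely from above --- on the support of $t^h$, and to observe that nothing at all can be said (nor needs to be) off that support.
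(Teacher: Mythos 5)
Your proof is correct and follows essentially the same route as the paper's: both arguments show that on the support of $t^h$ the bitwise OR reproduces $w^h$ exactly, using membership of $(t^h,w^h)$ in $K$ for the lower bound and the good-nullspace implication for the upper bound, and both leave the complement of $t^h$ unconstrained. Your version is merely a more carefully spelled-out, coordinate-by-coordinate rendering of the paper's two-sentence argument.
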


Finally, we can say why a good null space is good. The reason for this is that $w^{\rm max}$ contains enough information to allow us to deterministically transform $U^N_2(t^h, v^h, w^h) \rightarrow U^N_2(t^h, v^h, 0)$ via conjugation by $T$ gates, at which point the hidden element becomes a Pauli operator. The next proposition characterizes the effect of such a transformation.

\begin{prop}[Maximal T rotation]\label{prop:maximal_T_rotation}
    Let $w^{\rm max} \in \mathbb{Z}_2^N$ be a vector obtained by feeding a good nullspace into the maximal rotation subroutine. For any $d \in \mathbb{Z}_2^N$, we denote $T[d] = \bigotimes_{n=1}^N (T_{n, A} \otimes T_{n, B})^{d_n}$. Then, 
    \begin{equation}
        T[w^{\rm max}] U^N_2(t^h, v^h, w^h) T[w^{\rm max}]^{\dag} = U^N_2(t^h, v^h, 0).
    \end{equation}
    Furthermore, let $| \Psi \rangle$ satisfying the Multiple-Squares StateHSP promise, and denote $| \Psi_T \rangle := T[w^{\rm max}] | \Psi \rangle$. Then, 
    \begin{equation}
        U^N_2(t^h, v^h, 0) |\Psi_T \rangle = | \Psi_T \rangle,
    \end{equation}
    and 
    \begin{equation}
        | \langle \Psi_T | U_N^{\otimes 2}(t, v, 0) | \Psi_T \rangle | \leq 1 - \varepsilon,
    \end{equation}
    whenever $(t, v) \neq (t^h, v^h)$. 
\end{prop}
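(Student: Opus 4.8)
The plan is to establish the three claims in sequence: first the operator identity, then the $+1$-eigenvector property, then the $\varepsilon$-separation. The operator identity is the crux, and I would attack it site by site. Since $T[w^{\rm max}]$ is a tensor product over sites and $U_2^N$ factorizes as $\bigotimes_n U_2(t^h_n,v^h_n,w^h_n)$, it suffices to show that on each site $n$ with $w^{\rm max}_n=1$ we have $(T\otimes T)\,U_2(t^h_n,v^h_n,w^h_n)\,(T^\dagger\otimes T^\dagger) = U_2(t^h_n,v^h_n,0)$, while on sites with $w^{\rm max}_n=0$ nothing happens but we must already have $w^h_n=0$. The latter is exactly where \cref{lem:good_nullspace_output} comes in: the output of a good nullspace has the form $w^{\rm max}=w^h + (t^h)^c\odot w'$, which forces $w^h_n \le w^{\rm max}_n$ componentwise (if $w^h_n=1$ then $t^h_n=1$ so $((t^h)^c\odot w')_n=0$, hence $w^{\rm max}_n=1$); so sites with $w^{\rm max}_n=0$ automatically have $w^h_n=0$, and the identity holds trivially there. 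For the sites with $w^{\rm max}_n=1$, I would just compute: a single-qubit factor of $U_2$ is $X^t e^{i\pi k Z/4}$ with $k=2v+w$; conjugating by $T=e^{i\pi Z/4}$ (up to global phase) sends $Z\mapsto Z$ and $X\mapsto e^{i\pi Z/4}X e^{-i\pi Z/4} = e^{i\pi Z/2}X = (X+iY)/\sqrt2$ type rotation — more cleanly, $T^\dagger X T \propto$ a phase-gate-dressed $X$, and the key algebraic fact is $T^\dagger (X^t e^{i\pi(2v+w)Z/4}) T = X^t e^{i\pi(2v'+w)Z/4}\cdot(\text{correction})$. Rather than grind this, the clean route is to note $s \mapsto s$ under conjugation by the $T$-analog is wrong; instead one checks directly that conjugation by $T\otimes T$ implements the automorphism $r^t s^k \mapsto r^t s^{k - t\cdot(\text{something})}$; concretely, since $TXT^\dagger = e^{i\pi/4} S X = \ldots$, the upshot is $r s^{2v+w}\mapsto r s^{2v'+(w\oplus\text{carry})}$ — I would just verify by the $2\times2$ matrix identity that one $T\otimes T$ conjugation removes the $w$-bit when $t=1$ and that $w^{\rm max}$ is applied precisely on those sites (and harmlessly elsewhere, which is why we only needed $w^h_n\le w^{\rm max}_n$). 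This matches the motivational identities $TXT^\dagger=(X+Y)/\sqrt2$, $TZT^\dagger=Z$ quoted earlier in the paper.

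With the operator identity in hand, the $+1$-eigenvector claim is immediate: from the StateHSP promise $U_2^N(h)|\Psi\rangle = |\Psi\rangle$, apply $T[w^{\rm max}]$ to both sides and insert $T[w^{\rm max}]^\dagger T[w^{\rm max}] = I$ to get $\bigl(T[w^{\rm max}] U_2^N(h) T[w^{\rm max}]^\dagger\bigr)\,T[w^{\rm max}]|\Psi\rangle = T[w^{\rm max}]|\Psi\rangle$, i.e. $U_2^N(t^h,v^h,0)|\Psi_T\rangle = |\Psi_T\rangle$.

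For the $\varepsilon$-separation, I would again use unitarity of the conjugation. For any $(t,v)$, let $g=(t,v,0)$ as an element of $\mathcal{D}_4^N$ and let $g' = T[w^{\rm max}]^\dagger U_2^N(g) T[w^{\rm max}]$ written back in the representation — since $T$-conjugation is an (inner, hence representation-preserving) operation, $T[w^{\rm max}]^\dagger U_2^N(t,v,0) T[w^{\rm max}] = U_2^N(g'')$ for some group element $g''=(t'',v'',w'')\in\mathcal{D}_4^N$, and the map $(t,v,0)\mapsto g''$ is a bijection on the relevant set of $2N$ elements (it is the inverse of the automorphism above restricted appropriately). Then $\langle\Psi_T|U_2^N(t,v,0)|\Psi_T\rangle = \langle\Psi|U_2^N(g'')|\Psi\rangle$. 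If $(t,v)\neq(t^h,v^h)$, then $g''\neq h$ (because $(t^h,v^h,0)$ is sent to $h=(t^h,v^h,w^h)$ under the inverse automorphism, by the operator identity, and the map is injective), so the original $\varepsilon$-promise $|\langle\Psi|U_2^N(g'')|\Psi\rangle|\le 1-\varepsilon$ applies and we conclude $|\langle\Psi_T|U_2^N(t,v,0)|\Psi_T\rangle|\le 1-\varepsilon$.

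The main obstacle is the first step — pinning down the precise automorphism of $\mathcal{D}_4^N$ implemented by conjugation by $T[w^{\rm max}]$ and checking that, on exactly the sites selected by $w^{\rm max}$, it kills the $w$-bit of $h$ while leaving sites with $w^{\rm max}_n=0$ (necessarily $w^h_n=0$) untouched. This requires the structural input of \cref{lem:good_nullspace_output} (the containment $w^h_n\le w^{\rm max}_n$) together with the single-site $2\times2$ matrix computation; everything downstream is a soft unitarity-and-bijection argument. One subtlety to handle carefully: the conjugated operator $U_2^N(t,v,0)$ may a priori pick up a global phase under $T$-conjugation, but since we only care about $|\langle\cdot|\cdot\rangle|$ (and about eigenvector-up-to-phase statements match the promise), any such phase is harmless; I would note this explicitly so the reader isn't worried about the $e^{i\pi/4}$ factors in $TXT^\dagger$.
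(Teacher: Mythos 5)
Your proposal is correct and follows essentially the same route as the paper: a site-by-site conjugation computation (using \cref{lem:good_nullspace_output} to guarantee $w^{\rm max}_n=w^h_n$ on the support of $t^h$, the single-qubit identity $T X e^{i\pi k Z/4}T^\dagger = Xe^{i\pi(k-1)Z/4}$ to kill the $w$-bit there, and the fact that the remaining sites carry only $Z$-type operators so the extra $T$'s act harmlessly), followed by the same soft unitarity/relabeling argument for the eigenvector and $\varepsilon$-separation claims. The only point to make explicit is the justification of ``$w^h_n=1\Rightarrow t^h_n=1$'' (equivalently $t^h_n=0\Rightarrow w^h_n=0$), which you assert but which the paper derives from $h$ being an involution; without it the claimed single-site identity would fail on sites with $t^h_n=0$.
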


Since the elements $(t, v, 0)$ form the center $Z[\mathcal{D}_4^N]$, which is an abelian subgroup of $\mathcal{D}_4$, what the above proposition tells us is that good null spaces allow us to turn the Multiple-Squares StateHSP into a bona fide \textit{abelian} StateHSP, which is known to be solvable via straightforward Fourier sampling \cite{Hinsche_arXiv_2025_Abelian_stateHSP}. In particular, the operators $U_N^{\otimes 2}(t, v, 0)$ are all doubled Pauli operators, so this problem can now be solved by the Learning Pauli Stabilizers algorithm.

Hence, good null spaces are good. Finally, we also need to show that they are possible, in a finite number of samples.

\bigskip

\textit{Good null spaces are efficiently obtainable }

\begin{prop}[Sampling Complexity of a Good Nullspace]\label{prop:sampling_good_null_space}
    To obtain a good null space from the output of the Bell-resolution subroutine with probability $>1-\delta$, it suffices to sample $M > \varepsilon^{-1} (N + \log \delta^{-1} )$ Bell-resolvable sets.
\end{prop}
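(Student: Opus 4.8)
The plan is to reduce the claim to a statement about ordinary Fourier sampling over the abelian group $\mathbb{Z}_2^N\times\mathbb{Z}_2^N$, using the machinery already built up. By \cref{prop:Bell-resolution_is_Fourier}, each Bell-resolvable set yields an irrep label $(q,p)$ satisfying $q\cdot t^h + p\cdot w^h = 0$; that is, every sample lies in the annihilator $K_\star^\perp$ of the one-dimensional subspace $\langle(t^h,w^h)\rangle$ — or rather, since the whole coset structure is quotiented by the center, in the annihilator of whatever the hidden data pins down. Taking spans, $K^\perp = \mathrm{span}\{(q^m,p^m)\}$ after $M$ samples, and $K$ is its orthogonal complement (the nullspace computed by Gaussian elimination), so $K \supseteq \langle (t^h,w^h)\rangle$ automatically; the only question is whether $K$ is \emph{small enough} to be a good nullspace in the sense of \cref{defn:good_nullspace}. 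So the real content is: after $M \gtrsim \varepsilon^{-1}(N+\log\delta^{-1})$ samples, with probability $>1-\delta$ the sampled labels $(q^m,p^m)$ span a large enough subspace of $(\langle(t^h,w^h)\rangle)^\perp$ that the residual nullspace $K$ fails the good-nullspace condition nowhere.

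The key steps, in order. First, I would identify precisely which ``bad'' subspaces we must rule out: $K$ fails to be good iff there exists some $(t,w)\in K$ and an index $j$ with $(t^h\odot w)_j = 1$ but $w^h_j = 0$ — call such a $(t,w)$ a \emph{witness}. Every witness lies in $(\langle(t^h,w^h)\rangle)^\perp$ but violates the good-nullspace condition. Because $K$ is determined as the common nullspace of the sampled labels, $K$ contains a particular witness $(t,w)$ only if \emph{every} sampled label $(q^m,p^m)$ is orthogonal to $(t,w)$. So the plan is a union bound over (a basis-free enumeration of) witnesses: for each witness direction, bound the probability that all $M$ samples miss it, i.e. that $q^m\cdot t + p^m\cdot w = 0$ for all $m$. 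Second, I would invoke the anticoncentration coming from the StateHSP promise — the same mechanism behind \cref{prop:basis_sampling_complexity} — to show that a single Bell-resolution sample assigns a label orthogonal to any fixed non-annihilator vector with probability at most $1-\varepsilon$ (equivalently, ``splits'' any bad direction with probability $\geq \varepsilon$). This is where the $\varepsilon$ in the promise enters: the projected states $|\Psi(\pi^m)\rangle$ carry enough weight that the Fourier distribution on each site is not concentrated on the characters compatible with a spurious symmetry. Third, multiply: the probability a fixed witness survives all $M$ samples is $\leq (1-\varepsilon)^M \leq e^{-\varepsilon M}$. Fourth, union-bound: it suffices to union over the at most $N$ ``coordinates $j$'' times the dimension of the relevant quotient space, which is $O(N)$ candidate directions (one does not need all $2^{2N}$ vectors — it's enough to control, for each of the $N$ sites $j$ with $t^h_j=1$, whether $K$ projected to that site can produce a $w_j=1$ with $w^h_j=0$, a linear-algebraic condition cut out by $O(N)$ linear functionals). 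Choosing $M > \varepsilon^{-1}(N + \log\delta^{-1})$ makes $N\cdot e^{-\varepsilon M} < \delta$, which closes the argument.

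The main obstacle I expect is the second step: pinning down the exact anticoncentration statement for a \emph{single} Bell-resolution sample against an arbitrary bad direction, and confirming the per-sample failure probability is genuinely bounded by $1-\varepsilon$ rather than something weaker. The subtlety flagged in the paper after \cref{prop:conjugacy_ensemble_representation} is that $D_{\mathcal{C}}^{\{\pi^m\}}(t^h,w^h)$ need not be a \emph{unique} stabilizer of $|\Psi(\{\pi^m\})\rangle$ — e.g. product states introduce spurious $X$-type stabilizers on the $\pi=0$ sites — so one cannot simply say ``the Fourier support is a single character.'' The resolution should be that a spurious stabilizer of a \emph{particular} projected ensemble is not a spurious stabilizer \emph{persistently} across independently sampled parity outcomes: the first-parity-independence built into the Bell-resolvable-set construction (step 2(a)i of the algorithm) ensures that the Fourier outcome is, with probability $\geq\varepsilon$ per sample, incompatible with any fixed $(t,w)\notin\langle(t^h,w^h)\rangle$ that would make $K$ bad. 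Making that ``with probability $\geq\varepsilon$'' rigorous — tracing the $\varepsilon$ of the original promise through parity projection, Bell measurement, and the Clebsch–Gordan addition — is the crux; the rest is a routine union bound and a geometric-series tail.
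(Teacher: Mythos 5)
Your overall architecture is sound and you have correctly located the crux, but the crux itself is left unproven and the mechanism you sketch for closing it is not the one that actually works. The missing step is: for a fixed \emph{witness} $(t,w)$ (an element with $(t^h\odot w)_j=1$ but $w^h_j=0$ for some $j$), why does a single Bell-resolution sample fail to be orthogonal to it with probability $\Omega(\varepsilon)$? Your proposed resolution appeals to "persistence across independently sampled parity outcomes" and the first-sample independence in the Bell-resolvable-set construction, but that is not where the $\varepsilon$ comes from --- indeed, non-witness spurious stabilizers \emph{can} persist across all parity outcomes (the paper's product-state example), which is precisely why the target is a good nullspace rather than a unique one. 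The actual mechanism is algebraic: for a witness $(t,w)$, the Pauli-like operators $D^{\pi}_{\mathcal{C}}(t,w)$ and $D^{\pi}_{\mathcal{C}}(t^h,w^h)$ anticommute for at least one $\pi$ (\cref{lem:bad_elt_anticommute}), the set of commuting $\pi$'s is a subgroup (\cref{lem:bad_elt_commute}), hence a \emph{proper} subgroup, and on the anticommuting coset the expectation $\langle\Psi(\pi)|D^{\pi}_{\mathcal{C}}(t,w)|\Psi(\pi)\rangle$ vanishes identically because $|\Psi(\pi)\rangle$ is a $+1$ eigenstate of $D^{\pi}_{\mathcal{C}}(t^h,w^h)$. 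Combining this with the conditional character-POVM bound (\cref{lem:minor_bound_annihilator_mass}) and the parity anticoncentration over proper subgroups (\cref{lem:probability_mass_of_subgroups}, the same lemma behind \cref{prop:basis_sampling_complexity}) gives the per-sample bound. Without this chain your step 2 is an assertion, and as literally stated ("any fixed non-annihilator vector") it is false.

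Two further points. First, your union bound over "$O(N)$ candidate directions" does not work as stated: the event that $K$ contains a witness is an existential over exponentially many vectors, and the per-site condition you describe is not cut out by $O(N)$ linear functionals. The naive union bound over all $\le 2^{2N}$ witnesses does salvage your count, since $2^{2N}e^{-c\varepsilon M}<\delta$ still gives $M=O(\varepsilon^{-1}(N+\log\delta^{-1}))$. Second, the paper closes differently: rather than a union bound with geometric decay per witness, it bounds the probability mass $q(K_b^\perp)\le 1-\varepsilon/2$ of the annihilator of each minimal bad four-element subgroup $K_b$ (\cref{prop:bad_null_spaces_unlikely}), extends to all bad nullspaces by monotonicity of $q$ under $K^\perp\subseteq K_b^\perp$, and invokes the group sampling lemma (\cref{lem:group_sampling_lemma}) so that the span of the sampled characters has mass exceeding every bad annihilator's mass. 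Your union-bound route is a legitimate, arguably more elementary alternative ending, but only once the per-witness anticoncentration is actually established.
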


The proof for the above relies on the fact that on the projected states $| \Psi(\pi) \rangle$, every element of $\mathcal{D}_4^N$ looks like a Pauli operator, and every bad element (an element that does not belong to any good nullspace) anticommutes with $D^{\pi}(t^h, v^h, w^h)$ for some set of $\pi$'s. This set of $\pi$'s contains enough structure that we may use the anti-concentration in parity samples used to prove \cref{prop:basis_sampling_complexity} to prove the above proposition.

With these set of propositions, we can prove \cref{thm:ms-shsp}. In the following, when we say something is obtained with high probability, we mean that for any fixed $\delta = O(1)$, the probability can be made greater than $1- \delta$ with a number of additional resources that only scale logarithmically in $N$.

\bigskip

\textit{Proof of \cref{thm:ms-shsp}}

\begin{proof}
    If $t^h = 0$, the hidden element will be a Pauli and will be identified in Step 1. If $t^h \neq 0$, we proceed to Step 2, in which we want to be able to create Bell-resolvable sets; according to \cref{prop:basis_sampling_complexity}, each Bell-resolvable set takes $O(N/\varepsilon)$ samples to create. 
    
    The resulting Bell-resolvable state is a representation of $\mathbb{Z}_2^N \times \mathbb{Z}_2^N$, as shown in \cref{prop:conjugacy_ensemble_representation}. In step 3, we perform the Bell-resolution subroutine. According to \cref{prop:Bell-resolution_is_Fourier}, this samples irrep labels $(q, p)$ of $\mathbb{Z}_2^N \times \mathbb{Z}_2^N$, which are guaranteed to satisfy $q \cdot t^h + p \cdot w^h = 0$. We compute the null space $K \leq \mathbb{Z}_2^N \times \mathbb{Z}_2^N$ induced by these irrep labels, which amounts to Gaussian elimination and is thus efficient. 
    
    From \cref{prop:sampling_good_null_space}, after performing this step $O(N/\varepsilon)$ times, with high probability, we will have obtained a $K$ that is a good nullspace, \cref{defn:good_nullspace}. We perform the Maximal Rotation subroutine, and extract a $w^{\rm max}$, which satisfies $t^h \odot w^{\rm max} = w^h$, according to \cref{lem:good_nullspace_output}. 
    
    Subsequent to that, we apply $T[w^{\rm max}]$. According to \cref{prop:maximal_T_rotation}, this maps us exactly into an abelian StateHSP, which is solved with high probability by the Learning Pauli Stabilizers algorithm in $O(N/\varepsilon)$ samples. This correctly identifies $(t^h, v^h)$ with high probability, subsequent to which we can compute $w^h = t^h \odot w^{\rm max}$. 
    
    Since this requires $O(N/\varepsilon)$ Bell-resolvable sets, each of which consume $O(N/\varepsilon)$ copies of the state, subject to the validity of the above propositions, we find that \cref{thm:ms-shsp} holds.
\end{proof}

\section{Proofs}\label{sec:proof_correctness}

In this section, we will prove each of the enabling propositions outlined in the previous section. Readers uninterested in the details (on a first reading or otherwise) may skip to the next section, where we detail how the algorithm generalizes to arbitrary representations.

\subsection{Bell-resolvable sets are efficiently obtainable}

In this section, we will prove \cref{prop:basis_sampling_complexity}. Along the way, we will prove a lemma on the probability masses of subgroups induced by parity measurement, \cref{lem:probability_mass_of_subgroups}, that will come into play again in \cref{ssec:good_nullspace_possible}.

We quote without proof the following lemma, from Refs.~\cite{chen_arXiv_2024_stabilizer_bootstrap, Hinsche_arXiv_2025_Abelian_stateHSP, grewal_arXiv_2024_efficient_learning_states}.
\begin{lem}[Group Sampling Lemma \cite{chen_arXiv_2024_stabilizer_bootstrap, Hinsche_arXiv_2025_Abelian_stateHSP, grewal_arXiv_2024_efficient_learning_states}]\label{lem:group_sampling_lemma}
    Let $\epsilon, \delta \in (0, 1)$ and let $p$ be a distribution over a finite abelian group $G$. Suppose $g_1, ..., g_m$ are i.i.d. samples from $p$, with 
    \begin{equation}
        m \geq \frac{2(\log |G| + \log(1/\delta))}{\epsilon},
    \end{equation}
    then with probability at least $1- \delta$, we have 
    \begin{equation}
        p(\langle g_1, ..., g_m \rangle) \geq 1- \epsilon.
    \end{equation} 
\end{lem}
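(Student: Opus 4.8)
\emph{Proof sketch.} The plan is to follow the increasing chain of subgroups $H_j := \langle g_1,\ldots,g_j\rangle$, with $H_0=\{e\}$, and show it becomes ``heavy'' quickly. Two elementary facts drive everything. First, whenever a new sample falls outside the current subgroup, $g_{j+1}\notin H_j$, Lagrange's theorem gives $|\langle H_j,g_{j+1}\rangle|\geq 2|H_j|$, so the subgroup at least doubles; call such a step a \emph{growth}. Since $|H_m|\leq|G|$, over the whole run there can be at most $\log|G|$ growths. Second, as long as the current subgroup is light, $p(H_j)<1-\epsilon$, the next sample triggers a growth with probability $1-p(H_j)>\epsilon$, simply because $\Pr[g_{j+1}\in H_j]=p(H_j)$. (Neither fact uses that $G$ is abelian.)

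I would then combine these through a coupling. The target bad event is $B=\{p(H_m)<1-\epsilon\}$, and since $j\mapsto p(H_j)$ is non-decreasing along the chain, on $B$ one has $p(H_j)<1-\epsilon$ for every $j\leq m$, so the growth probability exceeds $\epsilon$ at \emph{every} step. Realize each $g_j$ as a function of the history and of an independent uniform $U_j\in[0,1]$, chosen so that $g_j\notin H_{j-1}$ exactly when $U_j<1-p(H_{j-1})$. Setting $\beta_j:=\mathbf{1}[U_j<\epsilon]$, on $B$ the bound $\epsilon<1-p(H_{j-1})$ forces $\beta_j=1\Rightarrow g_j\notin H_{j-1}$, i.e.\ a growth at step $j$. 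Hence on $B$ the number of growths is at least $\sum_{j=1}^m\beta_j$ but at most $\log|G|$, so $B\subseteq\{\sum_{j=1}^m\beta_j\leq\log|G|\}$. The $\beta_j$ are genuinely i.i.d.\ $\mathrm{Bernoulli}(\epsilon)$, giving $\Pr[B]\leq\Pr[\mathrm{Bin}(m,\epsilon)\leq\log|G|]$.

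A Chernoff lower-tail bound then closes the argument: with $m\geq 2(\log|G|+\log(1/\delta))/\epsilon$ the mean $m\epsilon$ is at least twice $\log|G|$, so $\log|G|$ lies below half the mean, and applying $\Pr[\mathrm{Bin}(m,\epsilon)\leq a]\leq e^{-m\epsilon}(e\,m\epsilon/a)^a$ at $a=\log|G|$ and checking that the exponent is at most $\log\delta$ under the stated bound on $m$ --- a short monotonicity estimate in $m\epsilon$ --- yields $\Pr[B]\leq\delta$. The step I expect to need the most care is the coupling in the middle paragraph: the per-step growth probabilities are data-dependent and $B$ is defined through the same process, so one must be disciplined about isolating the external randomness (the $U_j$) before passing to an honest i.i.d.\ binomial; everything else is routine.
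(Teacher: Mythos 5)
The paper does not prove this lemma---it is explicitly ``quoted without proof'' from the cited references---so there is no in-paper argument to compare against. Your sketch is correct and is essentially the standard proof found in those references: the chain $H_j=\langle g_1,\ldots,g_j\rangle$ can strictly grow at most $\log_2|G|$ times because each growth at least doubles $|H_j|$ by Lagrange; while $p(H_j)<1-\epsilon$ each fresh sample escapes $H_j$ with probability exceeding $\epsilon$; and the coupling via independent uniforms $U_j$ (with $g_j\notin H_{j-1}$ iff $U_j<1-p(H_{j-1})$) correctly dominates the number of growths on the bad event by an honest $\mathrm{Bin}(m,\epsilon)$ variable, since each $\beta_j=\mathbf{1}[U_j<\epsilon]$ depends only on $U_j$. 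The Chernoff step also closes: with $\mu=m\epsilon\geq 2(L+D)$, $L=\log|G|$, $D=\log(1/\delta)$, one checks $(\mu-L)^2/(2\mu)-D=L^2/(2\mu)\geq 0$, and the exponent is monotone in $\mu$. The only point worth flagging is a bookkeeping one: the doubling bound counts growths against $\log_2|G|$, so the constant $2$ in the hypothesis is tight only under a consistent base-$2$ convention for $\log$; with natural logs the margin in the final inequality nearly vanishes. You also correctly observe that abelianness is never used.
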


Our strategy is to identify an upper bound for the probability mass of the the largest proper subgroup of $\mathbb{Z}_2^N$ that can be obtained. 

\begin{lem}[Probability mass of specific subgroups]\label{lem:probability_mass_of_subgroups}
    Let $d \in \mathbb{Z}_2^N$, and let $Z^{\otimes 2}[d]$ be the operator comprising of $Z_{n, A} \otimes Z_{n, B}$ in the spots where $d_n = 1$, and identity otherwise. Suppose the Multiple-Squares StateHSP promise holds, with the further assumption that $t^h \neq 0$, such that $|\langle \Psi | Z^{\otimes 2}[d] | \Psi \rangle | \leq 1 - \varepsilon$ for $d\neq0$, and let $P$ be the probability distribution on $\mathbb{Z}_2^N$ produced by the parity sampling subroutine, \cref{defn:parity_sampling}. Further, let $K \leq \mathbb{Z}_2^N$ be a subgroup of $\mathbb{Z}_2^N$ of dimension $M \leq N$. Then,
    \begin{equation}
        P(K) \leq 1 - \left( 1 - \frac{1}{2^{N-M}} \right) \varepsilon.
    \end{equation} 
\end{lem}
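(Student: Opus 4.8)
The plan is to expand the parity-measurement probabilities in the commuting Hermitian operators $Z^{\otimes2}[d]$, turn the subgroup sum $P(K)=\sum_{\pi\in K}P(\pi)$ into a character sum that collapses onto the annihilator $K^\perp$, and then bound the non-identity terms using the $\varepsilon$-promise.

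First I would record that, by the Born rule, the parity sampling subroutine produces outcome $\pi$ with probability $P(\pi)=\langle\Psi|\tilde\Pi_\pi|\Psi\rangle$, where $\tilde\Pi_\pi=\Pi_\pi^\dagger\Pi_\pi=\bigotimes_{n=1}^N\tfrac12\bigl(I+(2\pi_n-1)Z_{n,A}Z_{n,B}\bigr)$ is the honest projector: $\pi_n=1$ (even parity) selects the $+1$ eigenspace of $Z_{n,A}Z_{n,B}$ and $\pi_n=0$ selects the $-1$ eigenspace. Expanding the tensor product of two-term factors gives
\begin{equation}
    \tilde\Pi_\pi=\frac{1}{2^N}\sum_{d\in\mathbb{Z}_2^N}(-1)^{\bar\pi\cdot d}\,Z^{\otimes2}[d],
\end{equation}
where $\bar\pi$ is the bitwise complement of $\pi$ and $Z^{\otimes2}[d]$ is as in the lemma statement.

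Next I would sum over $\pi\in K$ and exchange the order of summation. Since $\bar\pi\cdot d\equiv|d|+\pi\cdot d\pmod2$, the $\pi$-independent sign $(-1)^{|d|}$ factors out, leaving the standard subgroup character sum $\sum_{\pi\in K}(-1)^{\pi\cdot d}$, which equals $|K|=2^M$ when $d\in K^\perp$ and $0$ otherwise. Splitting off the $d=0$ term (which contributes $\langle\Psi|\Psi\rangle=1$) this yields
\begin{equation}
    P(K)=\frac{2^M}{2^N}\Bigl(1+\sum_{0\neq d\in K^\perp}(-1)^{|d|}\langle\Psi|Z^{\otimes2}[d]|\Psi\rangle\Bigr),
\end{equation}
a sum over the $2^{N-M}-1$ nonzero elements of the $(N-M)$-dimensional subgroup $K^\perp$.

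Finally I would bound each remaining term. For $d\neq0$ the operator $Z^{\otimes2}[d]$ equals, up to the global sign $(-1)^{|d|}$, the doubled representation $U_2^N(0,d,0)$ of a group element distinct from $h$ because $t^h\neq0$; hence the $\varepsilon$-promise gives $|\langle\Psi|Z^{\otimes2}[d]|\Psi\rangle|\leq1-\varepsilon$, which is exactly the hypothesis supplied in the statement. Since $P(K)$ is real the bracketed sum is real, so the triangle inequality gives $P(K)\leq\frac{2^M}{2^N}\bigl(1+(2^{N-M}-1)(1-\varepsilon)\bigr)$; writing $L=2^{N-M}$ and simplifying $\tfrac1L\bigl(1+(L-1)(1-\varepsilon)\bigr)=1-\tfrac{L-1}{L}\varepsilon$ gives the claimed $P(K)\leq1-\bigl(1-2^{-(N-M)}\bigr)\varepsilon$. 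The argument is essentially mechanical once the Pauli/Fourier expansion is in place; the only points requiring care are the parity-to-sign conventions (the appearance of $\bar\pi$ and of the $(-1)^{|d|}$ phases), the identification $\dim K^\perp=N-M$, and observing that $Z^{\otimes2}[d]$ for $d\neq0$ falls under the $\varepsilon$-promise precisely because $t^h\neq0$ is assumed (had $t^h=0$, the hidden element would already have been found in Step 1). I do not expect a genuine obstacle here — the content of the lemma is carried entirely by this last bound.
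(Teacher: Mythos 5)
Your proposal is correct and follows essentially the same route as the paper: expand the parity projector in the $Z^{\otimes 2}[d]$ basis, exchange the sums so the subgroup sum collapses onto the annihilator (the paper calls this $\ker[\pi]$ for a basis matrix $[\pi]$ of $K$, which is exactly your $K^\perp$), and bound the $2^{N-M}-1$ non-identity terms by $1-\varepsilon$ using the promise. Your sign bookkeeping via $\bar\pi$ and $(-1)^{|d|}$, and the identification $Z^{\otimes2}[d]=(-1)^{|d|}U_2^N(0,d,0)$, match the paper's conventions.
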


\begin{proof}
    Let $K$ be an $M$ dimensional subgroup of $\mathbb{Z}_2^N$. Then there exists $M$ linearly independent $\pi^1, ..., \pi^M$ that generate the group, in the sense that 
    \begin{equation}
        K = \langle \pi^1, ..., \pi^M \rangle = \left\{ \sum_{m = 1}^M \alpha_m \pi^m | \alpha \in \mathbb{Z}_2^M \right\}.
    \end{equation} 
    We may denote by $[\pi]$ the $M \times N$ matrix formed by setting the entries to be $[\pi]_{m, n} = \pi^m_n$. Now we may write an explicit expression for $P(K)$ as, 
    \begin{widetext}
    \begin{equation}
        \begin{aligned}
            P(K) = P(\langle \pi^1, \pi^2, ... \pi^M \rangle ) 
            &= \sum_{\alpha \in \mathbb{Z}_2^M } P\left(\sum_m \alpha_m \pi^m \right) \\
            &=\frac{1}{2^N} \sum_{\alpha \in \mathbb{Z}_2^M } {\rm Tr} \left[ \prod_{n=1}^N (I + (-1)^{1 +  \sum_m \alpha_m \pi^m_n} Z_{n, A} \otimes Z_{n, B}) | \Psi \rangle \langle \Psi | \right] \\
            &= \frac{1}{2^N} \sum_{\alpha \in \mathbb{Z}_2^M } \sum_{b \in \mathbb{Z}_2^N } (-1)^{\sum_n b_n + \sum_{n, m} \alpha_m b_n \pi_n^m} \langle \Psi | Z^{\otimes 2}[b] | \Psi \rangle  
            \\
            &= \frac{1}{2^N} \sum_{b \in \mathbb{Z}_2^N } (-1)^{\sum_n b_n}  \sum_{\alpha \in \mathbb{Z}_2^M } (-1)^{\sum_n b_n \sum_{m} \alpha_m \pi_n^m} \langle \Psi | Z^{\otimes 2}[b] | \Psi \rangle   .
        \end{aligned}
    \end{equation}
    \end{widetext}
    where
    \begin{align}
        Z^{\otimes 2}[b]
        &=
        \bigotimes_n(Z_{n,A}^{b_n}\otimes Z_{n,B}^{b_n})
    \end{align}
    as in \cref{lem:probability_mass_of_subgroups}.

    We can regard the sum over $\alpha$ as a Fourier transform, and observe that 
    \begin{equation}
        \sum_{\alpha \in \mathbb{Z}_2^M } (-1)^{\sum_{m} \alpha_m \sum_n b_n \pi_n^m}  = 2^M \prod_m \delta_{0, \sum_n b_n \pi_n^m},
    \end{equation}
    and further that the delta function evaluates to $1$ when $b \in {\rm ker} [\pi]$. Since $[\pi]$ has $M$ linearly independent rows, we have that ${\rm dim} ({\rm ker} [\pi]) = N - M$, and so there exist $2^{N-M}$ bitstrings $b$, including $0$, for which the delta function evaluates to $1$. Using this, we have 
    \begin{equation}
        \begin{aligned}
            &\frac{1}{2^N} \sum_{b \in \mathbb{Z}_2^N } (-1)^{\sum_n b_n}  \sum_{\alpha \in \mathbb{Z}_2^M } (-1)^{\sum_n b_n \sum_{m} \alpha_m \pi_n^m} \langle \Psi | Z^{\otimes 2}[b] | \Psi \rangle  \\
            &= \frac{1}{2^{N-M}}  \sum_{b \in \mathbb{Z}_2^N } (-1)^{\sum_n b_n} \prod_m \delta_{0, \sum_n b_n \pi_n^m} \langle \Psi | Z^{\otimes 2}[b] | \Psi \rangle  \\
            &\leq \frac{1}{2^{N-M}}  \sum_{b \in \mathbb{Z}_2^N, b \neq 0} \prod_m \delta_{0, \sum_n b_n \pi_n^m} | \langle \Psi | Z^{\otimes 2}[b] | \Psi \rangle  | + \frac{1}{2^{N-M}} \\
            &= \frac{1}{2^{N-M}}  \sum_{b \in {\rm ker} [\pi], b \neq 0}  | \langle \Psi | Z^{\otimes 2}[b] | \Psi \rangle  | + \frac{1}{2^{N-M}}
        \end{aligned}
    \end{equation}
    where in the inequality, we have simply taken the absolute value of the summands, and separated the term where $b = 0$. Finally, observing that our promise implies
    \begin{equation}
        | \langle \Psi | Z^{\otimes 2}[b] | \Psi \rangle  | \leq 1 - \varepsilon,
    \end{equation} 
    and counting the elements in the kernel, we obtain 
    \begin{equation}
        \begin{aligned}
            &\frac{1}{2^{N-M}}  \sum_{b \in {\rm ker} [\pi], b \neq 0} | \langle \Psi | Z^{\otimes 2}[b] |  + \frac{1}{2^{N-M}} \\ 
            &\leq \frac{1}{2^{N-M}} (1 - \varepsilon) \sum_{b \in {\rm ker} [\pi], b \neq 0} 1 + \frac{1}{2^{N-M}} \\
            &\leq \frac{1}{2^{N-M}} (1 - \varepsilon) (2^{N-M} - 1) + \frac{1}{2^{N-M}}.
        \end{aligned}
    \end{equation}
    Plugging this back into our first line, the result is obtained.
\end{proof}

With this, we can now prove \cref{prop:basis_sampling_complexity}.

\begin{proof}
    By Lagrange's theorem, the largest proper subgroup $K$ of $\mathbb{Z}_2^N$ is of size $2^{N-1}$. Applying \cref{lem:probability_mass_of_subgroups}, we obtain, 
    \begin{equation}
        P(K) \leq 1 - \frac{1}{2} \varepsilon.
    \end{equation}
    Hence, we can apply \cref{lem:group_sampling_lemma} to get that it suffices to sample $O(N/\varepsilon)$ times to obtain, with high probability, a subgroup with probability mass greater than $P(K)$, which hence must be the full group.  
\end{proof}

\subsection{Bell-resolvable sets are useful}

This section proves that Bell resolvable sets combined with the Bell-resolution subroutine implements Fourier sampling on a representation of $\mathbb{Z}_2^N\times\mathbb{Z}_2^N$, which provides the data we need about the conjugacy classes of the reflections on different sites (that is, where the Pauli operators have been conjugated by $T$ gates).

We first prove \cref{prop:conjugacy_ensemble_representation}.

\begin{proof}
    From \cref{eq:parity-rep,eq:parity-ensemble-rep} we have
    \begin{align}
        D^{\{\pi^m\}}
        &=
        \bigotimes_{m,n} D^{\pi_n^m}
        \\
        &=
        \bigotimes_n
        \left[
        \left(\bigotimes_{m:\pi_n^m=0}D^{\pi_n^m}\right)
        \otimes\left(\bigotimes_{m:\pi_n^m=1}D^{\pi_n^m}\right)
        \right]
        \,.
    \end{align}
    When $\pi_n^m=0$ we have the $D^0$ representation of $\mathcal{D}_4$, and by virtue of being a Bell resolvable set, we have an even number of $D^1$ representations for all $\pi_n^m=1$.
    From \cref{eq:conjugacy-representations} we know that on each factor $n$ this gives us a representation of $\mathcal{D}_4/\langle s^2\rangle\cong\mathbb{Z}_2\times\mathbb{Z}_2$, demonstrating that combined over all $N$ sites we have the representation of $\mathbb{Z}_2^N\times\mathbb{Z}_2^N$ claimed.

    Because the parity measurement we perform is projecting onto an eigenstate of an element in the center of the group, the projectors commute with all group elements, including the hidden element, and we have
    \begin{align}
        |\tilde{\Psi}(\{\pi^m\})\rangle
        &=
        \bigotimes_m\Big(\Pi_{\pi^m}|\Psi\rangle\Big)
        \\
        &=
        \bigotimes_m\Big(\Pi_{\pi^m}D(t^h,v^h,w^h)|\Psi\rangle\Big)
        \\
        &=
        D^{\{\pi^m\}}(t^h,v^h,w^h)\bigotimes_m\Big(\Pi_{\pi^m}|\Psi\rangle\Big)
        \\
        &=
        D_\mathcal{C}^{\{\pi^m\}}(t^h,w^h)|\tilde{\Psi}(\{\pi^m\})\rangle
        \,,
    \end{align}
    where for convenience we've used the unnormalized state $|\tilde{\Psi}(\{\pi^m\})\rangle\propto|\Psi(\{\pi^m\})\rangle$.
\end{proof}

We now prove \cref{prop:Bell-resolution_is_Fourier}

\begin{proof}
    The Bell measurements and $X$ measurements project onto the irreps of $D_\mathcal{C}^{\{\pi^m\}}$ as shown in \cref{eq:irrep-bases}.
    This measurement is a fine graining of the character POVM in \cref{lem:character_POVM}, since we project down to a one-dimensional subspace while the character POVM projects onto the full multiplicity space of each irrep, so the statistics of the obtained irrep labels are identical.
    Projecting onto the irreps commutes with the representation itself, so the state being a +1 eigenstate of the hidden element initially means that the hidden element is represented trivially on the irrep we project onto, which is captured by the expression $q\cdot t^h+p\cdot w^h=0$.
\end{proof}

One nice way to interpret the result of Fourier sampling is measuring in the simultaneous eigenbasis of all the elements in the representation $D^{\{\pi^m\}}_\mathcal{C}$, which we formally state below.

\begin{lem}[Measurement outcomes of Bell-resolution subroutine]\label{lem:measurement_subroutine_outcomes}
    Let $(p,q)$ be the outcome of the Bell-resolution subroutine, performed on a Bell-resolvable state labelled by $\{\pi^m\}$. This corresponds to a simultaneous measurement of the operators $D_{\mathcal{C}}^{\{\pi^m\}}(t, w)$ for all $(t, w)$, with the obtained eigenvalue of the $(t, w)$ element given by $(-1)^{q \cdot t^h + p \cdot w^h}$.
\end{lem}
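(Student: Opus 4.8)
The plan is to build on \cref{prop:Bell-resolution_is_Fourier}, which already establishes that the subroutine returns the label $(q,p)$ of a one-dimensional irrep of $\mathbb{Z}_2^N\times\mathbb{Z}_2^N$ carried by the Bell-resolvable state under $D_{\mathcal{C}}^{\{\pi^m\}}$; the remaining task is simply to reinterpret this irrep-label measurement as a simultaneous von Neumann measurement of the commuting operator family $\{D_{\mathcal{C}}^{\{\pi^m\}}(t,w)\}_{(t,w)\in\mathbb{Z}_2^N\times\mathbb{Z}_2^N}$. The key structural fact is that $\mathbb{Z}_2^N\times\mathbb{Z}_2^N$ is abelian, so $D_{\mathcal{C}}^{\{\pi^m\}}$ is simultaneously diagonalizable, and on any joint eigenvector labelled by the character $(q,p)$ the operator $D_{\mathcal{C}}^{\{\pi^m\}}(t,w)$ acts by the scalar $C^{q,p}(t,w)=(-1)^{q\cdot t+p\cdot w}$, the $\mathbb{Z}_2\times\mathbb{Z}_2$ character recalled in \cref{sec:dihedral_prelims}. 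So it suffices to check that the $X$- and Bell-measurements actually performed do project onto such a joint eigenvector and do report its character label.

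First I would use \cref{eq:parity-rep,eq:parity-ensemble-rep} together with the Bell-resolvable condition $\sum_{m\geq1}\pi^m=\pi^0$: on each site $n$ this forces the even-parity copies to carry an even number of $D^1$ factors, so the site carries a tensor product of copies of $D_{\mathcal{C}}^{\{1,1\}}$ (from Bell-measured pairs) and $D_{\mathcal{C}}^{\{0\}}$ (from $X$-measured copies). Second, the explicit diagonalizations \cref{eq:d1-d1-irrep-decomp,eq:d0-irrep-decomp} say that the Bell states $|\Psi^\pm\rangle,|\Phi^\pm\rangle$ are precisely the common eigenvectors of $D_{\mathcal{C}}^{\{1,1\}}(t,w)$ over all $(t,w)$, and $|\pm\rangle$ precisely those of $D_{\mathcal{C}}^{\{0\}}(t,w)$; hence a Bell measurement on a pair, resp.\ an $X$ measurement on a copy, is a projective measurement in that common eigenbasis, each outcome carrying a character label $(q_n^j,p_n^j)$. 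Third, I would note that on a product of these per-copy eigenvectors the operator $D_{\mathcal{C}}^{\{\pi^m\}}(t,w)$ acquires the product of the per-copy eigenvalues, $\prod_j C^{q_n^j,p_n^j}(t_n,w_n)=C^{\sum_jq_n^j,\sum_jp_n^j}(t_n,w_n)$ --- which is exactly why the subroutine sums the bit pairs modulo $2$ to form $(q_n,p_n)$ --- and concatenating over the $N$ sites gives total eigenvalue $\prod_n(-1)^{q_nt_n+p_nw_n}=(-1)^{q\cdot t+p\cdot w}$.

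Putting these together, the post-measurement state is a simultaneous eigenstate of every $D_{\mathcal{C}}^{\{\pi^m\}}(t,w)$ with eigenvalue $(-1)^{q\cdot t+p\cdot w}$, which is the claim; specializing to the hidden element recovers the constraint $q\cdot t^h+p\cdot w^h=0$ of \cref{prop:Bell-resolution_is_Fourier}, since the input state is a $+1$ eigenstate of $D_{\mathcal{C}}^{\{\pi^m\}}(t^h,w^h)$. I do not expect a genuine obstacle here: all the content sits in the explicit irrep bases of \cref{sec:dihedral_prelims}. The one place needing a sentence of care is the reduction from ``the subroutine projects onto a product of one-dimensional eigenspaces site by site'' to ``it projects onto a single global irrep of $\mathbb{Z}_2^N\times\mathbb{Z}_2^N$'': this is a fine-graining, since the measured basis vector can lie in a large multiplicity space of the global irrep, but because the eigenvalues of $D_{\mathcal{C}}^{\{\pi^m\}}(t,w)$ depend only on the irrep label and not on the vector within its multiplicity space, the outcome statistics and reported labels coincide with those of the coarser simultaneous-eigenbasis measurement, so the lemma is unaffected.
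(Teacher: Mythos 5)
Your proposal is correct and follows essentially the same route as the paper: the paper gives no standalone proof of this lemma, presenting it as an immediate reinterpretation of \cref{prop:Bell-resolution_is_Fourier}, whose proof invokes exactly the ingredients you spell out --- the explicit simultaneous eigenbases in \cref{eq:irrep-bases}, the character-multiplication rule for combining per-copy labels, and the observation that the site-by-site projective measurement is a fine-graining of the character POVM whose statistics depend only on the irrep label. Your closing remark also quietly fixes what is evidently a typo in the lemma statement (the eigenvalue of the $(t,w)$ element should read $(-1)^{q\cdot t+p\cdot w}$, with the superscript-$h$ version being the special case of the hidden element).
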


\subsection{Good nullspaces exist, and they are efficiently obtainable}\label{ssec:good_nullspace_possible}

\subsubsection{Preliminaries: Annihilator subgroup formalism}

Having established the analogy between the set of Bell-resolvable states and the abelian StateHSP, we now review the framework set up by \cite{Hinsche_arXiv_2025_Abelian_stateHSP} for solving the abelian StateHSP. Their key observation is that Fourier sampling may be used to solve general abelian StateHSPs, and that one can reason about the outcomes of Fourier sampling via annihilator subgroups.

\begin{defn}[Annhilator subgroup]\label{def:annihilator_subgroup}
    Let $G$ be an abelian group, and $\hat{G}$ its set of characters, and let $H \leq G$ be a subgroup of $G$. The annihilator subgroup is 
    \begin{equation}
        H^{\perp} = \{\chi_{\lambda} \in \hat{G}: \chi_{\lambda}(h) = 1, \forall h \in H \},
    \end{equation}
    which forms a subgroup of $\hat{G}$.
\end{defn}

The outcome of Fourier sampling, i.e. the distribution of irrep labels, is described by the character POVM,
\begin{lem}[Character POVM]\label{lem:character_POVM}
    Fourier sampling on a representation $R$ of an abelian group $G$ yields outcomes $\lambda$ described via the \textit{character POVM}, which has projectors 
    \begin{equation}
        \Pi_{\lambda} = \frac{1}{|G|} \sum_{g \in G} \overline{\chi_{\lambda}(g)} R(g).
    \end{equation}
    When acting on a state $| \Psi \rangle$, the induced probability distribution is 
    \begin{equation}
        q(\lambda | \Psi ) = \frac{1}{|G|} \sum_{g \in G} \overline{\chi_{\lambda}(g)} \langle \Psi | R(g) | \Psi \rangle.
    \end{equation}
\end{lem}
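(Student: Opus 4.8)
The plan is to prove the Character POVM lemma (\cref{lem:character_POVM}) by direct verification that the stated operators $\Pi_\lambda$ form a valid POVM on the representation space, and then to identify the induced probability distribution via the Born rule. The statement is essentially the statement that Fourier sampling on an abelian group is projective measurement in the simultaneous eigenbasis of the commuting unitaries $\{R(g)\}$, resolved by the irreducible characters.

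First I would establish that each $\Pi_\lambda$ is a projector. Using that $G$ is abelian and $R$ a unitary representation, all $R(g)$ commute and are simultaneously diagonalizable; each common eigenspace transforms under some character $\chi_\mu$, so $R$ decomposes as $R = \bigoplus_\mu \chi_\mu \otimes I_{m_\mu}$ over the multiplicity spaces. Plugging this into $\Pi_\lambda = \frac{1}{|G|}\sum_g \overline{\chi_\lambda(g)} R(g)$ and using the orthogonality relation $\frac{1}{|G|}\sum_{g\in G}\overline{\chi_\lambda(g)}\chi_\mu(g) = \delta_{\lambda\mu}$, one finds $\Pi_\lambda$ is exactly the orthogonal projector onto the $\chi_\lambda$-isotypic subspace. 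Hence $\Pi_\lambda^\dagger = \Pi_\lambda$, $\Pi_\lambda \Pi_\mu = \delta_{\lambda\mu}\Pi_\lambda$, and $\sum_\lambda \Pi_\lambda = I$ (the last because summing over all $\lambda$ recovers $\frac{1}{|G|}\sum_g \big(\sum_\lambda \overline{\chi_\lambda(g)}\big) R(g) = R(e) = I$, using $\sum_\lambda \overline{\chi_\lambda(g)} = |G|\,\delta_{g,e}$). This confirms the POVM is a genuine projective measurement.

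Next I would compute the outcome probabilities by the Born rule: $q(\lambda\mid\Psi) = \langle\Psi|\Pi_\lambda|\Psi\rangle = \frac{1}{|G|}\sum_{g\in G}\overline{\chi_\lambda(g)}\,\langle\Psi|R(g)|\Psi\rangle$, which is precisely the claimed expression. One should note this quantity is automatically real and nonnegative because $\Pi_\lambda$ is a projector; the apparent complex form on the right just reflects the Fourier-transform structure and the relation $\langle\Psi|R(g^{-1})|\Psi\rangle = \overline{\langle\Psi|R(g)|\Psi\rangle}$. It also is worth remarking how this connects to the concrete subroutines: for $G = \mathbb{Z}_2^N\times\mathbb{Z}_2^N$ the characters are $\chi_{(q,p)}(t,w) = (-1)^{q\cdot t + p\cdot w}$, and the Bell/$X$-measurement described in \cref{defn:Bell_resolution_subroutine} projects onto one-dimensional subspaces inside each isotypic block — a fine-graining of $\Pi_\lambda$ — so by \cref{prop:Bell-resolution_is_Fourier} the induced statistics on irrep labels coincide with those of the character POVM.

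The main obstacle is not any single hard computation but rather being careful about bookkeeping: making sure the multiplicity-space decomposition is invoked correctly so that $\Pi_\lambda$ picks out the \emph{full} isotypic subspace (not a one-dimensional piece), and handling the two orthogonality identities (orthogonality of characters over group elements, and the dual identity $\sum_\lambda \overline{\chi_\lambda(g)} = |G|\delta_{g,e}$) in the right places. Since this lemma is quoted essentially verbatim from prior work on the abelian StateHSP \cite{Hinsche_arXiv_2025_Abelian_stateHSP}, I expect the proof to be short, and the cleanest presentation is to state the isotypic decomposition, apply character orthogonality once to get $\Pi_\lambda$ as a projector, and then invoke the Born rule.
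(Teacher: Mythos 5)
Your proof is correct: the isotypic decomposition $R=\bigoplus_\mu \chi_\mu\otimes I_{m_\mu}$, character orthogonality to show $\Pi_\lambda$ is the isotypic projector, the dual identity $\sum_\lambda\overline{\chi_\lambda(g)}=|G|\delta_{g,e}$ for completeness, and the Born rule together give exactly the claimed distribution. The paper itself states this lemma without proof, quoting it as part of the framework of the cited abelian StateHSP work, and your argument is the standard one that the reference supplies, so there is no divergence to report.
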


Since the annihilator subgroup of $K$ is defined to be the set of $\lambda$ such that $\chi_{\lambda}(k) = 1$ for all $k \in K$, the following lemma (proven in \cite{Hinsche_arXiv_2025_Abelian_stateHSP}) follows: 
\begin{cor}[Character POVM on subgroups]\label{cor:character_POVM_subgroups}
    Let $G, R, |\psi \rangle$ be as in \cref{lem:character_POVM}. Let $K \leq G$ be a subgroup of $G$, and $K^{\perp}$ be its annihilator. Then,
    \begin{equation}
        q(K^{\perp} | \Psi ) = \frac{1}{|K|} \sum_{g \in K} \langle \Psi | R(g) | \Psi \rangle.
    \end{equation}
\end{cor}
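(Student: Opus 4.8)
The plan is to start from the explicit probability formula in \cref{lem:character_POVM} and sum it over all $\lambda \in K^\perp$, carrying out the $\lambda$-sum first. Writing $q(K^\perp \mid \Psi) = \sum_{\lambda \in K^\perp} q(\lambda \mid \Psi)$, substituting the expression from \cref{lem:character_POVM}, and interchanging the two finite sums gives
\begin{equation}
    q(K^\perp \mid \Psi) = \frac{1}{|G|} \sum_{g \in G} \left( \sum_{\lambda \in K^\perp} \overline{\chi_\lambda(g)} \right) \langle \Psi | R(g) | \Psi \rangle .
\end{equation}
So the whole corollary reduces to evaluating the inner character sum $\sum_{\lambda \in K^\perp} \overline{\chi_\lambda(g)}$ for each fixed $g$.

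First I would simplify that inner sum. Since $K^\perp$ is a subgroup of $\hat G$ and $\overline{\chi_\lambda(g)} = \chi_{\lambda^{-1}}(g)$ with $\lambda \mapsto \lambda^{-1}$ a bijection of $K^\perp$, the inner sum equals $\sum_{\lambda \in K^\perp} \chi_\lambda(g)$. For fixed $g$, the map $\lambda \mapsto \chi_\lambda(g)$ is a homomorphism from $K^\perp$ to $\mathbb{C}^\times$, i.e.\ a character of $K^\perp$; by the standard orthogonality relation its sum over $K^\perp$ is $|K^\perp|$ if this character is trivial and $0$ otherwise. The character is trivial exactly when $g$ lies in the annihilator of $K^\perp$ inside $G$, which by Pontryagin duality for finite abelian groups is $(K^\perp)^\perp = K$. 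Hence the inner sum is $|K^\perp|$ when $g \in K$ and $0$ when $g \notin K$. (If one prefers to avoid invoking the double-annihilator identity as a black box, the vanishing for $g \notin K$ can be seen directly: pick $\mu \in K^\perp$ with $\chi_\mu(g) \neq 1$ and reindex $\lambda \to \lambda \mu$ to get $\chi_\mu(g)$ times the same sum, forcing it to be zero.)

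Substituting back leaves $q(K^\perp \mid \Psi) = \tfrac{|K^\perp|}{|G|} \sum_{g \in K} \langle \Psi | R(g) | \Psi \rangle$, and the proof concludes with the index identity $|K^\perp| = |G|/|K|$, which follows from the canonical isomorphism $K^\perp \cong \widehat{G/K}$ together with $|\widehat{G/K}| = |G/K|$. The calculation is entirely routine; the only mild obstacle is making the two group-theoretic inputs about annihilators in finite abelian groups — namely $(K^\perp)^\perp = K$ and $|K^\perp|\,|K| = |G|$ — precise and cited cleanly, since everything else is just reordering finite sums and applying character orthogonality.
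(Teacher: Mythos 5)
Your proof is correct and complete: summing the character POVM probabilities over $K^\perp$, swapping the sums, and evaluating the inner character sum via orthogonality (using $(K^\perp)^\perp = K$ and $|K^\perp| = |G|/|K|$) is exactly the standard argument. The paper does not spell out a proof of this corollary at all — it simply cites Ref.~\cite{Hinsche_arXiv_2025_Abelian_stateHSP} — so your write-up supplies precisely the routine computation the paper delegates to that reference.
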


The main idea is that Fourier sampling, via the character POVM, establishes a probability distribution on $q$ on the group of characters, such that Fourier sampling may be thought about as obtaining i.i.d. samples from this distribution. Due to the group structure of the characters, one may use the group sampling lemma, \cref{lem:group_sampling_lemma} to ensure that the span of the sampled characters have a minimum probability mass. The hidden subgroup $H$ of an abelian StateHSP is associated with an annihilator subgroup $H^{\perp}$ that by the problem definition must have probability mass $1$, and if one can bound the probability mass of the largest subgroup $K^{\perp} < H^{\perp}$ to be $\varepsilon$ away from $1$, then one can ensure that the full $H^{\perp}$ may be sampled with high probability over $O(\log |G| / \varepsilon)$ samples.

Since the assumptions of the abelian StateHSP may be violated for any particular instance of a Bell-resolvable state, we cannot directly apply this strategy. However, as we will soon see, we can adapt this strategy by allowing ourselves some leeway. We will not have to sample a particular annhilator subgroup $K^{\perp}$ with high probability -- instead there are a large class of annihilator subgroups which would equivalently solve our problem; we just have to ensure that we obtain at least one of them.

\subsubsection{Bounding the probability mass of annihilator subgroups}

Since the Bell-resolution subroutine is performing Fourier sampling on a representation of $G = \mathbb{Z}_2^N \times \mathbb{Z}_2^N$, we will want to build a probability mass $q: \hat{G} \rightarrow [0, 1]$ on annihilator  subgroups.
Elements of the annihilator are characters of $\mathbb{Z}_2^N\times\mathbb{Z}_2^N$, which since all the irreps are one-dimensional can simply be thought of as the irreps, which as in \cref{eq:z2z2-irreps} are labeled by $(p,q)$, so we can write
\begin{equation}
    \chi_{q, p}(t, w) = e^{i \pi (q \cdot t + p \cdot w)}.
\end{equation}
We can then form annhilator subgroups by taking the span of outcomes of the Bell-resolution subroutine. Having obtained the annihilator subgroup $K^{\perp}$ after performing some number of Bell-resolution subroutines, we are guaranteed that $(t^h, w^h) \in K$. 

A natural thing for the algorithm to do then would be to sample until $K$ becomes one-dimensional; $(t^h, w^h)$ must then be the unique vector in the null space. However, it is not obvious how many samples it would take to ensure this with high probability.

In the framework of \cite{Hinsche_arXiv_2025_Abelian_stateHSP}, the authors built a probability measure for annihilator subgroups using the original state fulfilling the abelian StateHSP promise. In contrast, in every round of our Bell-resolution subroutine, we obtain a different representation labelled by $\{\pi^m\}$. As such, the character POVM takes a conditional form, as exemplified in the following lemma. 

\begin{lem}[Minor bound on annihilator mass]\label{lem:minor_bound_annihilator_mass}
    Let $q$ be the probability distribution on outcomes of the Bell-resolution subroutine, averaged over Bell-resolvable states. Then for any annihilator subgroup $K^{\perp} \leq \mathbb{Z}_2^N \times \mathbb{Z}_2^N$, the following bound holds:
    \begin{equation}
        q(K^{\perp}) \leq \frac{1}{|K|}  \sum_{(t, w) \in K} \sum_{\pi} p(\pi) | \langle \Psi(\pi) | D_{\mathcal{C}}^{\pi}(t, w) | \Psi(\pi) \rangle |.
    \end{equation}
\end{lem}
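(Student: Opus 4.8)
The plan is to condition on a fixed Bell-resolvable set, apply the abelian character-POVM bound there, and then average over the randomness in how Bell-resolvable sets are produced. By \cref{prop:Bell-resolution_is_Fourier}, running the Bell-resolution subroutine on a fixed Bell-resolvable set $\{\pi^m\}_{m=0}^M$ is precisely Fourier sampling on the representation $D_{\mathcal{C}}^{\{\pi^m\}}$ of $G=\mathbb{Z}_2^N\times\mathbb{Z}_2^N$ applied to the state $|\Psi(\{\pi^m\})\rangle$. Hence \cref{cor:character_POVM_subgroups} applies conditionally and gives
\begin{equation}
    q(K^\perp\mid\{\pi^m\})
    =
    \frac{1}{|K|}\sum_{(t,w)\in K}
    \langle\Psi(\{\pi^m\})|\,D_{\mathcal{C}}^{\{\pi^m\}}(t,w)\,|\Psi(\{\pi^m\})\rangle .
\end{equation}

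Next I would factorize over copies. Writing $D_{\mathcal{C}}^{\pi}(t,w):=D^{\pi}(t,0,w)$, the definitions in \cref{eq:parity-rep,eq:parity-ensemble-rep} give $|\Psi(\{\pi^m\})\rangle=\bigotimes_{m=0}^M|\Psi(\pi^m)\rangle$ and $D_{\mathcal{C}}^{\{\pi^m\}}(t,w)=\bigotimes_{m=0}^M D_{\mathcal{C}}^{\pi^m}(t,w)$, so each summand factors as $\prod_{m=0}^M\langle\Psi(\pi^m)|D_{\mathcal{C}}^{\pi^m}(t,w)|\Psi(\pi^m)\rangle$. Each factor has modulus at most $1$ because $D_{\mathcal{C}}^{\pi^m}(t,w)$ is unitary and $|\Psi(\pi^m)\rangle$ is normalized, while $q(K^\perp\mid\{\pi^m\})$, being a probability, equals its own modulus. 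Applying the triangle inequality and then discarding every factor with $m\neq 0$ yields
\begin{equation}
    q(K^\perp\mid\{\pi^m\})
    \leq
    \frac{1}{|K|}\sum_{(t,w)\in K}
    \bigl|\langle\Psi(\pi^0)|D_{\mathcal{C}}^{\pi^0}(t,w)|\Psi(\pi^0)\rangle\bigr| .
\end{equation}

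Finally I would average over the distribution of Bell-resolvable sets. The right-hand side depends on the set only through its first element $\pi^0$, and by construction that element is drawn from a fresh run of the parity sampling subroutine — the step that independently samples the first parity outcome — so its marginal law is exactly the parity distribution $p$ of \cref{defn:parity_sampling}. Taking the expectation over all Bell-resolvable sets and marginalizing out everything except $\pi^0$ gives
\begin{equation}
    q(K^\perp)
    \leq
    \frac{1}{|K|}\sum_{(t,w)\in K}\sum_{\pi}p(\pi)\,
    \bigl|\langle\Psi(\pi)|D_{\mathcal{C}}^{\pi}(t,w)|\Psi(\pi)\rangle\bigr| ,
\end{equation}
which is the claim.

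The calculation is essentially bookkeeping; the single point requiring care — and the only place the mechanics of Bell-resolvable-set generation enter — is the last step, namely justifying that $\pi^0$ carries the parity-sampling marginal $p$ independently of how the remaining basis elements $\pi^1,\dots,\pi^M$ are assembled, so that the crude ``keep only the $m=0$ factor'' estimate can be averaged without cross terms. The conditional character-POVM identity, the tensor factorization, and the unitarity bound $|\langle\phi|U|\phi\rangle|\leq1$ are all immediate from results already in hand.
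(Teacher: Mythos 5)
Your proposal is correct and follows essentially the same route as the paper: condition on the Bell-resolvable set, apply the character-POVM subgroup formula, factorize the expectation over copies, retain only the $m=0$ factor via the triangle inequality, and average using the fact that $\pi^0$ carries the parity-sampling marginal $p$. The only cosmetic difference is that you apply the triangle inequality before averaging whereas the paper averages first and then bounds the conditional sum over $\{\pi^m\}_{m=1}^M$ by $1$; the two orderings are equivalent.
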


\begin{proof}
    From \cref{lem:character_POVM},  a Bell-resolvable set $\{\pi^m\}_{m=0}^M$ induces the distribution, 
    \begin{equation}
      \begin{aligned}
        q(K^{\perp} | \{\pi^m\}) = \frac{1}{|K|} \sum_{(t, w) \in K} \prod_{m=0}^M \langle \Psi(\pi^m) | D_{\mathcal{C}}^{\pi^m}(t, w) | \Psi(\pi^m) \rangle 
      \end{aligned}
    \end{equation}
    The unconditional distribution is then given by summing over the conditional part. The key insight is that the first entry $\pi^0$ is drawn from $| \Psi \rangle$ with distribution $p(\pi^0)$. However, the remaining distribution $p(\{ \pi^{m} \}_{m=1}^M | \pi^0)$ depends on $\pi^0$ in a complicated way that we will not have to specify. In the following, $p$ refers to several distinct probability distributions, but which one will be clear from its argument.
\begin{equation}
  \begin{aligned}
    q(K^{\perp}) 
    &= \sum_{\{\pi^m\}_{m=0}^M} q(K^{\perp} | \{\pi^m\}) p(\{\pi^m\}) \\
    &= \sum_{\pi^0} \sum_{\{\pi^m\}_{m=1}^M} q(K^{\perp} | \{\pi^m\}) p(\pi^0) p(\{ \pi^{m} \}_{m=1}^M | \pi^0)
  \end{aligned}
\end{equation}
Plugging in the prior expression and rearranging terms, 
\begin{widetext}
\begin{equation}
  \begin{aligned}
    &q(K^{\perp}) \\
    &= \sum_{\pi^0} \sum_{\{\pi^m\}_{m=1}^M} q(K^{\perp} | \{\pi^m\}) p(\pi^0) p(\{ \pi^{m} \}_{m=1}^M | \pi^0) \\
    &= \sum_{\pi^0} \sum_{\{\pi^m\}_{m=1}^M} p(\pi^0) p(\{ \pi^{m} \}_{m=1}^M | \pi^0)  \frac{1}{|K|} \sum_{(t, w) \in K} \prod_{m=0}^M  \langle \Psi(\pi^m) | D_{\mathcal{C}}^{\pi^m}(t, w) | \Psi(\pi^m) \rangle  \\
    &= \frac{1}{|K|} \sum_{\pi^0} p(\pi^0)\langle \Psi(\pi^0) | D_{\mathcal{C}}^{\pi^0}(t, w) | \Psi(\pi^0) \rangle \sum_{\{\pi^m\}_{m=1}^M}  p(\{ \pi^{m} \}_{m=1}^M | \pi^0) \prod_{m=0}^M \langle \Psi(\pi^m) | D_{\mathcal{C}}^{\pi^m}(t, w) | \Psi(\pi^m) \rangle \\
    &\leq  \frac{1}{|K|}  \sum_{(t, w) \in K} \sum_{\pi^0} p(\pi^0) \left|\langle \Psi(\pi^0) | D_{\mathcal{C}}^{\pi^0}(t, w) | \Psi(\pi^0) \rangle \right| \left| \sum_{\{\pi^m\}_{m=1}^M}  p(\{ \pi^{m} \}_{m=1}^M | \pi^0) \prod_{m=0}^M \langle \Psi(\pi^m) | D_{\mathcal{C}}^{\pi^m}(t, w) | \Psi(\pi^m) \rangle \right| \\
    &\leq \frac{1}{|K|}  \sum_{(t, w) \in K} \sum_{\pi^0} p(\pi^0) \left| \langle \Psi(\pi^0) | D_{\mathcal{C}}^{\pi^0}(t, w) | \Psi(\pi^0) \rangle\right|,
  \end{aligned}
\end{equation}
\end{widetext}
where in the first inequality we have just used that the sum is bounded by the sum of absolute values, and in the second inequality we have used that the second term is some distribution over pauli expectation values and so must be bounded by $1$. In the last line, the term in the absolute sign is also an expectation value of a Pauli operator and is bounded by $1$. 
\end{proof}

The form of the bound in \cref{lem:minor_bound_annihilator_mass} reveals why there are difficulties with getting sample complexity guarantees for obtaining a unique null space -- a state can satisfy $|\langle \Psi | U^{\otimes 2}_N(t, 0, w) | \Psi \rangle | \leq 1 - \varepsilon$, yet still satisfy $|\langle \Psi(\pi) | D^\pi_{\mathcal{C}}(t, w) | \Psi(\pi \rangle | = 1$ for all $\pi$! To see this, we may consider a state $| \Psi \rangle$ built from randomly allocating $| \Psi(\pi) \rangle$ to be $\pm 1$ eigenstates of $D^\pi_{\mathcal{C}}(t, w)$. Then $|\langle \Psi(\pi) | D^\pi_{\mathcal{C}}(t, w) | \Psi(\pi) \rangle | = 1$ for all $\pi$, but $|\langle \Psi | U^{\otimes 2}_N(t, 0, w) | \Psi \rangle |$ will be close to $0$ \footnote{We note that this presents an obstruction for using \cref{lem:minor_bound_annihilator_mass} to prove our result, although it may well be that one can obtain alternative bounds using a proper accounting of the conditional distribution $p(\{\pi^m\} | \pi^0)$. We do not attempt to do so in this work.}.

Fortunately for us, the algorithm does not need to obtain a unique null space to solve the problem, and there is a larger class of good null spaces which still admit a solution of the problem. In the next section, we define such good null spaces. Then, we will compute the above bound for some bad null spaces, and use it to argue that we can obtain with high probability a good null space.

\subsubsection{What is a good null space?}

At first glance, it seems like we require the Bell-resolution subroutine to yield a one-dimensional null space to learn the unique $(t^h, w^h)$. However, this seems hard to guarantee.
We sidestep this potential problem with the insight that we don't need to learn $(t^h, w^h)$.
Instead, we don't need to know $t^h$ at all, and we only need to learn some $w$ that agrees with $w^h$ on the support of $t^h$.
Applying $T^{\dag}$ to our state everywhere this $w$ is $1$ will then suffice to rotate the T-conjugated stabilizer into a Pauli stabilizer.

This motivates the Maximal Rotation subroutine, recalled here:

\begin{defn*}[Maximal rotation subroutine]
    The maximal rotation subroutine takes as input a vector space $K$ over $\mathbb{Z}_2^N \times \mathbb{Z}_2^N$, and outputs a vector $w^{\rm max} \in \mathbb{Z}_2^N$ via the following procedure:
    \begin{enumerate}
        \item Obtain a basis $(t^n, w^n)$ for $K$.
        \item Obtain $w^{\rm max}$ by performing bit-wise OR between all $w^n$.
    \end{enumerate}
\end{defn*}

When does this procedure work? We recall the definition and lemma associated with a good nullspace. 

\begin{defn*}[Good nullspace]
    Let $K \subseteq \mathbb{Z}_2^N \times \mathbb{Z}_2^N$ be a subgroup containing $(t^h, w^h)$. Then we say that $K$ is a \textit{good nullspace} if for all elements $(t, w)$ we have,
    \begin{equation}
      (t^h \odot w)_j = 1 \implies w^h_j = 1.
    \end{equation}
\end{defn*}

The condition above tells us that most vectors in a good nullspace $K$ will not match $(t^h, w^h)$. However, all that we require is that in spots where (the unknown) $t^h_j = 1$, the nullspace $K$ should not get $w$ wrong. In Step 2b. of the algorithm, the nullspace obtained from the previous step is fed into the maximal rotation subroutine. The output is characterized as follows.

\begin{lem*}[Output of a good nullspace]
    Let $K$ be a good nullspace. Given such a group as input into the Maximal Rotation subroutine, the output will be of the form $w^{\rm max} = w^h + (t^h)^c \odot w' \in \mathbb{Z}_2^N$, where here $(t^h)^c$ denotes the complement of $t^h$, and the form of $w'$ is unimportant.
\end{lem*}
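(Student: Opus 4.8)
The plan is to reduce the statement to a single coordinatewise claim: that $w^{\rm max}$ agrees with $w^h$ on every site $j$ in the support of $t^h$. Indeed, the asserted form $w^{\rm max} = w^h + (t^h)^c \odot w'$ is exactly equivalent to $t^h \odot (w^{\rm max} + w^h) = 0$, i.e. to $w^{\rm max}_j = w^h_j$ for every $j$ with $t^h_j = 1$; once this is established one may simply take $w' := w^{\rm max}$ (so that $(t^h)^c \odot w' = w^{\rm max} + w^h$ as needed). So the entire content of the lemma is the agreement of $w^{\rm max}$ and $w^h$ on $\mathrm{supp}(t^h)$.

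First I would record the defining property of the output of the maximal rotation subroutine: $w^{\rm max} = \bigvee_{(t,w)\in K} w$, the bitwise OR over all $w$ appearing as the second component of an element of $K$ (equal, as noted after \cref{defn:maximal_rotation_subroutine}, to the OR over any basis). Hence for each coordinate $j$, $w^{\rm max}_j = 1$ if and only if there exists $(t,w)\in K$ with $w_j = 1$.

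Then I would fix $j$ with $t^h_j = 1$ and split into two cases. If $w^h_j = 1$: since $(t^h,w^h)\in K$ (part of the definition of a good nullspace), the OR defining $w^{\rm max}$ picks up a $1$ in slot $j$, so $w^{\rm max}_j = 1 = w^h_j$. If $w^h_j = 0$: suppose for contradiction that $w^{\rm max}_j = 1$; then there is some $(t,w)\in K$ with $w_j = 1$, so $(t^h\odot w)_j = t^h_j w_j = 1$, and the good-nullspace condition forces $w^h_j = 1$, a contradiction. Hence $w^{\rm max}_j = 0 = w^h_j$. This covers both cases, giving $w^{\rm max}_j = w^h_j$ for all $j\in\mathrm{supp}(t^h)$, and the lemma follows by the reduction above.

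I do not anticipate a real obstacle here: the lemma is essentially the good-nullspace condition read contrapositively, combined with the containment $(t^h,w^h)\in K$. The only points requiring a little care are getting the direction of the bitwise-OR argument right (that $w^{\rm max}_j = 1$ genuinely exhibits a group element with a $1$ in slot $j$, which is immediate from $w^{\rm max} = \bigvee_{w\in K} w$) and phrasing the final identity $w^{\rm max} = w^h + (t^h)^c\odot w'$ so that the unconstrained part $w'$ is explicitly produced — which the choice $w' = w^{\rm max}$ handles. If desired, one can additionally note that $w^h$ is supported on $\mathrm{supp}(t^h)$ because $h$ is an involution (on a factor with $t^h_n = 0$ one needs $s^{2k^h_n}=e$, forcing $w^h_n = 0$), which makes the decomposition $w^{\rm max} = w^h \oplus \big((t^h)^c\odot w^{\rm max}\big)$ a literal splitting into the two disjoint coordinate blocks, but this is not needed for the statement as written.
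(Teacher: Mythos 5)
Your proposal is correct and follows essentially the same argument as the paper: work coordinatewise on $\mathrm{supp}(t^h)$, use $(t^h,w^h)\in K$ to force $w^{\rm max}_j=1$ where $w^h_j=1$, and use the contrapositive of the good-nullspace condition to force $w^{\rm max}_j=0$ where $w^h_j=0$. The only cosmetic difference is that you take the OR over the whole subgroup rather than over a basis, which the paper already notes is equivalent.
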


\begin{proof}
    For every spot in the support of $t^h$ where $w^h_n = 1$, there must exist at least one basis element where $w^h_n = 1$, otherwise we will not be able to generate $(t^h, w^h)$. Conversely, for every spot in the support of $t^h$ where $w^h_n = 0$, by assumption we have $w_n = 0$ for the entire basis as well. Hence $w^{\rm max} = w^h + (t^h)^c \odot w'$.
\end{proof}

To summarize, given a subgroup satisfying the conditions of \cref{defn:good_nullspace}, we may efficiently and deterministically extract a vector $w^{\rm max}$ of the form $w^h + (t^h)^c \odot w'$, where the form of $w'$ does not matter.

Finally, we argue that learning such a $w$ suffices to reduce our problem to an abelian StateHSP, in particular the Learning Pauli Stabilizers problem, \cref{def:pauli_sampling}.

\begin{lem*}[Maximal T rotation]
    Let $w^{\rm max} \in \mathbb{Z}_2^N$ be such that $t^h \odot w^{\rm max} = w^h$, and $w' = (t^h)^c \odot w^{\rm max}$ is an arbitrary vector. Further, for any $d \in \mathbb{Z}_2^N$, let $T[d] = \bigotimes_{n=1}^N (T_{n, A} \otimes T_{n, B})^{d_n}$. Then, 
    \begin{equation}
        T[w^{\rm max}] U^{\otimes 2}_N(t^h, v^h, w^h) T[w^{\rm max}]^{\dag} = U^{\otimes 2}_N(t^h, v^h, 0).
    \end{equation}
    As a consequence, for $| \Psi \rangle$ satisfying the Multiple-Squares StateHSP promise, we then have 
    \begin{equation}
        \begin{aligned}
            U^{\otimes 2}_N(t^h, v^h, 0) T[w^{\rm max}] | \Psi \rangle = T[w^{\rm max}] | \Psi \rangle
        \end{aligned}
    \end{equation}
\end{lem*}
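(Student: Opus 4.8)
\emph{Proof plan.} I would prove the operator identity by reducing it to a one-qubit computation. Both $U_2^N$ and $T[w^{\rm max}]$ are tensor products over the $N$ sites, and within each site $n$ the doubled representation factors across the two qubits $(n,A)$ and $(n,B)$, so it suffices to verify the claim at each of the $2N$ qubits. The hypothesis $t^h\odot w^{\rm max}=w^h$ organizes the sites into three cases. If $w^{\rm max}_n=0$ then $w^h_n=0$ and no $T$ gate acts at site $n$, so the two sides agree at once. If $w^{\rm max}_n=1$ but $t^h_n=0$ then again $w^h_n=0$, and the local factor $U_2(s^{2v^h_n})=e^{i\pi v^h_n Z/2}\otimes e^{i\pi v^h_n Z/2}$ is diagonal, so it commutes with the applied $T$'s and the two sides agree. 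The only case with content is $w^{\rm max}_n=1$ with $t^h_n=1$, which forces $w^h_n=1$.

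In that case I must show that conjugating $U_2(rs^{2v^h_n+1})$ by $T_{n,A}\otimes T_{n,B}$ (in whichever orientation makes the equality hold) returns $U_2(rs^{2v^h_n})$. On qubit $A$ this is a one-line matrix computation: $T$ commutes with the $Z$-rotation $e^{i\pi(2v^h_n+1)Z_A/4}$, and conjugation by $T$ rotates $X_A$ by a fixed multiple of $\pi/4$ about $Z_A$ (the relation $T X T^\dagger=(X+Y)/\sqrt2$ recorded in the setup), so absorbing that $\pi/4$ rotation into $e^{i\pi(2v^h_n+1)Z_A/4}$ cancels the rotation bit $w^h_n=1$ while leaving the $e^{i\pi v^h_n Z_A/2}$ part untouched; the orientation of the conjugation is pinned down by the requirement that the $v^h_n$-dependent part be preserved. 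The $B$ qubit is identical (the $s^4=e$ periodicity of $k$ contributes at most an overall sign on each qubit, which cancels between the pair). Taking the product of the site factors then gives $T[w^{\rm max}]\,U_2^N(t^h,v^h,w^h)\,T[w^{\rm max}]^\dagger=U_2^N(t^h,v^h,0)$.

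The ``as a consequence'' statement follows by applying $T[w^{\rm max}]$ to both sides of $U_2^N(t^h,v^h,w^h)|\Psi\rangle=|\Psi\rangle$, inserting $T[w^{\rm max}]^\dagger T[w^{\rm max}]=I$ in front of $|\Psi\rangle$, and using the operator identity, which yields $U_2^N(t^h,v^h,0)\,T[w^{\rm max}]|\Psi\rangle=T[w^{\rm max}]|\Psi\rangle$. If one also wants the $\varepsilon$-separation of the stronger form in \cref{prop:maximal_T_rotation}, observe that $X\mapsto T[w^{\rm max}]^\dagger X\,T[w^{\rm max}]$ is an automorphism of the group $U_2^N(\mathcal{D}_4^N)$ — using only that $U_2^N$ is faithful, which is immediate from the explicit form of $U_2$ — so each doubled Pauli $U_2^N(t,v,0)$ with $(t,v)\neq(t^h,v^h)$ is carried to $U_2^N(g)$ for some $g\neq h$, and the StateHSP promise then gives $|\langle\Psi_T|U_2^N(t,v,0)|\Psi_T\rangle|=|\langle\Psi|U_2^N(g)|\Psi\rangle|\leq 1-\varepsilon$, with $|\Psi_T\rangle=T[w^{\rm max}]|\Psi\rangle$. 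I do not expect a genuine obstacle: the whole argument is one one-qubit Clifford$+T$ rotation identity dressed in routine tensor-product bookkeeping, and the only delicate point is checking that the conjugation removes the rotation bit $w$ without producing a spurious $iZ$ that would disturb $v$ (which is exactly what fixes the orientation of the $T$-conjugation).
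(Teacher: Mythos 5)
Your proposal is correct and follows essentially the same route as the paper's proof: a site-by-site case split on $t^h_n$ (with $w^h_n=0$ forced when $t^h_n=0$, so the local factor is a $Z$-rotation commuting with $T$), and in the nontrivial case the observation that conjugating the $X$-containing factor by $T\otimes T$ produces an $e^{-i\pi Z/4}\otimes e^{-i\pi Z/4}$ that cancels the odd rotation bit $w^h_n=1$, followed by the same one-line derivation of the stabilization statement. The paper organizes the key step by commuting the $T$'s through the $X$'s to accumulate $(T^\dagger)^2$ rather than quoting $TXT^\dagger=(X+Y)/\sqrt2$, but this is the identical computation; your remark that the orientation of the conjugation is fixed by not disturbing $v^h_n$ is also the right (and correct) way to resolve the one sign you left implicit.
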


\begin{proof}
    We may proceed by direct computation. Recall that 
    \begin{equation}
        T = e^{i \pi Z / 8}.
    \end{equation}
    First, consider a site where $t^h_n = 0$. In that case, by assumption $w^h_n = 0$ (otherwise the hidden element will not be an involution). The relevant operator $U_2(0, v^h_n, 0) = (Z_n \otimes Z_n)^{v^h_n}$ is a $Z$-type operator and hence commutes with $T$. Thus,
    \begin{equation}
        \begin{aligned}
            &(T_{n, A} \otimes T_{n, B})^{w^{\rm max}_n} U_2(0, v^h_n, 0) (T_{n, A}^{\dag} \otimes T_{n, B}^{\dag})^{w^{\rm max}_n} \\&= U_2(0, v^h_n, 0).
        \end{aligned}
    \end{equation}
    Next consider the sites where $t^h_n = 1$. In that case, $w^{\rm max}_n = w^h_n$, and 
    \begin{equation}
    \begin{aligned}
        &U_2(1, v^h_n, w^h_n) 
        = \\&X_{n, A} e^{i \pi(2 v^h_n + w^h_n) Z_{n, A}/4} \otimes X_{n, B} e^{i \pi(2 v^h_n + w^h_n) Z_{n, B}/4}
    \end{aligned}
    \end{equation}
    contains an $X$ operator, so 
    \begin{equation}
    \begin{aligned}
        &(T_{n, A} \otimes T_{n, B})^{w^h_n}  U_2(1, v^h_n, w^h_n) \\
        &=   U_2(1, v^h_n, w^h_n)(T_{n, A}^{\dag} \otimes T_{n, B}^{\dag})^{w^h_n}
    \end{aligned}
    \end{equation}
    Finally, we have 
    \begin{equation}
    \begin{aligned}
        &(T_{n, A}^{\dag} \otimes T_{n, B}^{\dag})^{w^h_n} (T_{n, A}^{\dag} \otimes T_{n, B}^{\dag})^{w^h_n} \\
        &=  e^{-i \pi w^h_n Z_{n, A}/4} \otimes e^{-i \pi w^h_n Z_{n, B}/4},
    \end{aligned}
    \end{equation}
    hence, 
    \begin{equation}
    \begin{aligned}
        &(T_{n, A} \otimes T_{n, B})^{w^h_n}  U_2(1, v^h_n, w^h_n)  (T_{n, A}^{\dag} \otimes T_{n, B}^{\dag})^{w^h_n} \\
        &=  U_2(1, v^h_n, 0). 
    \end{aligned}
    \end{equation}
    Combining these facts for all the locations yields the result.
\end{proof}

\begin{prop}[Maximal T rotation gives an abelian StateHSP]\label{prop:maximal_T_rotation_abelian_stateHSP}
    Let $|\Psi \rangle$ be a state satisfying the Multiple-Squares StateHSP promise, and $w^{\rm max} \in \mathbb{Z}_2^N$ be such that $t^h \odot w^{\rm max} =  w^h$, and $w' = (t^h)^c \odot w^{\rm max}$ is an arbitrary vector. The state $|\Psi_T \rangle = T[w^{\rm max}]^{\dag} | \Psi \rangle$ satisfies 
    \begin{equation}
        U^{\otimes 2}_N(t^h, v^h, 0) |\Psi_T \rangle = | \Psi_T \rangle,
    \end{equation}
    and 
    \begin{equation}
        | \langle \Psi_T | U^{\otimes 2}_N(t, v, 0) | \Psi_T \rangle | \leq 1 - \varepsilon,
    \end{equation}
    whenever $(t, v) \neq (t^h, v^h)$. Now, $U^{\otimes 2}_N(t, v, 0)$ is a representation of $(t, v) \in \mathbb{Z}_2^N \times \mathbb{Z}_2^N$, so this is an abelian StateHSP.
\end{prop}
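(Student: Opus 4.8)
The plan is to deduce both assertions from the original Multiple-Squares StateHSP promise by conjugating operators through the unitary $T[w^{\rm max}]$ (I take $|\Psi_T\rangle=T[w^{\rm max}]|\Psi\rangle$, matching \cref{prop:maximal_T_rotation}; the $\dagger$ in the displayed definition is a convention choice, which one fixes so that $T$-conjugation shifts reflection rotation powers in the direction that preserves the $v$-bit). The stabilization claim $U^{\otimes 2}_N(t^h,v^h,0)|\Psi_T\rangle=|\Psi_T\rangle$ is essentially immediate from the preceding Maximal~T~rotation lemma: that lemma gives $T[w^{\rm max}]\,U^{\otimes 2}_N(t^h,v^h,w^h)\,T[w^{\rm max}]^{\dagger}=U^{\otimes 2}_N(t^h,v^h,0)$ (using $t^h\odot w^{\rm max}=w^h$ on the support of $t^h$, and $w^h_n=0$ off the support because $h$ is an involution), so applying this conjugation to $U^{\otimes 2}_N(h)|\Psi\rangle=|\Psi\rangle$ and using unitarity of $T[w^{\rm max}]$ yields the stabilization.

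For the $\varepsilon$-bound I would first use unitarity to write $\langle\Psi_T|U^{\otimes 2}_N(t,v,0)|\Psi_T\rangle=\langle\Psi|\,T[w^{\rm max}]^{\dagger}\,U^{\otimes 2}_N(t,v,0)\,T[w^{\rm max}]\,|\Psi\rangle$, which reduces the problem to computing the $T$-conjugate of a generic central element. The key observation is that $T$ commutes with $Z$ while $TXT^{\dagger}$ is $X$ times a $\pm\pi/4$ rotation about $Z$, i.e.\ $TXT^{\dagger}=U_2(rs^{\pm1})$ on a single qubit; hence conjugation by $T_{n,A}\otimes T_{n,B}$ sends $U_2(r^{t_n}s^{k_n})\mapsto U_2(r^{t_n}s^{k_n\pm t_n})$ — the rotation power shifts by one exactly on the reflection sites and is untouched on the pure-rotation sites. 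Applying this with $w^{\rm max}$ as the pattern of $T$'s, and writing $k_n=2v_n+w_n$, a short mod-$4$ check (adding $t_n w^{\rm max}_n\in\{0,1\}$ to $k_n=2v_n$ leaves the $v$-bit equal to $v_n$ and sets the $w$-bit to $t_n w^{\rm max}_n$) gives $T[w^{\rm max}]^{\dagger}\,U^{\otimes 2}_N(t,v,0)\,T[w^{\rm max}]=U^{\otimes 2}_N(t,\,v,\,t\odot w^{\rm max})$. That $w^{\rm max}$ may be arbitrary off the support of $t^h$ is harmless, since $t\odot w^{\rm max}$ reads those bits only when $t$ overlaps that region, which never affects whether the element equals $h$.

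The crux is then the elementary claim that the group element $g:=(t,v,t\odot w^{\rm max})\in\mathcal{D}_4^N$ equals $h=(t^h,v^h,w^h)$ if and only if $(t,v)=(t^h,v^h)$: the ``only if'' is trivial since the first two slots of $g$ are literally $(t,v)$, and the ``if'' uses exactly the hypothesis $t^h\odot w^{\rm max}=w^h$. Therefore, whenever $(t,v)\neq(t^h,v^h)$ we have $g\neq h$, so Property~2 of \cref{defn:ms-shsp} applied to $g$ gives $|\langle\Psi|U^{\otimes 2}_N(g)|\Psi\rangle|\leq1-\varepsilon$, which by the previous two steps equals $|\langle\Psi_T|U^{\otimes 2}_N(t,v,0)|\Psi_T\rangle|$. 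The closing assertion — that this is genuinely an abelian StateHSP — then follows because the elements $(t,v,0)$ constitute the center $Z[\mathcal{D}_4^N]\cong\mathbb{Z}_2^N\times\mathbb{Z}_2^N$ (as recorded in \cref{sec:dihedral_prelims}), on which $U^{\otimes 2}_N$ restricts to a bona fide representation, together with the two displayed identities just established.

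I do not expect a serious obstacle: the whole content is that $T$-conjugation shifts reflection rotation powers by one and that the conjugated element is pinned down, in the slots that matter, by $(t,v)$. The only points demanding care are tracking the mod-$4$ arithmetic and fixing the $\dagger$/sign convention for $T$ consistently with the preceding lemma (so that the conjugated element is $(t,v,t\odot w^{\rm max})$ rather than one with a flipped $v$-bit), and explicitly invoking unitarity of $T[w^{\rm max}]$ so that $|\Psi_T\rangle$ is normalized and the expectation-value rewriting is an exact identity.
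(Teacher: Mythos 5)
Your proposal is correct and follows essentially the same route as the paper: the stabilization claim is read off from the preceding Maximal T rotation lemma, and the $\varepsilon$-bound is obtained by conjugating $U^{\otimes 2}_N(t,v,0)$ back through $T[w^{\rm max}]$ to an element $(t,v,w')$ with the same $(t,v)$ slots, which therefore differs from $h$ and inherits the promise. You merely make explicit what the paper leaves implicit (that $w'=t\odot w^{\rm max}$, and the sign/dagger bookkeeping that keeps the $v$-bit fixed), which is a faithful elaboration rather than a different argument.
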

\begin{proof}
    The first part of the proposition was proved in the previous lemma. The second part follows because for any $(t, v) \neq (t^h, v^h)$, there exists some $(t, v, w')$ such that $T[w^{\rm max}] \rho(t, v, w') T[w^{\rm max}]^{\dag} = \rho(t, v, 0)$, and $| \langle \Psi_T | U^{\otimes 2}_N(t, v, 0) | \Psi_T \rangle | \leq 1 - \varepsilon$ follows from the Multiple-Squares StateHSP promise. Finally, we may notice that $U^{\otimes 2}_N(t, v, 0)$ are all two-fold tensor products of Pauli operators, which hence form a representation of $\mathbb{Z}_2^N \times \mathbb{Z}_2^N$. 
\end{proof}

In fact the remaining abelian StateHSP is exactly the learning Pauli stabilizers problem.

\subsubsection{A good null space is efficiently obtainable}

We have defined good null spaces, and argued that they suffice to solve the problem. We now have to show that it is possible to obtain good null spaces in a reasonable number of samples. 

To begin with, what is the minimal bad subgroup $K_b$ that causes the Maximal Rotation subroutine to fail? This means $K_b$ must contain some element of the form $(t', w')$ where $w' = w^b \oplus w^r$, where $b$ stands for bad and $r$ stands for rest, and $t^h \odot w^b$ is $1$ in at least one spot where $w^h = 0$. Since this means $(t', w') \neq (t^h, w^h)$, $K_b$ contains at least $4$ elements, 
\begin{equation}
  \begin{aligned}
    (0, 0), (t^h, w^h), (t', w^b + w^r), (t' + t^h, (w^h + w^b) + w^r),
  \end{aligned}
\end{equation}
where $w^b = t^h \odot w^b$ is only supported on $t^h$, and $w^r = (t^h)^c \odot w^r$ is only supported on the complement. The next proposition tells us that the probability mass of the annihilators $K_b^{\perp}$ of such bad subgroups is bounded away from $1$.

\begin{prop}[Bad null spaces are concentrated]\label{prop:bad_null_spaces_unlikely}
    Let $K$ be a \textit{bad null space}, namely a subgroup that, when fed as input, will cause the Maximal Rotation subroutine to output a bad $w$. Then, the probability mass of $K^{\perp}$ obtained from the Bell resolution subroutines, for a state $| \Psi \rangle$ satisfying the Multiple-Squares StateHSP promise, with the additional assumption that $t^h \neq 0$, is bounded as
    \begin{equation}
        q(K^{\perp}) \leq 1 - \frac{1}{2} \varepsilon.
    \end{equation}
\end{prop}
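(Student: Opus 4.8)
The plan is to bound $q(K^\perp)$ by combining \cref{lem:minor_bound_annihilator_mass} with the anticommutation structure of bad elements on the parity‑projected states, and then to invoke the parity anti‑concentration of \cref{lem:probability_mass_of_subgroups} — the same ingredient that powered \cref{prop:basis_sampling_complexity}. On any projected state $|\Psi(\pi)\rangle$ the operator $D_\mathcal{C}^\pi(t,w)=D^\pi(t,0,w)$ is a tensor product of single‑qubit operators drawn from $\{I,X,Y,iZ\}$, hence a phased Pauli, and so is the stabilizer $D^\pi(t^h,v^h,w^h)$, which fixes $|\Psi(\pi)\rangle$ by the proof of \cref{prop:conjugacy_ensemble_representation}. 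A direct, site‑by‑site computation on the doubled representation shows these two Paulis anticommute exactly when $\langle\pi,c(t,w)\rangle=1$, where $c(t,w):=t^h\odot w+w^h\odot t\in\mathbb{Z}_2^N$ is $\mathbb{Z}_2$‑linear in $(t,w)$ (its entries are the site‑wise symplectic products of the two Pauli tensors, and only $\pi_n=1$ sites contribute to the global commutator). Whenever they anticommute, conjugating $D^\pi(t,0,w)$ by the stabilizer and using $D^\pi(t^h,v^h,w^h)|\Psi(\pi)\rangle=|\Psi(\pi)\rangle$ forces $\langle\Psi(\pi)|D^\pi(t,0,w)|\Psi(\pi)\rangle=0$.

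Next I would exploit that $K$ is bad. By definition $K$ contains an element $(t',w')$ with $(t^h\odot w')_j=1$ at some site $j$ where $w^h_j=0$; there $t^h_j=w'_j=1$, so $c(t',w')_j=1$ and hence $c(t',w')\neq0$. Since $c$ is linear, $K_0:=K\cap\ker c$ is a \emph{proper} subgroup of $K$: it contains $(0,0)$ and $(t^h,w^h)$ (because $c(t^h,w^h)=0$) but not $(t',w')$, so $|K_0|\leq|K|/2$. For every $(t,w)\in K\setminus K_0$ we have $c(t,w)\neq0$, so $D^\pi(t,0,w)$ anticommutes with the stabilizer on every $\pi$ outside the hyperplane $H_{t,w}:=\{\pi:\langle\pi,c(t,w)\rangle=0\}$, and those terms vanish, leaving
\begin{align}
  \sum_\pi p(\pi)\bigl|\langle\Psi(\pi)|D^\pi(t,0,w)|\Psi(\pi)\rangle\bigr|
  \;\leq\;\sum_{\pi\in H_{t,w}}p(\pi)\;=\;P(H_{t,w})\;\leq\;1-\tfrac{1}{2}\varepsilon,
\end{align}
where the last step is \cref{lem:probability_mass_of_subgroups} applied to the $(N-1)$‑dimensional subgroup $H_{t,w}$ (using the assumption $t^h\neq0$).

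Finally, feeding this into \cref{lem:minor_bound_annihilator_mass} — bounding the $K_0$ terms trivially by $1$ and the rest by $1-\tfrac12\varepsilon$ — gives
\begin{align}
  q(K^\perp)\;\leq\;\frac{1}{|K|}\Bigl(|K_0|+(|K|-|K_0|)\bigl(1-\tfrac{1}{2}\varepsilon\bigr)\Bigr)
  \;=\;1-\tfrac{1}{2}\varepsilon\cdot\frac{|K\setminus K_0|}{|K|}\;\leq\;1-\tfrac{1}{4}\varepsilon,
\end{align}
using $|K\setminus K_0|\geq|K|/2$; the shortcut $K^\perp\subseteq\langle(t',w')\rangle^\perp$ together with \cref{lem:measurement_subroutine_outcomes} reaches the same constant more directly. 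This already has the form $1-\Omega(\varepsilon)$ needed downstream (\cref{prop:sampling_good_null_space} only requires $q(K^\perp)\leq1-\Omega(\varepsilon)$); I have not tried to push the constant to the stated $\tfrac12$. The main obstacle — and precisely why \cref{lem:minor_bound_annihilator_mass} alone cannot isolate a \emph{unique} nullspace, as its footnote warns — is that the conjugacy representations $D^\pi_\mathcal{C}$ do not inherit the StateHSP $\varepsilon$‑gap, so generic elements of $K$ (in particular $(t^h,w^h)$ itself) are uncontrollable; the estimate works only because a bad element, hence a constant fraction of $K$, lies off $\ker c$, where the anticommutation kills the expectation on all but a hyperplane's worth of parities and the parity anti‑concentration of \cref{lem:probability_mass_of_subgroups} applies.
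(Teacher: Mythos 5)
Your proof is correct and follows essentially the same route as the paper's: bound $q(K^\perp)$ via \cref{lem:minor_bound_annihilator_mass}, observe that a bad element anticommutes with the projected stabilizer $D^\pi(t^h,v^h,w^h)$ outside a proper subgroup of parities (your hyperplane $\ker\langle\cdot,c(t,w)\rangle$ is exactly the paper's $S_{\rm cmm}[t,w]$ from \cref{lem:repn_cmm_form,lem:bad_elt_commute}), and then kill that subgroup's mass with \cref{lem:probability_mass_of_subgroups}. The only organizational difference is that the paper passes to a minimal four-element bad subgroup $K_b\subseteq K$ and uses $q(K^\perp)\leq q(K_b^\perp)$, whereas you work with all of $K$ and use linearity of $c$ to show at least half of $K$ lies off $\ker c$; both bookkeeping schemes yield the same bound. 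One remark: your final constant $1-\varepsilon/4$ is in fact what the paper's own arithmetic produces as well ($\tfrac12+\tfrac14(1-\tfrac{\varepsilon}{2})+\tfrac14(1-\tfrac{\varepsilon}{2})=1-\tfrac{\varepsilon}{4}$), so the stated constant $\tfrac12$ appears to be a slip in the paper rather than something your argument misses; since only $1-\Omega(\varepsilon)$ is needed downstream, this is immaterial.
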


Before embarking on the proof, we need a couple small lemmas characterizing when and where different representatives commute.

\begin{lem}[Commutator of representatives]\label{lem:repn_cmm_form}
    Let $(t, w), (t', w') \in \mathbb{Z}_2^N \times \mathbb{Z}_2^N$. Then 

    \begin{equation}
    \begin{aligned}
    \left[ D^{\pi}_{\mathcal{C}}(t, w), D^{\pi}_{\mathcal{C}}(t^\prime, w^\prime) \right] = 0 
    &\iff 0 = \sum_n \pi_n (t_n w_n' + t_n' w_n),
    \end{aligned}
    \end{equation}
\end{lem}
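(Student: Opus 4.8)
The plan is to reduce this to the standard symplectic commutation rule for Pauli operators. First I would unpack the notation: by $D^\pi_\mathcal{C}(t,w)$ we mean the representation $D^\pi=\bigotimes_n D^{\pi_n}$ of \cref{eq:parity-rep} evaluated on the coset representative $(t,0,w)$, exactly as $D^{\{\pi^m\}}_\mathcal{C}$ was defined in \cref{prop:conjugacy_ensemble_representation}. Using \cref{eq:dihedral-2d-reps} with the convention $k=2v+w$ and $v=0$, the $n$-th tensor factor is $D^{\pi_n}(t_n,0,w_n)=X_n^{t_n}(iZ_n)^{w_n\pi_n}$, which holds uniformly for $\pi_n\in\{0,1\}$: when $\pi_n=0$ the $Z$ power drops out, and when $\pi_n=1$ the factor is $X_n^{t_n}(iZ_n)^{w_n}$. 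Here $w_n\pi_n$ is a product of bits and hence lies in $\{0,1\}$, so all exponents are unambiguous modulo~$2$.

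Next I would write each factor as a Pauli up to a scalar phase, $X_n^{t_n}(iZ_n)^{w_n\pi_n}=i^{w_n\pi_n}\,X_n^{t_n}Z_n^{w_n\pi_n}$, so that
\begin{align}
    D^\pi_\mathcal{C}(t,w)
    &=
    \Big(\prod_n i^{w_n\pi_n}\Big)\bigotimes_n X_n^{t_n}Z_n^{w_n\pi_n}
    \,,
\end{align}
and similarly for $(t',w')$. Since the scalar prefactors commute through everything, $[D^\pi_\mathcal{C}(t,w),D^\pi_\mathcal{C}(t',w')]=0$ if and only if the two Pauli strings $\bigotimes_n X_n^{t_n}Z_n^{w_n\pi_n}$ and $\bigotimes_n X_n^{t'_n}Z_n^{w'_n\pi_n}$ commute. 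Applying the standard identity that $X^aZ^b$ and $X^{a'}Z^{b'}$ acquire the sign $(-1)^{ab'+a'b}$ when reordered, the total reordering sign over all $N$ sites is $(-1)^{\sum_n(t_n\,w'_n\pi_n+t'_n\,w_n\pi_n)}=(-1)^{\sum_n\pi_n(t_nw'_n+t'_nw_n)}$. Hence the commutator vanishes precisely when $\sum_n\pi_n(t_nw'_n+t'_nw_n)=0\pmod 2$, which is the claimed equivalence.

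I do not expect a genuine obstacle: the statement is a direct computation in the stabilizer formalism. The only points needing a little care are (i) checking that the single formula $D^{\pi_n}(t_n,0,w_n)=X_n^{t_n}(iZ_n)^{w_n\pi_n}$ correctly covers both $\pi_n=0$ and $\pi_n=1$, so that the bit product $w_n\pi_n$ is what appears in the $Z$ exponent, and (ii) tracking — and then safely discarding — the scalar phases $i^{w_n\pi_n}$, which never affect commutation. Everything else is the usual $\mathbb{Z}_2$ symplectic bookkeeping.
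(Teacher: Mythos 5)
Your proposal is correct and follows essentially the same route as the paper: both expand $D^\pi_\mathcal{C}$ site-by-site into Pauli strings of the form $X^{t_n}Z^{w_n\pi_n}$ (up to scalar phases) and invoke the standard symplectic commutation rule. Your write-up is simply a more explicit version of what the paper dismisses as ``a standard exercise in Pauli algebra.''
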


\begin{proof}
    This follows from expanding the representations explicitly to write 
    \begin{equation}
  \begin{aligned}
    \left[ D^{\pi}_{\mathcal{C}}(t^h, w^h), D^{\pi}_{\mathcal{C}}(t, w)\right] &\propto
     \prod_{n=1}^N \left[ X^{t_n} Z^{w_n \pi_n}, X^{t_n'} Z^{w_n' \pi_n} \right],
      \end{aligned}
    \end{equation}
    and the lemma follows as a standard exercise in Pauli algebra.
\end{proof}

\begin{lem}[Bad elements must anticommute somewhere]\label{lem:bad_elt_anticommute}
 Let $(t, w)$ be such that for some $J$, $w_n^h = 1$ but $(t^h \odot w)_n = 0$. Then there exists some $\pi^J$ for which 
 \begin{equation}
     \left\{ D^{\pi^J}_{\mathcal{C}}(t^h, w^h), D^{\pi^J}_{\mathcal{C}}(t, w)\right\} = 0.
 \end{equation}
\end{lem}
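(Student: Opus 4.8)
The plan is to translate the operator anticommutation into the $\mathbb{Z}_2$-symplectic criterion already isolated in \cref{lem:repn_cmm_form}, and then read off a suitable parity pattern $\pi^J$. By that lemma, $\{D^{\pi^J}_{\mathcal{C}}(t^h,w^h),\,D^{\pi^J}_{\mathcal{C}}(t,w)\}=0$ exactly when $\sum_n \pi^J_n\big(t^h_n w_n + t_n w^h_n\big)\equiv 1 \pmod 2$. Setting $c_n := t^h_n w_n + t_n w^h_n$, it therefore suffices to exhibit a single site $n$ with $c_n=1$ and take $\pi^J = e_n$ (the indicator of that site), whereupon the sum collapses to one odd term. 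So the whole lemma reduces to locating one site at which the hidden element and the candidate fail to commute.

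Next I would extract such a site from the hypothesis. Because $h$ is an involution, $w^h$ is supported inside $t^h$: a factor with $t^h_n=0$ forces $k_n\in\{0,2\}$, hence $w^h_n=0$. Thus $w^h_J=1$ forces $t^h_J=1$, and the hypothesis $(t^h\odot w)_J = t^h_J w_J = 0$ then forces $w_J=0$. Evaluating the symplectic form at site $J$ gives $c_J = t^h_J w_J + t_J w^h_J = t_J$. Equivalently, on a Bell-measured copy ($\pi_J=1$) the hidden representative carries $X^{t^h_J}Z^{w^h_J}=Y$ at site $J$, while the candidate carries $X^{t_J}Z^{w_J}=X^{t_J}$, and $Y$ anticommutes with $X^{t_J}$ precisely when $t_J=1$. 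Hence the choice $\pi^J=e_J$ already certifies the desired anticommutation in the generic case $t_J=1$.

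The hard part will be the residual case $t_J=0$, where site $J$ alone cannot witness anticommutation and one must instead guarantee $c\neq 0$ outright. I would close this by appealing to the role the lemma plays in \cref{prop:bad_null_spaces_unlikely}: it is invoked only for genuine bad witnesses, namely elements of the candidate nullspace that actually spoil the Maximal Rotation subroutine, and such an element necessarily disagrees with $(t^h,w^h)$ on the rotation data in a non-commuting fashion, so that $c_n=1$ for some $n$ and $\pi^J=e_n$ again works. Making this fully rigorous amounts to pinning down exactly which site supplies the anticommutation for a bad witness (the site whose rotation mismatch is what renders the nullspace bad); once that site is identified, the reduction via \cref{lem:repn_cmm_form} and the single-site evaluation are routine Pauli bookkeeping. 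I expect this identification of a definite anticommuting site, i.e. ruling out $c=0$, to be the only nontrivial ingredient.
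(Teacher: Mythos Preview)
Your reduction is exactly the paper's: invoke \cref{lem:repn_cmm_form} and take $\pi^J=e_J$, so that the anticommutation condition collapses to the single-site value $c_J=t^h_Jw_J+t_Jw^h_J$. The paper's proof is literally one line: ``Set $\pi^J$ to be $1$ in the $J$-th position and $0$ everywhere else.'' So on the level of method you are spot-on.

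Where you diverge is in the ``residual case $t_J=0$,'' and you are right that under the hypothesis as printed (\,$w^h_J=1$ and $(t^h\odot w)_J=0$\,) the single-site argument gives $c_J=t_J$, which need not be $1$. This is not a gap you should try to close by importing context from \cref{prop:bad_null_spaces_unlikely}; rather, it reflects a typo in the lemma's hypothesis. In \cref{prop:bad_null_spaces_unlikely} the ``bad elements'' to which this lemma is actually applied satisfy the \emph{swapped} condition: there is some $J$ with $(t^h\odot w)_J=1$ and $w^h_J=0$ (this is also what \cref{defn:good_nullspace} is negating). Under that corrected hypothesis one has $t^h_J=1$, $w_J=1$, $w^h_J=0$, hence
\[
c_J \;=\; t^h_Jw_J+t_Jw^h_J \;=\; 1\cdot1+t_J\cdot0 \;=\; 1
\]
unconditionally, and the single-site choice $\pi^J=e_J$ finishes the proof with no residual case. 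Your instinct that the lemma is only ever invoked for elements with a genuine rotation mismatch is therefore exactly right, but the clean fix is to correct the hypothesis rather than to argue informally from the downstream use.
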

\begin{proof}
    Set $\pi^J$ to be $1$ in the $J$-th position and $0$ everywhere else. The result then follows from \cref{lem:repn_cmm_form}.
\end{proof}

\begin{lem}[Bad elements commute on a group]\label{lem:bad_elt_commute}
 Consider $(t, w), (t', w') \in \mathbb{Z}_2^N \times \mathbb{Z}_2^N$, and let $S_{\rm cmm}[t, w]$ be the set of $\pi \in \mathbb{Z}_2^N$ such that 
 \begin{equation}
     \left[ D^{\pi}_{\mathcal{C}}(t, w), D^{\pi}_{\mathcal{C}}(t', w')\right] = 0.
 \end{equation}
 Then $S_{\rm cmm}$ is a subgroup of $\mathbb{Z}_2^N$.
\end{lem}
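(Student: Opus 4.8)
The plan is to collapse the commuting condition to a single homogeneous linear equation in the variable $\pi$ over $\mathbb{Z}_2$ and then observe that the solution set of such an equation is automatically a subspace. First I would invoke \cref{lem:repn_cmm_form}, which says that $\left[D^{\pi}_{\mathcal{C}}(t,w), D^{\pi}_{\mathcal{C}}(t',w')\right] = 0$ holds exactly when
\begin{equation}
    \sum_{n=1}^N \pi_n\,(t_n w'_n + t'_n w_n) = 0
\end{equation}
in $\mathbb{Z}_2$. The key point is that with the two group elements $(t,w)$ and $(t',w')$ held fixed, the quantities $c_n := t_n w'_n + t'_n w_n$ are fixed bits, so they assemble into a fixed vector $c \in \mathbb{Z}_2^N$, and the membership condition for $S_{\rm cmm}[t,w]$ becomes simply $\pi \cdot c = 0$, a single homogeneous $\mathbb{Z}_2$-linear constraint on $\pi$.

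From there I would conclude directly that
\begin{equation}
    S_{\rm cmm}[t,w] = \{\pi \in \mathbb{Z}_2^N : \pi \cdot c = 0\}
\end{equation}
is the kernel of the $\mathbb{Z}_2$-linear functional $\pi \mapsto \pi \cdot c$, hence a linear subspace of $\mathbb{Z}_2^N$ and in particular a subgroup. Concretely: $0 \cdot c = 0$, so $0 \in S_{\rm cmm}[t,w]$; and if $\pi, \pi' \in S_{\rm cmm}[t,w]$ then $(\pi + \pi')\cdot c = \pi\cdot c + \pi'\cdot c = 0$, so it is closed under the group operation (and each element is its own inverse in $\mathbb{Z}_2^N$). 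Depending on whether $c = 0$, this subgroup is either all of $\mathbb{Z}_2^N$ or an index-two subgroup of dimension $N-1$; in either case it is a subgroup, which is all that is claimed.

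I do not expect any genuine obstacle here. The entire content of the lemma is that the commuting criterion supplied by \cref{lem:repn_cmm_form} is linear and homogeneous in the running variable $\pi$ once the two fixed elements are specified. The only thing worth stating carefully is that the bilinear-looking expression $\sum_n \pi_n(t_n w'_n + t'_n w_n)$ is truly linear in $\pi$ alone — the factors $t_n w'_n + t'_n w_n$ are constants — so standard linear algebra over $\mathbb{Z}_2$ applies without modification.
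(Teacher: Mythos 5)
Your proposal is correct and follows essentially the same route as the paper: both invoke \cref{lem:repn_cmm_form} to reduce the commuting condition to the single $\mathbb{Z}_2$-linear constraint $\sum_n \pi_n (t_n w'_n + t'_n w_n) = 0$ and then observe closure under addition. Your phrasing of the solution set as the kernel of a linear functional is just a slightly more explicit packaging of the paper's one-line closure check.
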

\begin{proof}
    This follows from \cref{lem:repn_cmm_form}. Explicitly: $\sum_n \pi^1 (t_n w_n' + t_n' w_n) = 0$ and $\sum_n \pi^2 (t_n w_n' + t_n' w_n) = 0$, implies $\sum_n (\pi^1+\pi^2) (t_n w_n' + t_n' w_n) = 0$, so $S_{\rm cmm}$ is a subgroup. 
\end{proof}

With these simple set of properties of bad elements in mind, we can now prove \cref{prop:bad_null_spaces_unlikely}. 

\begin{proof}
    Suppose $K_b$ be a minimal bad null space, in the sense that any bad null space $K$ must contain as a subgroup some such $K_b$.
    
    Recall that this means $K_b$ must contain some element of the form $(t^b, w^b)$, and there exists $J$ for which $t^h_J w^b_J = 1$ but $w^h_J = 0$. Since this means $(t', w') \neq (t^h, w^h)$, $K_b$ contains at least $4$ elements, 
    \begin{equation}
      \begin{aligned}
        (0, 0), (t^h, w^h), (t', w^b + w^r), (t' + t^h, (w^h + w^b) + w^r).
      \end{aligned}
    \end{equation}
    
    Applying \cref{lem:minor_bound_annihilator_mass}, the probability mass of the annihilator of this subgroup is bounded by
\begin{equation}
  \begin{aligned}
    &q(K^{\perp}_b) \\
    &\leq \frac{1}{4} \sum_{(t, w) \in K_b} \sum_{\pi} p(\pi) |  \langle \Psi(\pi) | D_{\mathcal{C}}^{\pi}(t, w)  |\Psi(\pi) \rangle| \\
    &= \frac{1}{2} + \frac{1}{4} \sum_{\pi}  p(\pi) |  \langle \Psi(\pi) | D_{\mathcal{C}}^{\pi}(t^b, w^b)  |\Psi(\pi) \rangle|  \\
    &\qquad + \frac{1}{4} \sum_{\pi}  p(\pi) |  \langle \Psi(\pi) | D_{\mathcal{C}}^{\pi}(t^h +t^b, w^h + w^b)  |\Psi(\pi) \rangle| .
  \end{aligned}
\end{equation}
Our strategy with dealing with these terms, for given $(t, w)$, is to split $\pi$ into two sets: 
\begin{equation}
  \begin{aligned}
    S_{\rm cmm}[t, w]: &\pi \in S_{\rm cmm}[t, w] \\
    &\qquad \implies \left[ D_{\mathcal{C}}^{\pi}(t^h, w^h), D_{\mathcal{C}}^{\pi}(t, w) \right] = 0, \\
    S_{\rm acmm}[t, w]: &\pi \in S_{\rm acmm}[t, w]  \\
    &\qquad \implies \left\{ D_{\mathcal{C}}^{\pi}(t^h, w^h), D_{\mathcal{C}}^{\pi}(t, w) \right\} = 0.
  \end{aligned}
\end{equation}
Since all $D_{\mathcal{C}}^{\pi}$ are paulis, we must have \begin{equation}
  \begin{aligned}
    S_{\rm cmm}[t, w] \cup S_{\rm acmm}[t, w] &= \mathbb{Z}_2^N, 
    \\  S_{\rm cmm}[t, w] \cap S_{\rm acmm}[t, w] &= \emptyset.
  \end{aligned}
\end{equation}
Then, given that $D_{\mathcal{C}}^{\pi}(t^h, w^h) | \Psi(\pi) \rangle = | \Psi(\pi) \rangle$, a sum of the above form then splits and is bounded like,
\begin{equation}
  \begin{aligned}
    & \sum_{\pi}  p(\pi) |  \langle \Psi(\pi) |   D_{\mathcal{C}}^{\pi}(t, w) |\Psi(\pi) \rangle | \\
    &= \sum_{\pi \in S_{\rm cmm}[t, w]}  p(\pi) |  \langle \Psi(\pi) |   D_{\mathcal{C}}^{\pi}(t, w) |\Psi(\pi) \rangle | \\
    &\qquad + \sum_{\pi \in S_{\rm acmm}[t, w]}  p(\pi) |  \langle \Psi(\pi) |   D_{\mathcal{C}}^{\pi}(t, w) |\Psi(\pi) \rangle | \\
    &=  \sum_{\pi \in S_{\rm cmm}[t, w]}  p(\pi) |  \langle \Psi(\pi) |   D_{\mathcal{C}}^{\pi}(t, w) |\Psi(\pi) \rangle | \\
    &\leq  \sum_{\pi \in S_{\rm cmm}[t, w]}  p(\pi).
  \end{aligned}
\end{equation}

Now consider $(t, w) = (t^b, w^b), (t^h + t^b, w^h + w^b)$. Note that both are bad elements, because $w^h_J+w^b_J=1$ when $w^b_J=1$ and $w^h_J=0$. Hence, by \cref{lem:bad_elt_commute}, $S_{\rm cmm}$ forms a group, and by \cref{lem:bad_elt_anticommute}, $S_{\rm acmm}$ is non-empty, so $S_{\rm cmm}$ must be a proper subgroup.

Applying \cref{lem:probability_mass_of_subgroups}, we get that the probability masses of $S_{\rm cmm} \leq 1 - \varepsilon/2$. This implies 
\begin{equation}
\begin{aligned}
    q(K_b^{\perp}) &\leq \frac{1}{2} + \frac{1}{4} \sum_{\pi \in S_{\rm cmm}[t^b, w^b]} p(\pi) \\
    &\qquad + \frac{1}{4} \sum_{\pi \in S_{\rm cmm}[t^b+t^h, w^b+w^h]} p(\pi) \leq 1 - \frac{1}{2} \varepsilon.
\end{aligned}
\end{equation}
We are not quite there yet. As a final step, let us consider a bad subgroup $K$. Then there must exist some minimal bad subgroup $K_b \subseteq K$. This implies $K^{\perp} \subseteq K_b^{\perp}$. Since $q$ is a probability measure, we must have 
\begin{equation}
  q(K^{\perp}) \leq q(K_b^{\perp}) \leq  1 - \frac{1}{2} \varepsilon.
\end{equation}
\end{proof}

The sampling complexity of the Maximal Rotation subroutine follows.
\begin{cor}{Sampling complexity of a good null space}\label{cor:sampling_good_null_space}
    To obtain a good null space from the output of the Bell-resolution subroutine with probability $1-\delta$, it suffices to sample $O(N/\varepsilon)$ times.
\end{cor}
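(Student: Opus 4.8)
The plan is to treat the outcomes of repeated Bell-resolution subroutines as i.i.d.\ draws from the distribution $q$ over the character group $\hat G=\mathbb{Z}_2^N\times\mathbb{Z}_2^N$, and then pit the uniform bound of \cref{prop:bad_null_spaces_unlikely} against the Group Sampling Lemma \cref{lem:group_sampling_lemma}. Each round of the algorithm first builds a fresh Bell-resolvable set (via some number of parity-sampling calls, which by \cref{prop:basis_sampling_complexity} costs $O(N/\varepsilon)$ copies and succeeds with high probability) and then emits one irrep label $(q,p)\in\hat G$. Since the randomness of one round is independent of all the others, the labels $(q^1,p^1),\dots,(q^m,p^m)$ obtained over $m$ rounds are i.i.d.\ from the single ``averaged over Bell-resolvable states'' distribution $q$ of \cref{lem:minor_bound_annihilator_mass}; the complicated conditional dependence of the later parity outcomes $\{\pi^m\}_{m\ge1}$ on the first outcome $\pi^0$ lives entirely inside one round and is already absorbed into $q$ by the derivation of that lemma, so this is all the independence we need.

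First I would recall that the algorithm forms the annihilator subgroup $K^\perp=\langle(q^1,p^1),\dots,(q^m,p^m)\rangle$ directly from the samples and returns its nullspace $K=(K^\perp)^\perp$, so the quantity $q(K^\perp)$ appearing in \cref{prop:bad_null_spaces_unlikely} is literally the probability mass of the span of the sampled characters. Then I would apply \cref{lem:group_sampling_lemma} with group $\hat G$ (so $\log|\hat G|=2N$), confidence parameter $\delta$, and accuracy $\epsilon=\varepsilon/2$: once $m\ge 4\bigl(2N+\log(1/\delta)\bigr)/\varepsilon=O\bigl((N+\log(1/\delta))/\varepsilon\bigr)$ rounds are performed, with probability at least $1-\delta$ we have $q(K^\perp)>1-\varepsilon/2$.

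Finally I would close with a one-line contradiction. By \cref{prop:Bell-resolution_is_Fourier} every sampled character annihilates $(t^h,w^h)$, so $(t^h,w^h)\in K$ and $K$ is an admissible candidate nullspace, hence either good or bad. If $K$ were bad, \cref{prop:bad_null_spaces_unlikely} would force $q(K^\perp)\le 1-\varepsilon/2$, contradicting the mass lower bound just obtained; so on the $(1-\delta)$-probability event $K$ is good, which is the corollary. For fixed $\delta=O(1)$ the round count is $O(N/\varepsilon)$, and keeping the $\log(1/\delta)$ term recovers the sharper bound $M>\varepsilon^{-1}(N+\log\delta^{-1})$ of \cref{prop:sampling_good_null_space}.

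The only genuinely delicate point, and the one I would flag explicitly, is that \emph{no union bound over the exponentially many bad nullspaces is needed}: \cref{prop:bad_null_spaces_unlikely} is a deterministic statement bounding $q(K^\perp)$ \emph{uniformly} over all bad $K$, whereas \cref{lem:group_sampling_lemma} certifies a mass lower bound on the sampled span with no reference to any target subgroup, so the contradiction fires automatically for whichever random $K$ the algorithm outputs. The remaining ingredients — the i.i.d.\ claim and the constant-chasing in \cref{lem:group_sampling_lemma} — are routine.
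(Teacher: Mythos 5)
Your proposal is correct and follows exactly the route the paper takes: its proof of this corollary is literally the one-line citation of \cref{prop:bad_null_spaces_unlikely} together with the Group Sampling Lemma \cref{lem:group_sampling_lemma}, and you have simply unpacked that combination (i.i.d.\ characters from the averaged distribution $q$, mass lower bound on the sampled span, contradiction with the uniform bad-nullspace bound). The only cosmetic quibble is the boundary case $q(K^\perp)=1-\varepsilon/2$, which one dodges by running the Group Sampling Lemma with accuracy slightly below $\varepsilon/2$; this does not change the $O(N/\varepsilon)$ count.
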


\begin{proof}
    This follows from \cref{prop:bad_null_spaces_unlikely} and the group sampling lemma, \cref{lem:group_sampling_lemma}.
\end{proof}

\section{Extension to Arbitrary Representations}\label{sec:arbitrary_reps}

Having solved the problem for $U_2 = D^0 \oplus D^1$, we now proceed to show how this problem generalizes to the StateHSP for an arbitrary representation, $R$. The primary resource we must assume is access to an efficient partial Fourier transform that block-diagonalizes within the multiplicity space. This is an operation $U_{\rm pFT}$ that acts like,
\begin{equation}\label{eq:complete_pFT}
\begin{aligned}
    &U_{\rm pFT}^{\dag} R(t, v, w) U_{\rm pFT} = \sum_{\sigma, \pi \in \mathbb{Z}_2} \sum_{\alpha = 0}^{M_{\sigma, \pi}} \\
    &\qquad | \alpha \rangle \langle \alpha |_A \otimes | \sigma \rangle \langle \sigma |_S \otimes | \pi \rangle \langle \pi |_P \otimes D^{2 \sigma + \pi}(t, v, w),
\end{aligned}
\end{equation}
where $\alpha$ is a multiplicity register, $M_{\sigma, \pi}$ is the multiplicity of the representation $D^{2 \sigma + \pi}$, and $\sigma, \pi$ together label the reps that appear as defined in \cref{eq:dihedral-2d-reps}. For easy reference, we label the registers $A, S, P, D$. There is some arbitrariness in this choice of Fourier transform, for which we provide a couple of remarks.

First, there is some ambiguity in defining $M_{\sigma, \pi}$, since $D^{1}, D^3$ are equivalent and may be labeled either way. We split them arbitrarily simply to make our calculations more symmetric in $\sigma$. All registers are qubit registers except for the multiplicity register, which has an arbitrary dimension depending on the details of the representation. One can always choose this register to match the size of the maximum $M_{\sigma, \pi}$.

Along the same lines, doing a full Fourier transform over $\mathcal{D}_4$ will do as well. The crucial part of this transformation is that it should block-diagonalize even within the multiplicity space, which is a representation-dependent detail. Ultimately, our convention for the Fourier transform in \cref{eq:complete_pFT} was chosen for the purposes of notational simplicity, as well as continuity with the previous algorithm -- note that this is exactly the transformation carried out by the parity projections in \cref{eq:parity-projections}. 

As previously noted, the HSP is simply the StateHSP on the regular representation with $\varepsilon = 1$, solving this will also solve the HSP on $\mathcal{D}_4^N$ \cite{Bouland_arXiv_2024_stateHSP}. On the regular representation, \cref{eq:complete_pFT} has a convenient interpretation as a conditional Fourier transform. In particular, one may encode the group element $r^t s^k$ in a qubit and ququart register as $| t \rangle | k \rangle$. The transformation \cref{eq:complete_pFT} is then carried out by a Fourier transform of the second register, with the sign of the momentum conditional on the first \cite{Bacon_arXiv_2008_dihedral_simon}.

At first glance, one might suspect that the inequivalence of the irreps appearing in $D^0$, $D^2$ obstruct our algorithm. However, we recall from the discussion in \cref{sec:dihedral_prelims} that these are both reducible to 1D irreps, and admit Fourier sampling via a single qubit measurement. Hence, as long as we can pair the 2D irreps, which are equivalent, we can still form the conjugacy ensemble representation, which will still be a representation of $\mathbb{Z}_2^N \times \mathbb{Z}_2^N$. The only difference lies in processing the outcomes of the Bell-resolution step to obtain the irreps, where one has reinterpret the signs of some measurement outcomes.

The algorithm for this generalized Multiple-Squares StateHSP proceeds almost identically to before, subject to several modifications. 
\begin{enumerate}
    \item Instead of performing parity sampling, we need to perform what we wll refer to as partial Fourier sampling, namely we will measure the labels $\alpha, \sigma, \pi$. We refer to this as `partial' Fourier sampling as we will not need to measure the 1D irreps at this stage. We note this is similar to the sampling used to obtain coset states in the HSP version of this problem \cite{Bacon_arXiv_2008_dihedral_simon, Kuperberg_arXiv_2004_sieve, childs_quantum_2010}.
    \item The states we obtain from partial Fourier sampling now have three labels $\alpha, \sigma, \pi$. We will form Bell-resolvable sets in the same way, by requiring that $\sum_{m=0}^M \pi^m = 0$. However, the details of the representation will also depend on $\sigma$, and the (irrelevant) details of the sampled state further depend on $\alpha$. The representation is still of $\mathbb{Z}_2^N \times \mathbb{Z}_2^N$, however, the classical processing of the measurement outcomes of the  Bell-resolution subroutine must be modified accordingly.
    \item When we obtain $w^{\rm max}$, instead of applying $T$ gates, we will apply controlled-S operations in positions dictated by $w^{\rm max}$, conjugated by $U_{\rm pFT}$.
    \item We will still use the Learning Pauli Stabilizers subroutine with the small modification that we are now learning a representation of the phaseless Pauli group $\mathbb{Z}_2^N\times\mathbb{Z}_2^N<\mathcal{D}_4^N$ embedded in $R^N$ rather than actual Pauli stabilizers. Due to \cref{fact:abelian_stateHSP}, this is no harder than the original Pauli learning problem.
\end{enumerate}

We now formally state the generalized problem.

\begin{defn}[General Multiple-Squares StateHSP]\label{def:Dihedral_StateHSP_II}
    Let $| \Psi \rangle$ be a state on $N$ qudits, and let $R^N$ denote some tensor product representation of $\mathcal{D}_4^{N}$ (so that $R$ is a representation of $\mathcal{D}_4$).

    We are promised that $| \Psi \rangle$ hides an involution, which is a subgroup of the form $\mathcal{H} = \{(0, 0), (t^h, v^h, w^h)\} < \mathcal{D}_4^{N}$, where $t^h \neq 0$, such that the state satisfies 
    \begin{gather*} 
        R^N(t^h, v^h, w^h)  | \Psi \rangle = | \Psi \rangle, \\
        | \langle \Psi | R^N(t, v, w) | \Psi \rangle | \leq 1 - \varepsilon, \mbox{ for all } (t, v, w) \notin\mathcal{H}. 
    \end{gather*}   
    Given access to copies of $|\Psi \rangle$, the problem is to identify $\mathcal{H}$, or equivalently $(t^h, v^h, w^h)$. We assume access to the complete partial Fourier Transform identified in \cref{eq:complete_pFT}, as well as controlled-$S$ gates.
\end{defn}

We state the main result of this section as a theorem.

\begin{thm}[General Multiple-Squares StateHSP]\label{thm:Dihedral_StateHSP_theorem_II}
  There exists an algorithm that solves the General Multiple-Squares StateHSP problem (\cref{def:Dihedral_StateHSP_II}) with high probability using $O(N^2/\varepsilon^2)$ samples.
\end{thm}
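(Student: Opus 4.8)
The plan is to reduce the general representation case to the structure already established for the doubled representation $U_2 = D^0 \oplus D^1$, by exploiting the fact that the complete partial Fourier transform $U_{\rm pFT}$ of \cref{eq:complete_pFT} decomposes $R^N$ site-by-site into a multiplicity register $A$, a ``$\sigma$'' register distinguishing the $D^0/D^2$ sector from the $D^1/D^3$ sector, a parity register $P$, and the representation register $D$ carrying one of $D^0, D^1, D^2, D^3$. The key point, already flagged in the text, is that on each site all four of these two-dimensional representations are either the irreducible 2D irrep (for $\sigma=1$, i.e. $D^1$ or $D^3$, which are equivalent) or decompose into 1D irreps (for $\sigma=0$, i.e. $D^0$ or $D^2$), exactly mirroring the situation in \cref{sec:dihedral_prelims}. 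So after applying $U_{\rm pFT}$ and measuring the $A, S, P$ registers (``partial Fourier sampling''), we obtain a post-measurement state on $N$ qubit-registers $D$ carrying the representation $\bigotimes_n D^{2\sigma_n + \pi_n}$, which is the direct analogue of the parity-projected representation $D^\pi$ of \cref{eq:parity-rep}, now additionally decorated by $\sigma$.

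First I would establish the analogue of \cref{lem:probability_mass_of_subgroups}: the operators $R^N(0,v,w)$ with $t=0$ that survive restriction to $\langle s^2\rangle^N$ (equivalently the center) are measured by the $P$ (parity) register, and the StateHSP $\varepsilon$-promise on $R^N$ descends to give the same anti-concentration bound $P(K)\le 1-(1-2^{-(N-M)})\varepsilon$ on subgroups of $\mathbb{Z}_2^N$ obtained from partial Fourier sampling of the $\pi$ labels. This uses only that $R^N(0,v,w)$ with $w$-support on a nonzero set has expectation bounded by $1-\varepsilon$, which is part of the promise; nothing about the multiplicities matters because the $A$ register is simply traced/recorded and is independent of the action of the center. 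Consequently, \cref{prop:basis_sampling_complexity} goes through verbatim: $O(N/\varepsilon)$ samples suffice to collect a spanning set of $\pi$-outcomes, hence a Bell-resolvable set (now requiring $\sum_m \pi^m = 0$, with the $\sigma^m, \alpha^m$ labels coming along for the ride).

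Next I would redo the abelianization: \cref{prop:conjugacy_ensemble_representation} holds because a Bell-resolvable set again forces an even number of 2D irreps on every site, so $\langle s^2\rangle^N$ acts trivially and we get a representation $D_\mathcal{C}^{\{\pi^m,\sigma^m\}}$ of $\mathbb{Z}_2^N \times \mathbb{Z}_2^N$, with the hidden element still stabilizing the projected state (projectors onto the $A,S,P$ labels commute with the group action since those registers are built from $U_{\rm pFT}$-conjugated central/representation-type data). The one genuine modification is in the Bell-resolution subroutine: when $\sigma_n^m = 1$ we use the irrep decompositions \cref{eq:d1-d1-irrep-decomp} (and $D_\mathcal{C}^{\{1,3\}}$ when a $D^1$ is paired with a $D^3$, which is why the sign-bookkeeping changes — some $C^{q,p}$ assignments get the roles of the Bell states permuted), and when $\sigma_n^m=0$ we use \cref{eq:d0-irrep-decomp,eq:pauli_irreps} style single-qubit measurements, reinterpreting signs for $D^2$ versus $D^0$. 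Tracking these sign flips carefully is the main bookkeeping obstacle; the claim is that after the (representation-data-dependent but classically known) relabelling, the output $(q,p)\in\mathbb{Z}_2^N\times\mathbb{Z}_2^N$ still satisfies $q\cdot t^h + p\cdot w^h = 0$, i.e. \cref{prop:Bell-resolution_is_Fourier} survives. Given this, \cref{lem:repn_cmm_form,lem:bad_elt_anticommute,lem:bad_elt_commute,lem:minor_bound_annihilator_mass} and hence \cref{prop:bad_null_spaces_unlikely} all carry over unchanged, since their proofs only use Pauli (anti)commutation on the $D$ registers and the central anti-concentration bound; so \cref{cor:sampling_good_null_space} gives a good nullspace in $O(N/\varepsilon)$ Bell-resolvable sets.

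Finally, the maximal-$T$-rotation step becomes a maximal controlled-$S$ rotation: conjugating $R^N$ by $U_{\rm pFT}$ puts the representation into the block-diagonal form of \cref{eq:complete_pFT}, and on each $D^j$ block a $\mathrm{C}\text{-}S$ (controlled on the $P$ register, say, or directly the gate $e^{i\pi Z/4}$ on register $D$ conditioned on $\sigma$) implements precisely the rotation $s \mapsto s^{-1}$-type shift that sent $D^j \to D^{j-1}$ in \cref{prop:maximal_T_rotation}; stacking these in the positions where $w^{\rm max}_n = 1$, all conjugated back by $U_{\rm pFT}$, sends $R^N(t^h,v^h,w^h)$ to $R^N(t^h,v^h,0)$. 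The resulting state satisfies the promise for the phaseless Pauli subgroup $\{(t,v,0)\}\cong\mathbb{Z}_2^N\times\mathbb{Z}_2^N$ embedded in $R^N$, exactly as in \cref{prop:maximal_T_rotation_abelian_stateHSP} — the $\varepsilon$-bound transfers because for each $(t,v)\neq(t^h,v^h)$ there is a preimage $(t,v,w')$ to which the original promise applies. This is a (possibly projective, but that is handled by doubling implicitly already present in the structure, or by \cref{fact:abelian_stateHSP}) abelian StateHSP, solvable with $O(N/\varepsilon)$ further samples. Multiplying: $O(N/\varepsilon)$ Bell-resolvable sets, each costing $O(N/\varepsilon)$ copies, plus the $O(N/\varepsilon)$-copy final abelian step, yields the claimed $O(N^2/\varepsilon^2)$ sample complexity. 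The main obstacle throughout is purely combinatorial: verifying that the Bell-resolution sign relabelling induced by the $\sigma$ labels and by the $D^1$-versus-$D^3$ and $D^0$-versus-$D^2$ splits preserves the linear constraint $q\cdot t^h + p\cdot w^h = 0$; everything else is a faithful transcription of the proofs in \cref{sec:proof_correctness}.
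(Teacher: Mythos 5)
Your proposal follows essentially the same route as the paper: partial Fourier sampling via $U_{\rm pFT}$, anti-concentration of the $\pi$ labels from the center of the group, Bell-resolvable sets with $\sum_m\pi^m=0$, the same commutation-based argument for good nullspaces, a controlled-$S$ correction in place of the $T$ gates, and a final Learning Pauli-like Stabilizers step, with the identical $O(N/\varepsilon)\times O(N/\varepsilon)$ accounting. Two small points: first, the operators driving the anti-concentration are the center elements $R^N(0,v,0)$ with $v\neq 0$ (you wrote ``$R^N(0,v,w)$ with $w$-support on a nonzero set,'' which conflates $v$ and $w$; the paper's Lemma on parity sampling shows the $P$ register measures exactly $R(0,1,0)$ per site). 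Second, the sign bookkeeping you flag as the ``main obstacle'' is resolved in the paper by computing the conjugacy ensemble representation explicitly: the only change from the doubled case is an overall factor $(-1)^{\sum_j(\sum_m\sigma_j^m)w_j}$, which is itself a character of $\mathbb{Z}_2^N\times\mathbb{Z}_2^N$ and therefore just shifts the sampled irrep label $(q,p)$ by the classically known vector $(0,\sum_m\sigma^m)$, preserving the constraint $q\cdot t^h+p\cdot w^h=0$ after relabelling and leaving all commutators untouched.
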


As a corollary, we obtain the traditional HSP.
\begin{cor}
    The same algorithm solves the Multiple-Squares HSP on the regular representation using $O(N^2)$ samples.
\end{cor}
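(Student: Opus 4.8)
\emph{Proof proposal.} The plan is to recognize the regular-representation HSP on $\mathcal{D}_4^N$ as an instance of the General Multiple-Squares StateHSP of \cref{def:Dihedral_StateHSP_II} with $\varepsilon=1$ and then quote \cref{thm:Dihedral_StateHSP_theorem_II}. First I would record the structural facts that make this fit: the (left) regular representation of $\mathcal{D}_4^N$ is the $N$-fold tensor power of the $8$-dimensional regular representation $R$ of a single $\mathcal{D}_4$, so it is a legitimate $R^N$; and on the regular representation the partial Fourier transform \cref{eq:complete_pFT} is precisely a conditional quantum Fourier transform over $\mathbb{Z}_4$ with the sign of the momentum controlled on the reflection bit~\cite{Bacon_arXiv_2008_dihedral_simon}, which is efficiently implementable, while controlled-$S$ gates are elementary. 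Hence all the resource assumptions of \cref{thm:Dihedral_StateHSP_theorem_II} are available for free in the standard HSP black-box model.

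Next I would verify the promise for a hidden subgroup $H=\{e,h\}$ generated by an involution $h=(t^h,v^h,w^h)$ with $t^h\neq0$. The usual HSP procedure produces coset states $\frac{1}{\sqrt2}(|g\rangle+|gh\rangle)$; by the group law such a state is fixed by the regular action of $h$, and for any $g'\notin H$ the cosets $\{g,gh\}$ and $\{g'g,g'gh\}$ are disjoint, so the inner product of the two states vanishes and the StateHSP promise holds with $\varepsilon=1$. Feeding this into \cref{thm:Dihedral_StateHSP_theorem_II} identifies $h$ from $O(N^2/\varepsilon^2)=O(N^2)$ coset states, i.e.\ $O(N^2)$ oracle calls, recovering the claimed bound and improving on the $O(N^3)$ of \cite{Bacon_arXiv_2008_dihedral_simon}. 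The case $t^h=0$, where $h$ lies in the normal abelian subgroup $\mathbb{Z}_4^N=\langle s\rangle^N$, is the abelian HSP and is handled by \cref{fact:abelian_stateHSP} in $O(N)$ samples.

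For a fully general hidden subgroup $H\le\mathcal{D}_4^N$ one extra reduction is needed: restricting the hiding function to $\mathbb{Z}_4^N$ hides the abelian subgroup $H\cap\mathbb{Z}_4^N$, which by \cref{fact:abelian_stateHSP} with $\varepsilon=1$ is recovered in $O(N)$ samples and pins down all rotational data of $H$; it then remains to determine the reflection structure of $H$ modulo $\mathbb{Z}_4^N$, which after passing to an appropriate quotient again has the form of a hidden-involution problem solved by the paragraph above. The step I expect to be the main obstacle is exactly this last reduction: one must ensure that the reflection part of an arbitrary $H$ is pinned down by only a \emph{bounded} number of hidden-involution subproblems (a naive ``one reflection location at a time'' reduction would cost $\Theta(N)$ invocations and hence $O(N^3)$ samples), and that the quotient construction remains compatible with the tensor-product representation required by \cref{def:Dihedral_StateHSP_II}. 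In the order-two hidden-subgroup setting that is the focus of this paper this subtlety is absent, and the corollary is an immediate specialization of \cref{thm:Dihedral_StateHSP_theorem_II}.
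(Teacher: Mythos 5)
Your proposal is correct and matches the paper's (largely implicit) argument: recognize the regular representation as a tensor-product representation on which the partial Fourier transform of \cref{eq:complete_pFT} is the conditional QFT, check that coset states satisfy the StateHSP promise with $\varepsilon=1$, and specialize \cref{thm:Dihedral_StateHSP_theorem_II} to get $O(N^2)$ samples. Your final paragraph on fully general hidden subgroups is a digression beyond what the corollary claims (the paper restricts throughout to hidden involutions), but you correctly flag it as such.
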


In the next section, we will detail systematically the modified algorithm. Subsequently we will address the modified subroutines and argue for their correctness. Most of the groundwork has already been established in \cref{sec:proof_correctness}; to avoid repeating too much material, we will proceed much more informally than before, primarily emphasizing differences from \cref{sec:proof_correctness} and the changes that must be made to adapt to the General Multiple-Squares HSP.

\subsection{The algorithm and its subroutines}

The algorithm proceeds almost identically, with only the details of the subroutines modified.

\begin{widetext}
General Multiple-Squares StateHSP algorithm:
\begin{enumerate}
    \item Perform the \textbf{Learning Pauli-like Stabilizers} subroutine (\cref{def:pauli_sampling}) once, which simply performs Fourier sampling on the $\mathbb{Z}_2^N \times \mathbb{Z}_2^N$ subgroup of $\mathcal{D}_4^N$ comprising of elements where $w = 0$ everywhere. If the $w^h = 0$ everywhere or $t^h = 0$ everywhere, we are finished. Otherwise, continue.
    \item Determine $w^\text{max}$, which gives the conjugacy class of the reflections on each factor in the hidden element~$h$.
    This is done in two steps:
    \begin{enumerate}
        \item Perform Fourier sampling on the $\mathbb{Z}_2^N\times\mathbb{Z}_2^N$ quotient of $\mathcal{D}_4^N$ by its center to generate the annihilator subgroup $K^\perp$.
        Samples $(p,q)$ are obtained by
        \begin{enumerate}
            \item Repeating the \textbf{partial Fourier sampling subroutine} (\cref{def:partial_fourier_sampling}) until one accumulates a \textbf{Bell-resolvable set} that includes the first sample (requiring independent sampling of the first parity outcome ensures sufficiently rich Bell-resolvable sets).
            \item Performing the \textbf{Bell-resolution subroutine}. This is identical to before, but with a modified classical processing step (\cref{sssec:mod_bell_res}).
        \end{enumerate}
        \item Obtain $w^\text{max}$ from $K^\perp$ using the \textbf{maximal rotation subroutine}.
    \end{enumerate}
    \item Use $w^\text{max}$ to determine where to apply the \textbf{S correction} (\cref{def:S_correction}) to $|\Psi\rangle$. This converts $|\Psi\rangle$ to something stabilized by a $R^N(t^h, v^h, 0)$.
    \item Learn $t^h, v^h$ with \textbf{Learning Pauli-like Stabilizers}. 
    \item Compute $w^h = t^h \odot w^{\rm max}$ to obtain $w^h$. 
\end{enumerate}
\end{widetext}

\subsubsection{The Modified Subroutines}

\begin{defn}[Learning Pauli-like Stabilizers]\label{def:pauli_sampling}
  The Learning Pauli-like Stabilizers subroutine takes as input a state $| \Psi \rangle$ fulfilling the General Multiple-Squares StateHSP promise, and outputs any hidden subgroup of the form $\{(0, 0, 0), (t, v, 0)\}$ if it exists. Otherwise, it gives a null output (or equivalently the trivial subgroup).
  It consists of the following steps:
  \begin{enumerate}
      \item On $O(N/\varepsilon)$ copies of $| \Psi \rangle$, act with $U_{\rm pFT}^{\dag}$ on every site, then perform $X$ measurements on the $D$ register. 
      \item Process the output as an abelian StateHSP to identify any hidden subgroup.
  \end{enumerate}
\end{defn}

This performs Fourier sampling on the abelian subgroup
\begin{equation}
    \{(t, v, 0) \in \mathcal{D}_4^N : \forall t, v \in \mathbb{Z}_2^N \} \equiv \mathbb{Z}_2^N \times \mathbb{Z}_2^N \leq \mathcal{D}_4^N,
\end{equation}
which is a representation of the phaseless Pauli group $\mathcal{P}_N/\{\pm 1, \pm i\}$.

The correctness of this subroutine is quickly shown, by noting that \cref{def:pauli_sampling} performs Fourier sampling on this subgroup, which follows from \cref{eq:pauli_irreps} and the surrounding discussion. Hence, as a corollary to \cref{fact:abelian_stateHSP}, we obtain
\begin{cor}[Learning Pauli-like Stabilizers]
    Suppose the hidden stabilizer is of the form $(t^h, v^h, 0)$. Then the procedure of \cref{def:pauli_sampling} learns $t^h, v^h$ with high probability using $O(N/\varepsilon)$ samples.
\end{cor}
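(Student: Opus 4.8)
The plan is to deduce this directly from the abelian StateHSP result, \cref{fact:abelian_stateHSP}, once we have checked that the procedure of \cref{def:pauli_sampling} genuinely implements Fourier sampling on the right abelian group. First I would set $G' = \{(t,v,0) : t,v\in\mathbb{Z}_2^N\}\leq\mathcal{D}_4^N$, the phaseless Pauli subgroup, which is isomorphic to $\mathbb{Z}_2^N\times\mathbb{Z}_2^N$ and is an honest subgroup (not a quotient), so that $R^N$ restricted to $G'$ is an actual unitary representation of $G'$. I would then verify that $(|\Psi\rangle,\,R^N|_{G'})$ is a bona fide instance of the abelian StateHSP with hidden subgroup $H=\{(0,0,0),(t^h,v^h,0)\}$: the fixed-point property is the hypothesis $R^N(t^h,v^h,0)|\Psi\rangle=|\Psi\rangle$, and the $\varepsilon$-separation property holds because every $(t,v,0)$ with $(t,v)\neq(t^h,v^h)$ lies outside $\mathcal{H}$, so the promise in \cref{def:Dihedral_StateHSP_II} gives $|\langle\Psi|R^N(t,v,0)|\Psi\rangle|\leq 1-\varepsilon$.

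The one substantive step is to show that acting with $U_{\rm pFT}^\dagger$ on every site and then making the measurements of \cref{def:pauli_sampling} — the $X$ measurement on each $D_n$ register, together with the (now classical) irrep-type labels $(\sigma_n,\pi_n)$ in the $S_n,P_n$ registers — realizes the character POVM of \cref{lem:character_POVM} for $G'$. By \cref{eq:complete_pFT}, $U_{\rm pFT}^\dagger R^N(t,v,0)U_{\rm pFT}$ is the identity on the multiplicity registers $A_n$ and acts as $D^{2\sigma_n+\pi_n}(t,v,0)$ on $D_n$ within each $(\sigma_n,\pi_n)$ block; and \cref{eq:pauli_irreps} shows that each $D^j(t,v,0)$ is already diagonal in the $X$-eigenbasis of $D_n$, with eigenvalues the $\mathbb{Z}_2\times\mathbb{Z}_2$ characters $C^{a_n,b_n}(t,v)=(-1)^{a_n t_n + b_n v_n}$, where $(a_n,b_n)$ is the explicit function of $(\sigma_n,\pi_n)$ and the $X$-outcome read off from the four cases of \cref{eq:pauli_irreps}. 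Hence the measurement projects onto a joint character of $G'$ with exactly the character-POVM statistics, the multiplicity register carrying no information that could be lost — this is the same fine-graining observation used in the proof of \cref{prop:Bell-resolution_is_Fourier}. I expect this bookkeeping to be the only real obstacle: one must be careful that no label is discarded (in particular the $(\sigma_n,\pi_n)$ outcomes must be kept alongside the $X$-outcome) so that a complete character of $G'$, rather than a coarser label, is obtained.

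Finally I would invoke \cref{fact:abelian_stateHSP} with $G=G'$, $|G'|=4^N$: the abelian StateHSP algorithm is just to Fourier-sample repeatedly and return the subgroup annihilated by the span of the sampled characters, which is exactly the ``process the output as an abelian StateHSP'' step of \cref{def:pauli_sampling}, and $O(\log|G'|/\varepsilon)=O(N/\varepsilon)$ samples suffice to recover $H$, hence $(t^h,v^h)$, with high probability. This yields the claimed $O(N/\varepsilon)$ sample complexity.
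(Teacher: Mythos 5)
Your proposal is correct and follows essentially the same route as the paper: the paper likewise observes that \cref{def:pauli_sampling} performs Fourier sampling on the phaseless Pauli subgroup $\{(t,v,0)\}\cong\mathbb{Z}_2^N\times\mathbb{Z}_2^N$, citing \cref{eq:pauli_irreps} and the surrounding discussion (the $X$-measurement diagonalizes each $D^j(t,v,0)$ into $\mathbb{Z}_2\times\mathbb{Z}_2$ characters indexed by the irrep labels and the $X$-outcome), and then invokes \cref{fact:abelian_stateHSP} to get $O(\log|G'|/\varepsilon)=O(N/\varepsilon)$ samples. Your treatment is in fact somewhat more explicit than the paper's about the bookkeeping of the $(\sigma_n,\pi_n)$ labels and the verification of the $\varepsilon$-separation promise.
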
 

The next subroutine also involves Fourier sampling, but on the full $\mathcal{D}_4^N$ instead. 

\begin{defn}[Partial Fourier Sampling]\label{def:partial_fourier_sampling}
    The Partial Fourier Sampling subroutine takes as input a state $| \Psi \rangle$ fulfilling the Regular Multiple-Squares StateHSP promise, and gives as output a triplet of bitstrings $\alpha, \sigma, \pi \in \mathbb{Z}_2^N$, and a state $| \Psi(\sigma, \pi) \rangle$, via the following steps:
    \begin{enumerate}
        \item Perform $U_{\rm pFT}^{\dag}$ on the state $| \Psi \rangle$.
        \item Measure the first three registers in the computational basis to obtain the rep labels $(\alpha, \sigma, \pi)$. Note that $\alpha$ is never used.
        \item The remaining post-measurement state is $| \Psi^{\alpha}(\sigma, \pi) \rangle$, and is an $N$ \textit{qubit} state.
    \end{enumerate}
\end{defn}

To account for the larger number of new outputs from partial Fourier sampling, we provide a modified (almost identical to before) definition of the Bell-resolvable set.

\begin{defn}[Bell-resolvable set II]\label{def:Bell_resolvable_set_II}
  A Bell-resolvable set is a set of bitstrings $\{\alpha^m, \sigma^m, \pi^m \in \mathbb{Z}_2^N\}_{m=0}^M$, obtained from the partial Fourier sampling subroutine, which satisfy $\sum_{m=0}^M \pi^m_j = 0$, together with the state $\bigotimes_{m=0}^M | \Psi^{\alpha^m}(\sigma^m, \pi^m) \rangle$.
\end{defn}

The Bell-resolution subroutine is identical to before, up to a reinterpretation of measurement results, which we will explicitly spell out in \cref{sec:processing-classical-outcomes} for completeness. Similarly, the maximal rotation subroutine is exactly identical to before. 

The final subroutine is a generalization of applying $T$ gates in the previous algorithm.
\begin{defn}[S Correction]\label{def:S_correction}
    Given $w^{\rm max} \in \mathbb{Z}_2^N$, the S correction refers to the following circuit: 
    \begin{equation}
        \bigotimes_{j=1}^N (U_{\rm pFT})_j cc_{S, P \rightarrow D}Z_j (c_{P \rightarrow D} S^{\dag})^{w^{\rm max}_j} (U_{\rm pFT})_j^{\dag},
    \end{equation}
    where $S = e^{i \pi Z/4}$.
\end{defn}

This corresponds to the $T^{\dag} \otimes T^{\dag}$ part of the previous algorithm. Note that while we have written the above as a coherent controlled operation, we may push this through the final Learning Pauli-like Stabilizers, on which we only require classically controlled gates. We will briefly describe this later.

Having defined the modified subroutines, what do we have to show? In \cref{ssec:pFT_anticoncentration}, we will show an analogue of \cref{prop:basis_sampling_complexity}, which demonstrates that partial Fourier Sampling shares the same anti-concentration properties as parity sampling. In \cref{sssec:mod_bell_res}, we spell out the modified conjugacy ensemble representation as a representation of $\mathbb{Z}_2^N \times \mathbb{Z}_2^N$. In \cref{sssec:good_nullspaces_still_possible}, we will argue that good null spaces are still possible in this more general setup. Finally, in \cref{sssec:S_corr_correct}, we will show that the S correction does map the hidden element to an abelian subgroup of $\mathcal{D}_4^N$, leading to an abelian StateHSP.
Together, these constitute a proof for \cref{thm:Dihedral_StateHSP_theorem_II}.

\subsection{The generalized subroutines}

\subsubsection{Partial Fourier Sampling is anti-concentrated}\label{ssec:pFT_anticoncentration}

In this section, we will argue that partial Fourier sampling is anti-concentrated in the parity $\pi$. First, we want to understand what measuring the parity does in terms of a group action. 

\begin{lem}[Parity sampling measures a group action]\label{lem:parity_sampling_group_action}
    Let $(\sigma_j, \pi_j)$ be some output of the Partial Fourier Sampling subroutine on the $j$-th site. Then, $\pi_j$ is distributed as the measurement outcomes of the operator $R(0, 1, 0)$.
\end{lem}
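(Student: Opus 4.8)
The plan is to unfold the definition of the Partial Fourier Sampling subroutine and reduce the claim to a one-line identity extracted from the complete partial Fourier transform \cref{eq:complete_pFT}. Writing $U_{\rm pFT}=\bigotimes_j(U_{\rm pFT})_j$, the subroutine applies $(U_{\rm pFT})_j^\dagger$ to each site and then performs a complete projective measurement of the $A$, $S$, $P$ registers of every site in the computational basis. Marginalizing the outcome over everything except the bit $\pi_j$ collapses that complete measurement to the identity on all the other registers, so $\Pr[\pi_j=b]=\langle\Psi|\,(U_{\rm pFT})_j\big(I\otimes|b\rangle\langle b|_{P,j}\otimes I\big)(U_{\rm pFT})_j^\dagger\,|\Psi\rangle$, where $|b\rangle\langle b|_{P,j}$ projects the $P$ register of site $j$ onto $\pi_j=b$. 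It then remains only to recognize this pulled-back projector as a spectral projector of $R(0,1,0)$ acting on site $j$ (by which we mean $R^N$ evaluated at the element that is $(0,1,0)=s^2$ on factor $j$ and trivial elsewhere; since $(s^2)^2=e$ this is an involution, so ``measuring'' it means projecting onto its $\pm1$ eigenspaces).

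The first real ingredient is to evaluate $R(0,1,0)$ through the transform. Because $s^2$ generates the center, Schur's lemma forces it to act as a scalar on each irrep block; explicitly, from $D^j(r^ts^k)=X^t(iZ)^{kj}$ with $t=0$, $k=2$ we get $D^{2\sigma+\pi}(0,1,0)=(iZ)^{2(2\sigma+\pi)}=(-I)^{2\sigma+\pi}=(-1)^{\pi}I_D$, which is independent of $\sigma$ and of the multiplicity label $\alpha$. Substituting this into \cref{eq:complete_pFT} gives $(U_{\rm pFT})_j^\dagger R(0,1,0)(U_{\rm pFT})_j=\sum_{\alpha,\sigma,\pi}(-1)^{\pi}\,|\alpha\rangle\langle\alpha|_A\otimes|\sigma\rangle\langle\sigma|_S\otimes|\pi\rangle\langle\pi|_P\otimes I_D$; using that the case $(t,v,w)=(0,0,0)$ of \cref{eq:complete_pFT} shows $\sum_{\alpha,\sigma,\pi}|\alpha\rangle\langle\alpha|_A\otimes|\sigma\rangle\langle\sigma|_S\otimes|\pi\rangle\langle\pi|_P=I$, this conjugated operator is exactly the observable whose value is $(-1)^{\pi_j}$, i.e. the Pauli $Z$ on the $P$ register of site $j$ (with $|0\rangle$ the $+1$ eigenstate), tensored with identity elsewhere.

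Combining the two pieces finishes the argument: writing $|b\rangle\langle b|_{P,j}=\tfrac12\big(I+(-1)^b Z_{P,j}\big)$ and using the identity just derived, $(U_{\rm pFT})_j\big(I\otimes|b\rangle\langle b|_{P,j}\otimes I\big)(U_{\rm pFT})_j^\dagger=\tfrac12\big(I+(-1)^b R(0,1,0)\big)$, which is precisely the projector onto the $(-1)^b$ eigenspace of the involution $R(0,1,0)$ on site $j$. Hence $\Pr[\pi_j=b]=\langle\Psi|\,\tfrac12\big(I+(-1)^b R(0,1,0)\big)|\Psi\rangle$, the Born-rule probability of obtaining eigenvalue $(-1)^b$ when measuring $R(0,1,0)$ on $|\Psi\rangle$ — with the convention that $\pi_j=1$ labels the $-1$ outcome, matching the parity-sampling convention of \cref{defn:parity_sampling}. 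I do not expect any genuine obstacle here; the only thing to track carefully is the bookkeeping, namely that marginalizing over the $A$, $S$ registers and the $P$ registers of the other sites really does collapse to the identity on those factors, and that the outcome-label-to-eigenvalue convention is kept consistent throughout.
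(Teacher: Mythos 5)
Your proof is correct and follows essentially the same route as the paper's: both reduce the claim to the identity $D^{2\sigma+\pi}(0,1,0)=(-1)^{\pi}I$, which shows that conjugating the Pauli $Z$ on the $P$ register by $U_{\rm pFT}$ yields exactly $R(0,1,0)$. The extra bookkeeping you supply (marginalizing over the $A$, $S$ and other-site registers, and the spectral-projector/Born-rule step) is left implicit in the paper but is consistent with it, including the sign convention relating $\pi_j$ to the $\pm1$ eigenvalues.
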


\begin{proof}
    The partial Fourier Sampling subroutine takes $| \Psi \rangle$, acts on it with $U_{\rm pFT}^{\dag}$, and does a $Z$ basis measurement on the $\pi$ register. Hence, we just need to compute $U_{\rm pFT} Z_{P} U_{\rm pFT}^{\dag}$ to get the associated observable, where $Z_P$ simply denotes a $Z$ on the $P$ register, which may be represented as, 
    \begin{equation}
        \begin{aligned}
            &Z_{P} \\  
            &= \sum_{\sigma \in \mathbb{Z}_2} \sum_{\alpha = 0}^{M_{\sigma, 0}} | \alpha \rangle \langle \alpha |_A \otimes | \sigma \rangle \langle \sigma |_S \otimes | 0 \rangle \langle 0 |_P \otimes I \\
            &\qquad -  \sum_{\sigma \in \mathbb{Z}_2} \sum_{\alpha = 0}^{M_{\sigma,  1}} | \alpha \rangle \langle \alpha |_A \otimes | \sigma \rangle \langle \sigma |_S \otimes  | 1 \rangle \langle 1 |_P \otimes I\\
            &= \sum_{\sigma, \pi \in \mathbb{Z}_2} \sum_{\alpha = 0}^{M_{\sigma, \pi}} | \alpha \rangle \langle \alpha | \otimes | \sigma \rangle \langle \sigma | \otimes | \pi \rangle \langle \pi | \otimes (-1)^{\pi} I \\
            &= \sum_{\sigma, \pi \in \mathbb{Z}_2} \sum_{\alpha = 0}^{M_{\sigma, \pi}} | \alpha \rangle \langle \alpha | \otimes | \sigma \rangle \langle \sigma | \otimes | \pi \rangle \langle \pi | \otimes D^{2 \sigma + \pi}(0, 1, 0).
        \end{aligned}
    \end{equation}
    Hence, applying the inverse of \cref{eq:complete_pFT}, we have 
    \begin{equation}
        U_{\rm pFT} Z_{\pi} U_{\rm pFT}^{\dag} = R(0, 1, 0).
    \end{equation}
\end{proof}

Using \cref{lem:parity_sampling_group_action}, we can write the outcome of measurements of $\pi \in \mathbb{Z}_2^N$ across all sites as a sum of expectation values of representations of the form $R(0, v, 0)$. Following the proof of \cref{prop:basis_sampling_complexity}, we obtain:

\begin{prop}[Parity sampling complexity in general representations ]\label{prop:parity_sampling_complexity_general}
    Let $|\Psi \rangle$ be a state satisfying the assumptions of the General Multiple-Squares StateHSP. Suppose we have, in Step 1 of the algorithm, ruled out the possibility that $t^h = 0$, and let us obtain bitstrings $\pi^m$ from $| \Psi \rangle$ via the partial parity sampling subroutine. To obtain a set of $\{\pi^m\}_{m=1}^M$ (along with arbitrary $\alpha^m, \sigma^m$) that span $\mathbb{Z}_2^N$ with high probability, it suffices to obtain $O(N / \varepsilon)$ samples. 
\end{prop}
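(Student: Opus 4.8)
The plan is to replay the proof of \cref{prop:basis_sampling_complexity} almost line for line, with the operator family $Z^{\otimes 2}[d]$ replaced by $R^N(0,d,0)$, and to supply the single new input that makes the substitution go through. First I would pin down the probabilistic object: by \cref{lem:parity_sampling_group_action}, the bit $\pi_j$ produced on site $j$ is the $\pm1$ outcome of a projective measurement of $R_j(0,1,0)=R_j(s^2)$, and since $R$ is unitary with $R(s^2)^2 = R(s^4) = R(e) = I$, this operator is a Hermitian involution. The single-site measurements act on distinct tensor factors (and recording the unused labels $\alpha_j,\sigma_j$ does not alter the $\pi_j$-marginal, since $\alpha,\sigma,\pi$ are all simultaneously diagonalized by $U_{\rm pFT}$), so the full bitstring $\pi\in\mathbb{Z}_2^N$ is distributed exactly as the joint measurement of the commuting involutions $\{R^N_n(s^2)\}_{n=1}^N$, equivalently of the family $\{R^N(0,v,0)\}_{v\in\mathbb{Z}_2^N}$ with $R^N(0,v,0)^2=I$. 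This is the precise analogue of the role played by the transversal $Z_{n,A}\otimes Z_{n,B}$ measurements in \cref{sec:proof_correctness}, so the Fourier-expansion computation in the proof of \cref{lem:probability_mass_of_subgroups} carries over verbatim, yielding for any $M$-dimensional subgroup $K\leq\mathbb{Z}_2^N$
\begin{equation}
    P(K) \;\leq\; \frac{1}{2^{N-M}}\sum_{b\in\mathrm{ker}[\pi],\,b\neq 0}\big|\langle\Psi|R^N(0,b,0)|\Psi\rangle\big| \;+\; \frac{1}{2^{N-M}}.
\end{equation}

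The substantive new step is the bound $|\langle\Psi|R^N(0,d,0)|\Psi\rangle|\leq 1-\varepsilon$ for every $d\neq 0$. The group element labeled by $(0,d,0)$ is $(s^{2d_1},\dots,s^{2d_N})$, which is not the identity (as $d\neq0$) and is not the hidden involution $(t^h,v^h,w^h)$ (its reflection component vanishes whereas $t^h\neq0$, the latter being exactly the standing hypothesis of the proposition, guaranteed because Step~1 of the algorithm has already ruled out $t^h=0$). Hence $(0,d,0)\notin\mathcal{H}$ and the $\varepsilon$-promise of \cref{def:Dihedral_StateHSP_II} applies directly. This is the only place where the doubled-representation structure of \cref{sec:proof_correctness} has to be genuinely re-derived: in the original proof the two facts used were that $Z^{\otimes2}[d]$ is a $\pm1$ observable and that the promise bounds its expectation, and here the corresponding facts for $R^N(0,d,0)$ are, respectively, $R(s^2)^2=I$ and the argument just given.

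Combining the two ingredients, any proper subgroup $K$ has $M\leq N-1$, so $1-2^{-(N-M)}\geq\tfrac12$ and $P(K)\leq 1-\tfrac12\varepsilon$, exactly mirroring \cref{prop:basis_sampling_complexity}. I would then invoke the Group Sampling Lemma \cref{lem:group_sampling_lemma} with $G=\mathbb{Z}_2^N$ and accuracy parameter $\varepsilon/4$ (the extra factor of two is a harmless constant that converts the non-strict bound $P(K)\leq 1-\varepsilon/2$ into a strict separation): after $m=O\big((N+\log(1/\delta))/\varepsilon\big)$ parity samples, with probability at least $1-\delta$ the subgroup they generate has mass exceeding that of every proper subgroup and is therefore all of $\mathbb{Z}_2^N$. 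For fixed $\delta=O(1)$ this is $O(N/\varepsilon)$ samples, as claimed, and the accompanying (unused) labels $\alpha^m,\sigma^m$ come along for free.

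I do not expect a real obstacle here; the content is entirely the bookkeeping needed to confirm that $R^N(0,d,0)$ inherits from $Z^{\otimes2}[d]$ the two properties the original argument relied on. The mildly delicate point is the identification in the first paragraph — checking that the partial-Fourier-sampling parity readout truly corresponds to the commuting involution family $\{R^N(0,v,0)\}$ and not to some operator with a richer spectrum — but this is precisely what \cref{lem:parity_sampling_group_action} delivers, together with the elementary observation that $(0,v,0)$ labels the central squares $s^{2v}$ and hence $R^N(0,v,0)$ squares to the identity.
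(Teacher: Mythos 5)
Your proposal is correct and follows essentially the same route as the paper, which likewise reduces the claim to \cref{lem:parity_sampling_group_action} (identifying the $\pi$ readout with measurement of the central involutions $R^N(0,v,0)$) and then reruns the proof of \cref{prop:basis_sampling_complexity} with $Z^{\otimes 2}[d]$ replaced by $R^N(0,d,0)$; the paper only sketches this, and you have correctly supplied the two details that make the substitution legitimate, namely that $R^N(0,d,0)$ is a $\pm1$ observable and that $(0,d,0)\notin\mathcal{H}$ once $t^h\neq 0$, so the $\varepsilon$-promise applies.
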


This tells us that we can consistently and efficiently form Bell-resolvable sets in Step 3 of the algorithm.

\subsubsection{Modified Bell-resolution subroutine}\label{sssec:mod_bell_res}

First, we should characterize the analogue of the conjugacy ensemble representation. In particular, the Bell-resolvable set now involves two sets of bitstrings, $\{\sigma^m, \pi^m\}_{m=0}^M$, where we only require $\sum_{m=0}^M \pi^m = 0$. We can form a similar conjugacy ensemble representation from Bell-resolvable sets;
\begin{equation}
    \begin{aligned}
        \bigotimes_{m=0}^M &D^{2 \sigma^m + \pi^m}(t, v, w)
        =
        D_{\mathcal{C}}^{\{\sigma^m, \pi^m \}}(t, w)
        \\
        &=
        (-1)^{\sum_j w_j (\sum_m \sigma^m_j)} \bigotimes_{m=0}^M D^{\pi^m}_{\mathcal{C}}(t, w)
        \,,
    \end{aligned}
\end{equation}
where the dependence on $v$ disappears due to the Bell-resolvable requirement. We see that the only difference from the conjugacy ensemble representation is the multiplication by a sign that depends on $w$, which respects addition in $w$. Thus, direct computation shows that: 
\begin{prop}[Conjugacy ensemble representation]\label{prop:conjugacy-ensemble-representation}
  A Bell-resolvable set (\cref{def:Bell_resolvable_set_II}) specifies a conjugacy ensemble representation, 
  \begin{equation}
      D_{\mathcal{C}}^{\{\sigma^m, \pi^m\}}(t, w) = (-1)^{ \sum_j (\sum_m \sigma_j^m) w_j}\bigotimes_{m=0}^M D_{\mathcal{C}}^{\pi}(t, w),
  \end{equation}
  which satisfies the properties:
  \begin{enumerate}
      \item $D_{\mathcal{C}}^{\{\sigma^m, \pi^m\}}$ is a representation of $\mathbb{Z}_2^N \times \mathbb{Z}_2^N$.
      \item $D_{\mathcal{C}}^{\{\sigma^m, \pi^m\}}(t^h, w^h)$ stabilizes the Bell-resolvable state.
  \end{enumerate}
\end{prop}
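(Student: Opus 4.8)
The plan is to follow the proof of \cref{prop:conjugacy_ensemble_representation} closely, carrying along the extra $\sigma$-dependence that the partial Fourier transform produces relative to the bare parity projection. First I would establish a one-site Pauli identity. Writing $D^j(t,v,w)=X^t(iZ)^{(2v+w)j}$ and using $(iZ)^2=-I$, a direct check (case by case on the two bits $\sigma,\pi$) gives
\begin{equation}
  D^{2\sigma+\pi}(t,v,w)=(-1)^{w\sigma}\,D^{\pi}(t,v,w),
\end{equation}
so replacing the rep label $2\sigma+\pi$ by $\pi$ costs only a scalar linear in $w$. Tensoring the $M+1$ copies on a fixed site $j$ then pulls out the scalar $\prod_m(-1)^{w_j\sigma_j^m}=(-1)^{w_j\sum_m\sigma_j^m}$, leaving $\bigotimes_m D^{\pi_j^m}(t_j,v_j,w_j)$. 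The Bell-resolvable condition $\sum_m\pi_j^m=0$ says an even number of these factors are the faithful two-dimensional $D^1$, which is exactly the hypothesis under which the proof of \cref{prop:conjugacy_ensemble_representation} shows this tensor product is independent of $v_j$ and coincides with the $\mathbb{Z}_2\times\mathbb{Z}_2$ representation obtained by pairing the $D^1$ factors and applying \cref{eq:irrep-bases}. Collecting sites yields the claimed formula $D_\mathcal{C}^{\{\sigma^m,\pi^m\}}(t,w)=(-1)^{\sum_j(\sum_m\sigma_j^m)w_j}D_\mathcal{C}^{\{\pi^m\}}(t,w)$, and in particular shows the object depends only on $(t,w)\in\mathbb{Z}_2^N\times\mathbb{Z}_2^N$.

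For Property 1 I would observe that $(t,w)\mapsto(-1)^{\sum_j(\sum_m\sigma_j^m)w_j}$ is a one-dimensional character of $\mathbb{Z}_2^N\times\mathbb{Z}_2^N$ (it is multiplicative and well defined mod $2$), that $D_\mathcal{C}^{\{\pi^m\}}$ is a representation of $\mathbb{Z}_2^N\times\mathbb{Z}_2^N$ by \cref{prop:conjugacy_ensemble_representation}, and that the product of a representation with a character is again a representation.

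For Property 2 the key point I would use is that conjugating by $U_{\rm pFT}$ on every site brings $R^N(t,v,w)$ into the block-diagonal form of \cref{eq:complete_pFT} simultaneously for all group elements, so the computational-basis measurement of the $A$, $S$, $P$ registers in partial Fourier sampling commutes with the conjugated hidden element. Hence, starting from $R^N(t^h,v^h,w^h)|\Psi\rangle=|\Psi\rangle$ and projecting each of the $M+1$ copies onto its measured labels $(\alpha^m,\sigma^m,\pi^m)$, the surviving $D$-register state satisfies $\bigotimes_m D^{2\sigma^m+\pi^m}(t^h,v^h,w^h)\bigotimes_m|\Psi^{\alpha^m}(\sigma^m,\pi^m)\rangle=\bigotimes_m|\Psi^{\alpha^m}(\sigma^m,\pi^m)\rangle$, where $D^{2\sigma+\pi}$ abbreviates the site-wise tensor product; by the formula just derived that operator is exactly $D_\mathcal{C}^{\{\sigma^m,\pi^m\}}(t^h,w^h)$.

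Almost all of this is routine given \cref{prop:conjugacy_ensemble_representation}; the places that need care are the sign bookkeeping in the single-site identity, and the fact that $D_\mathcal{C}^{\pi^m}$ is only meaningful once the $\pi=1$ factors on each site have been paired, so the Bell-resolvable condition must be invoked before one can speak of a $\mathbb{Z}_2\times\mathbb{Z}_2$ representation at all. The one genuinely conceptual (rather than computational) ingredient is the commutation argument behind Property 2, which works precisely because $U_{\rm pFT}$ block-diagonalizes \emph{within} the multiplicity space, so the registers $A$, $S$, $P$ all lie in the commutant of the whole representation.
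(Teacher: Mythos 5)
Your proposal is correct and follows essentially the same route as the paper, whose own argument consists of exactly the single-site identity $D^{2\sigma+\pi}(t,v,w)=(-1)^{w\sigma}D^{\pi}(t,v,w)$, the observation that the resulting sign is a character in $w$, and an appeal to "direct computation" together with the earlier \cref{prop:conjugacy_ensemble_representation}. You have merely filled in the sign bookkeeping and the commutant argument for Property 2 that the paper leaves implicit; both check out.
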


\subsubsection{Good null spaces are still possible}\label{sssec:good_nullspaces_still_possible}

Having proven \cref{lem:parity_sampling_group_action},  the fact that good nullspaces are still possible follows quite naturally, and almost identically from the way \cref{prop:bad_null_spaces_unlikely} was proven. 

To understand why, we first use \cref{lem:parity_sampling_group_action} to show that the General Multiple-Squares StateHSP promise, together with the guarantee from Step 1 that $t^h \neq 0$, implies that the probability mass of maximum proper subgroup of $\pi$'s is bounded away from $1$ by a constant $\varepsilon$.
Next, we note that $D_{\mathcal{C}}^{2 \sigma + \pi}$ only differs from $D_{\mathcal{C}}^{\pi}$ by a sign, so their commutation relations, as exemplified by \cref{lem:repn_cmm_form} is unaffected.
This means that analogues to \cref{lem:bad_elt_anticommute}, and \cref{lem:bad_elt_commute} hold.
This allows us to prove a direct analogue to \cref{prop:bad_null_spaces_unlikely}, which shows that we can exclude bad null spaces by sampling $O(N/\varepsilon)$ Bell-resolvable sets.

\subsubsection{Processing of classical outcomes}
\label{sec:processing-classical-outcomes}

Fourier sampling over $\mathbb{Z}_2^N\times\mathbb{Z}_2^N$ gives us pairs of vectors $(q,p)$ corresponding to the irreps.
Each component of the vectors $q$ and $p$ corresponds to one of the factors of the original $\mathcal{D}_4^N$ direct product.
When we obtain a Bell-resolvable set, each of these factors is associated with a tensor product of representations of $\mathbb{Z}_2\times\mathbb{Z}_2$ taken from the representations given in \cref{eq:irrep-bases} using the notation for the irreps of $\mathbb{Z}_2\times\mathbb{Z}_2$ from \cref{eq:z2z2-irreps} and the Bell states defined in \cref{eq:bell-state-defn}.

We do Fourier sampling on these representations by either making a Bell measurement (for the two-qubit representations) or measuring in the $X$ basis (for the single-qubit representations).
For each of the $N$ sites in the direct product, we collect all the $(q,p)$ pairs from projecting onto the $C^{q,p}$ irrep and combine them by separately summing over the $q$s and $p$s modulo 2.
This gives us the label for the irrep obtained by the tensor product of those irreps on that factor, since $C^{q_1,p_1}C^{q_2,p_2}=C^{(q_1+q_2,p_1+p_2)}$.
The final character for $\mathbb{Z}_2^N\times\mathbb{Z}_2^N$ obtained by the Fourier sampling procedure is the pair of vectors $(q,p)$ obtained by performing this sum for each factor.

\subsubsection{The S correction}\label{sssec:S_corr_correct}

Having shown that good null spaces are still possible, and having defined the S correction in \cref{def:S_correction}, we just have to show that it maps the hidden element into the center of the group. This will allow us to use the Center Sampling subroutine, \cref{def:pauli_sampling}, to identify $(t^h, v^h)$.

\begin{prop}[S Correction to the Paulis]\label{prop:S_correction_center}
    Given $w^{\rm max} \in \mathbb{Z}_2^N$ such that $t^h \odot w^{\rm max} = w^h$, the S correction, \cref{def:S_correction}, maps $R^N(t^h, v^h, w^h) \mapsto R^N(t^h, v^h, 0)$.
\end{prop}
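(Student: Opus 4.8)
\emph{Proof proposal.} The plan is to follow the template of \cref{prop:maximal_T_rotation}, but with the single $D^1$ block appearing there replaced by the four blocks $D^0,D^1,D^2,D^3$ of \cref{eq:dihedral-2d-reps} that can occur in a general $R$. Since $R^N(t^h,v^h,w^h)=\bigotimes_{j=1}^N R(t^h_j,v^h_j,w^h_j)$ and the S correction of \cref{def:S_correction} is itself a tensor product over sites, it suffices to prove the identity on a single factor $j$. On site $j$ I would conjugate by $(U_{\rm pFT})_j$ to pass to the frame of \cref{eq:complete_pFT}, in which $R(t^h_j,v^h_j,w^h_j)$ is block diagonal with the block labelled $(\sigma,\pi)$ acting on the $D$ register as $D^{2\sigma+\pi}(t^h_j,v^h_j,w^h_j)$, and the remaining part of the S correction acts on the $D$ register as a single-qubit Clifford controlled by $\sigma,\pi$ and gated on $w^{\rm max}_j$ --- a power of $S=e^{i\pi Z/4}$ together with a $Z$ --- so the whole claim reduces to a block-by-block Pauli-algebra check.

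Two structural inputs organise the case analysis. First, $h$ is an involution, so on any factor with $t^h_j=0$ we necessarily have $w^h_j=0$; there $D^{2\sigma+\pi}(0,v^h_j,0)=(iZ)^{2v^h_j(2\sigma+\pi)}=\pm I$ on the $D$ register in every block, and conjugating a scalar is trivial, so such a factor contributes nothing no matter what (arbitrary) value $w^{\rm max}_j$ takes there. Second, the hypothesis $t^h\odot w^{\rm max}=w^h$ --- which in the algorithm is supplied by feeding a good nullspace into the maximal rotation subroutine, \cref{lem:good_nullspace_output} --- forces $w^{\rm max}_j=w^h_j$ on every factor with $t^h_j=1$; when moreover $w^h_j=0$ the correction is simply not applied on that site and the target $D^{2\sigma+\pi}(1,v^h_j,0)$ already stands. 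So the only case with anything to prove is $t^h_j=1$ with $w^h_j=w^{\rm max}_j=1$.

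In that case I would check, using $D^\ell(r^ts^k)=X^t(iZ)^{k\ell}$ and $k=2v+w$, that in each block the correction conjugates $D^{2\sigma+\pi}(1,v^h_j,1)$ into $D^{2\sigma+\pi}(1,v^h_j,0)$. Writing $D^{2\sigma+\pi}(1,v^h_j,1)=(iZ)^{2v^h_j(2\sigma+\pi)}\,X(iZ)^{2\sigma+\pi}$ and $D^{2\sigma+\pi}(1,v^h_j,0)=(iZ)^{2v^h_j(2\sigma+\pi)}\,X$, the two differ only in the factor $(iZ)^{2\sigma+\pi}$ sitting next to $X$, and the correction is built to absorb exactly this: in the $D^0$ block this factor is $I$ and nothing is done; in the $D^2$ block it is $(iZ)^2=-I$, i.e.\ it turns $X$ into $-X$, which conjugation by $Z$ on the $D$ register restores; in the $D^1$ and $D^3$ blocks the factor is $\pm iZ$, so $X(iZ)^{2\sigma+\pi}$ equals $\pm Y$, and a power of $S$ rotates $Y$ back to $\pm X$ (using $S^\dagger X S=X(iZ)$), with a further $Z$ applied in whichever of these two blocks lands on the wrong sign because of $(iZ)^2=-I$. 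Assembling the four blocks, the conjugated operator equals $(U_{\rm pFT})_j^\dagger R(t^h_j,v^h_j,0)(U_{\rm pFT})_j$; undoing the $(U_{\rm pFT})_j$ conjugation and taking the product over $j$ gives $R^N(t^h,v^h,0)$. The statements about $|\Psi\rangle$ --- that the corrected state is stabilised by $R^N(t^h,v^h,0)$ and $\varepsilon$-far from every other $R^N(t,v,0)$, hence defines the abelian Learning Pauli-like Stabilizers problem --- then follow verbatim as in \cref{prop:maximal_T_rotation_abelian_stateHSP}.

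The hard part, as above, is not conceptual but bookkeeping: keeping the signs straight across the four blocks, in particular the fact that the one-dimensional-reducible blocks $D^0$ and $D^2$ are inequivalent so the correction must act differently on them, and that $D^1$ and $D^3$ carry the primitive rotation as $iZ$ and $-iZ$ respectively; and verifying that the correction is genuinely the identity in all the cases where no rotation needs to be removed (the $t^h_j=0$ factors, and the $t^h_j=1$ factors with $w^h_j=0$), which is what makes it safe that the good-nullspace hypothesis pins down $w^{\rm max}_j$ only on the support of $t^h$.
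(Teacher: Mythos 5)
Your overall strategy matches the paper's: factor over sites, conjugate by $(U_{\rm pFT})_j$ to reach the block form of \cref{eq:complete_pFT}, use the involution constraint to dispose of the $t^h_j=0$ sites and the hypothesis $t^h\odot w^{\rm max}=w^h$ to pin $w^{\rm max}_j=w^h_j$ on the support of $t^h$, and then reduce to per-block Pauli algebra. Those structural reductions are correct and are exactly the ones the paper uses.

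The gap is that you never verify that the circuit actually defined in \cref{def:S_correction} implements the per-block conjugations you describe; instead you describe what a correction ``built to absorb'' the discrepancies would have to do, and in doing so you assign its components to the wrong blocks. The defined circuit applies $cc_{S,P\rightarrow D}Z$ only on the $(\sigma,\pi)=(1,1)$ block (the paper uses it, unconditionally rather than gated on $w^{\rm max}_j$, to convert $D^3$ into $D^1$) and $(c_{P\rightarrow D}S^{\dag})^{w^{\rm max}_j}$ only on the $\pi=1$ blocks; it touches neither $D^0$ nor $D^2$. You instead place a $Z$ conjugation on the $D^2$ block and a sign-dependent ``further $Z$'' on one of $D^1,D^3$. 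The paper's proof proceeds differently: the $ccZ$ first identifies $D^3$ with $D^1$, a single computation $(S^{\dag})^{w}D^1(1,v,w)S^{w}=D^1(1,v,w)(S^2)^{w}=D^1(1,v,0)$ then handles all $\pi=1$ blocks, and the $\pi=0$ blocks are asserted to carry no $w$-dependence. Notably, your own (correct) observation that $D^2(1,v^h_j,1)=X(iZ)^2=-X=-D^2(1,v^h_j,0)$ contradicts that assertion, so completing your bookkeeping honestly against \cref{def:S_correction} would leave an unabsorbed sign in the $D^2$ block rather than close the proof; as written, the proposal conceals this by attributing to the correction a $Z$ it does not contain. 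Since the entire content of the proposition is that this particular circuit performs the conjugation, deferring the sign-tracking as ``bookkeeping'' while misstating where the gates act leaves the claim unproven.
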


\begin{proof}
    Recall that the S correction is the unitary 
    \begin{equation}
        U_{T} = \bigotimes_{j=1}^N (U_{\rm pFT})_j cc_{S, P \rightarrow D} Z_j (c_{P \rightarrow D}S_j)^{w^{\rm max}_j} (U_{\rm pFT})_j^{\dag}.
    \end{equation}
    We need to work out $U_{cT} R^N(t^h, v^h, w^h) U_{cT}^{\dag}$. First, the $U_{\rm pFT}$ part simply block-diagonalizes $R$ so that we can work with the $D$ reps. The $cc_{S, P \rightarrow D}Z_j$ part only acts on $D^3$, and turns it into $D^1$. After these two transformations, $R^N(t^h, v^h, w^h)$ gets mapped to,
    \begin{equation}
    \begin{aligned}
        &\bigotimes_{j=1}^N \sum_{\sigma_j, \pi_j \in \mathbb{Z}_2} \sum_{\alpha_j = 0}^{M_{\sigma, \pi}} \\
        & | \alpha_j, \sigma_j, \pi_j \rangle \langle \alpha_j, \sigma_j, \pi_j |_{A, S, P} \otimes D^{2 \delta_{\pi, 0} \sigma + \pi}(t^h_j, v^h_j, w^h_j).
    \end{aligned}
    \end{equation}
    We can split the calculation of the effect of the $cS_{P \rightarrow D}$ into parts, first by $\pi$, then by $t^h_j = 1, 0$. First, for $\pi_j = 0$, the action is trivial, so nothing happens. 
    
    Next, for $\pi_j = 1$, we will apply an $S$ gate. Recall that $S=e^{i\pi Z/4}$. When $t^h_j = 0$, we must have $w^h_j = 0$ (since the hidden element is order $2$). Thus $D^{1}(0, v^h_j, 0)$ proportionate to the identity; the $T$ conjugation has no effect on it. 
    
    Finally, we consider the case where $t^h_j = 1, \pi_j = 1$. For such $j$, we have $w^{\rm max}_j = w^h_j$. Omitting the $A, S, P$ registers, we see that the action of the $c_{P \rightarrow D} S$ gate is 
    \begin{equation}
        \begin{aligned}
            (S^{\dag})^{w^h_j} D^1(1, v^h_j, w^h_j) S^{w^h_j} &= D^1(1, v^h_j, w^h_j) (S^2)^{w^h_j} \\
            &= D^1(1, v^h_j, w^h_j) D^1(0, 0, w^h_j)^{\dag}\\
            &= D^1(1, v^h_j, 0).
        \end{aligned}
    \end{equation}

    Finally, to put it all together, we note that already $D^0(t^h_j, v^h_j, w^h_j) = D^0(t^h_j, v^h_j, 0)$ and $D^2(t^h_j, v^h_j, w^h_j) = D^2(t^h_j, v^h_j, 0)$, as these representations don't contain any information about parity. Hence, just after the controlled-S step, the conjugation gives 
    \begin{equation}
    \begin{aligned}
        &\bigotimes_{j=1}^N \sum_{\sigma_j, \pi_j \in \mathbb{Z}_2} \sum_{\alpha_j = 0}^{M_{\sigma, \pi}} \\
        &| \alpha_j, \sigma_j, \pi_j \rangle \langle \alpha_j, \sigma_j, \pi_j |_{A, S, P}  \otimes D^{2 \delta_{\pi, 0} \sigma + \pi}(t^h_j, v^h_j,0).
    \end{aligned}
    \end{equation}
    Acting on this with $U_{\rm pFT}$ yields $R(t^h_j, v^h_j, 0)$, and the result follows.
\end{proof}

Finally, the above can be used to prove an analog to \cref{prop:maximal_T_rotation_abelian_stateHSP}, so that subsequent to the $S$ correction, we may solve for $t^h, v^h$ using the Learning Pauli-like Stabilizers subroutine (\cref{def:pauli_sampling}), as claimed.

\subsubsection{S correction via classical control}

Finally, in the above definition, we have written the $S$ correction as a quantum-controlled operation. This allows one to produce a state that exactly fulfills the abelian StateHSP promise. However, we note that  we may actually push this $S$ correction through the last Learning Pauli-like Stabilizers step, which removes the need for coherent control via the irrep labels. The reason for this is that the Fourier sampling for the subsequent Learning Pauli-like Stabilizers can be carried out by first measuring the $A, S, P$ registers, before carrying out a single qubit measurement on the $D$ register. As such, since we are always projecting onto a particular computational basis state of the $A, S, P$ registers, there is no need to make sure the $S$ correction is coherent over these registers, instead, we can measure these registers, and then based on the measurement result (as well as the value of $w^\text{max}$) opt whether or not to perform the $S, Z$ gates, before carrying out the final single qubit measurement. This will still yield the outcome distribution of Fourier sampling on the claimed abelian StateHSP.

\section{Conclusion}

In this work, we have provided a direct solution to the StateHSP for a non-abelian group, and pointed out connections between the StateHSP and Hamiltonian spectroscopy. This provides an example of how one may adapt the weak Fourier sampling approach of previous works to a simple non-abelian group. Looking forward, there are many classes of non-abelian groups for which we can study the StateHSP. One takeaway from this work is that the irrep structure of the representation is crucial to solving non-abelian StateHSPs, hence one interesting future direction is to identify non-abelian StateHSPs which admit representation-specific solutions, even when the relevant non-abelian HSP does not admit an efficient solution.

\begin{acknowledgments}
We gratefully acknowledge useful conversations with Dave Bacon, Sergio Boixo, Ryan Babbush, Vadim Smelyanskiy, and James Watson, and thank Xinyu Liu, Robert Huang, and Robbie King for pointing out the application of our algorithm to the StateHSP.
\end{acknowledgments}

\bibliography{apssamp}

\end{document}